\newcommand{\isdef}[1][0mm]{\mathrel{\hspace{#1}=_\text{}\hspace{#1}}}
\newcommand{\ifdef}[1][0mm]{\mathrel{\hspace{#1}\iff_\text{}\hspace{#1}}}
\newcommand{\ifdefsmall}[1][0mm]{\mathrel{\hspace{#1}\Leftrightarrow_\text{}\hspace{#1}}}
\newcommand{\lts}[1][]{
 \ifthenelse{\equal{#1}{}}
 {\mathcal{IA}}
 {\mathcal{IA}(#1)}
}
\newcommand{\after}[3][]{#2 \mathrel{\operatorname{after}_{#1}} #3}
\newcommand{\inp}{\operatorname{in}}
\newcommand{\out}{\operatorname{out}}
\renewcommand{\det}{\operatorname{det}}
\newcommand{\detiu}{\operatorname{det}_{\operatorname{\textit{iu}}}}
\newcommand{\traces}{\operatorname{traces}}
\newcommand{\irtraces}{\operatorname{Ftraces}}
\newcommand{\iutraces}{\operatorname{IU}}
\newcommand{\oetraces}{\operatorname{OE}}
\newcommand{\utraces}{\operatorname{Utraces}}
\newcommand{\ioco}[1][]{\mathrel{\operatorname{\bf ioco_{\text{$#1$}}}}}
\newcommand{\uioco}{\mathrel{\operatorname{\bf uioco}}}
\newcommand{\iots}{\mathcal{IOTS}}
\newcommand{\ttag}[1]{\tag*{\text{[#1]}}}
\newcommand{\qedtag}{\tag*{\qed}}
\newcommand{\aaee}{\le_{\forall\forall\exists\exists}^\textit{tb}}
\newcommand{\ati}{\le_\textit{atc}}
\newcommand{\ir}{\le_\textit{if}}
\newcommand{\irrev}{\ge_\textit{if}}
\newcommand{\ireq}{\equiv_\textit{if}}
\newcommand{\iuoe}{\le_\textit{iuoe}}
\newcommand{\rcl}{\operatorname{fcl}}
\newcommand{\as}{\le_{as}}
\newcommand{\notrel}[2][3pt]{\mathrel{\hspace{#1}\cancel{\hspace{-#1}#2\hspace{-#1}}\hspace{#1}}}
\newcommand{\trans}{\operatorname{trans}}
\newcommand{\strat}{f}
\newcommand{\stratsys}{\strat_\textit{o}}
\newcommand{\stratenv}{\strat_\textit{i,tb}}
\newcommand{\stratenvati}{\strat_\textit{i}}
\newcommand{\Stratgen}{\Sigma}
\newcommand{\Stratsys}{\Stratgen_\textit{o}}
\newcommand{\Stratenv}{\Stratgen_\textit{i,tb}}
\newcommand{\Stratenvati}{\Stratgen_\textit{i}}
\newcommand{\stratdet}{\strat_\textit{d}}
\newcommand{\Stratdet}{\Stratgen_\textit{d}}
\newcommand{\stratrc}{\strat_\textit{r}}
\newcommand{\Stratrc}{\Stratgen_\textit{r}}
\newcommand{\last}{\operatorname{last}}
\newcommand{\iftracesdom}{\mathcal{FT}}
    \newtheorem{definition}{Definition}[section]%
    \newtheorem{lemma}{Lemma}[section]%
    \newtheorem{theorem}{Theorem}[section]%
    \newtheorem{proposition}{Proposition}[section]%
    \newtheorem{corollary}{Corollary}[section]%
    \newtheorem{example}{Example}[section]%
    \newenvironment{proof}{\paragraph{Proof}}{\hfill$\square$\\}%
\newtheorem{fact}{Fact}
\newcommand{\Path}{\operatorname{paths}}
\newcommand{\trace}{\operatorname{trace}}
\newcommand{\Outc}[1][]{
  \ifthenelse{\equal{#1}{}}
    {\mathop{\operatorname{outc}}}
    {\mathop{\operatorname{outc}}(\stratenv^{#1},\stratsys^{#1},\stratdet^{#1},\stratrc^{#1})}
}
\newcommand{\Outcati}[1][]{
  \ifthenelse{\equal{#1}{}}
    {\mathop{\operatorname{outc}}}
    {\mathop{\operatorname{outc}}(\stratenvati^{#1},\stratsys^{#1},\stratdet^{#1},\stratrc^{#1})}
}
\definecolor{myRed}{rgb}{0.7,0,0}
\definecolor{myGreen}{rgb}{0,0.5,0}
\tikzstyle{named}=[initial text=,inner sep=0.5pt,shape=circle,draw=black]
\tikzstyle{namedlong}=[initial text=, inner sep=-2pt, minimum width=20pt, minimum height=20pt, shape=ellipse, draw=black]
\tikzstyle{every state}=[initial text=,minimum size=0.15cm,inner sep=0pt,fill=black]
\tikzset{trans/.style={>={Latex[length=2mm]},-{Latex[length=2mm]}}}
\tikzset{every initial by arrow/.append style={trans}}
\newcommand{\tikzlts}{
 \tikzset{every loop/.append style={trans}}
 \tikzset{every edge/.append style={trans}}
}
\newif\ifhideproofs
\newif\ifshowshortversion
\newenvironment{longversion}{\ifvmode\else\unskip\fi}{\ignorespacesafterend}
\newenvironment{shortversion}{\ifvmode\else\unskip\fi}{\ignorespacesafterend}
  \title{Relating Alternating Relations for\\ Conformance and Refinement\thanks{Funded by the Netherlands Organisation of Scientific Research (NWO-TTW), project 13859: SUMBAT - SUpersizing Model-BAsed Testing}}%
  \title{Relating Alternating Relations for\\ Conformance and Refinement\thanks{Funded by the Netherlands Organisation of Scientific Research (NWO-TTW), project 13859: SUMBAT - SUpersizing Model-BAsed Testing}}%
\author{Ramon Janssen\textsuperscript{1} \and Frits Vaandrager\textsuperscript{1} \and Jan Tretmans\textsuperscript{1 2}}%
\institute{Radboud University, Nijmegen, The Netherlands \email{\{ramonjanssen,f.vaandrager,tretmans\}@cs.ru.nl} \and
ESI (TNO), Eindhoven, The Netherlands}%
\authorrunning{R. Janssen, F. Vaandrager, J. Tretmans}%
\titlerunning{Relating Alternating Relations for Conformance and Refinement}
\begin{document}

\maketitle

\begin{abstract}

Various relations have been defined to express refinement and conformance
for state-transition systems with inputs and outputs,
such as $\ioco$ and $\uioco$ in the area of model-based testing,
and \emph{alternating simulation} and \emph{alternating-trace containment}
originating from game theory and formal verification.
Several papers have compared these independently developed relations, 
but these comparisons make assumptions (\emph{e.g.,} input-enabledness),
pose restrictions (\emph{e.g.,} determinism -- then they all coincide),
use different models (\emph{e.g.,} interface automata and Kripke structures),
or do not deal with the concept of \emph{quiescence}.
In this paper, we present the integration of the $\ioco$/$\uioco$ theory
of model-based testing and the theory of alternating refinements, within
the domain of non-deterministic, non-input-enabled interface automata.
A standing conjecture is that $\ioco$ and alternating-trace containment coincide.
Our main result is that this conjecture does not hold, but that $\uioco$ coincides with a variant of alternating-trace containment, for image finite interface automata
and with explicit treatment of quiescence.
From the comparison between $\ioco$ theory and alternating refinements, we conclude that $\ioco$ and the original relation of alternating-trace containment are too strong for realistic black-box scenarios.
We present a refinement relation which can express both $\uioco$ and refinement in game theory, while being simpler and having a clearer observational interpretation.



\keywords{alternating refinement \and
          ioco \and uioco \and
          interface automata
         }
\end{abstract}


\section{Introduction}

Many software systems can be modelled using some kind of
state-transition automaton.
States in the model represent an abstraction of the states of the system,
and transitions between states model the actions that the system
may perform.
Depending on the kind of state-transition model,
an action can be the acceptance of an input,
the production of an output, an internal computation of the system,
the combination of an input and corresponding output,
or just an abstract, uninterpreted 'action' of the system.
Formal relations between state machines are often used to express
some notion of refinement, implementation correctness, or conformance:
$s_1$ is related to $s_2$ expresses that $s_1$ implements, refines,
or conforms to $s_2$.
Many such relations have been defined over the years, expressing
different intuitions of what constitutes a conforming implementation
or a correct refinement.

In this paper, we focus on state-transition systems
where actions are interpreted as either input or output.
An input involves a trigger from the environment to the system,
where the initiative is taken by the environment, whereas an output is initiated by
the system itself.
Modelling formalisms with inputs and outputs are, e.g.,
Input/Output Automata \cite{LyTu88},
Input-Output Transition Systems \cite{Tre96}, and
Interface Automata \cite{AlHe01}.
We use the latter in this paper.
%
We will extensively compare
the relations $\ioco$ and $\uioco$ from the area of model-based testing,
and \emph{alternating simulation} and \emph{alternating-trace containment}
originating from game theory and formal verification.
Previous papers have compared these independently developed relations,
but these comparisons make assumptions (\emph{e.g.,} input-enabledness),
pose restrictions (\emph{e.g.,} determinism -- then they all coincide),
use different models (\emph{e.g.,}
interface automata and Kripke structures),
or do not deal with the concept of \emph{quiescence},
i.e., the absence of outputs in a state,
that is crucial in the relations $\ioco$ and $\uioco$.
Based on this comparison, we propose the novel relation of \emph{input-failure refinement}, which links $\uioco$ and alternating-trace-containment.


\paragraph{$\ioco$.}

Model-based testing (MBT) is a form of black-box testing where a
System Under Test (SUT) is tested for conformance to a model.
The model is the basis for the algorithmic generation
of test cases and for the evaluation of test results.
Conformance is defined with a formal \emph{conformance} or
\emph{implementation relation} between SUTs and models.
\begin{longversion}
  Although an SUT is a black box, we can assume it could be modelled by
  some model instance in a domain of implementation models.
  This assumption is commonly referred to as the
  \emph{testability hypothesis} \cite{Gau95}, and it allows
  to reason about SUTs as if they were formal models.
\end{longversion}
\begin{shortversion}
  Although an SUT is a black box, we assume it could be modelled by
  some model instance in a domain of implementation models,
  so that we can reason about SUTs as if they were formal models.
\end{shortversion}

An often used conformance relation is $\ioco$
(\textbf{i}nput-\textbf{o}utput-\textbf{co}nformance) \cite{Tre96,Tre08}.
\begin{longversion}
The relation $\ioco$ is based on the testability hypothesis that implementations can be modelled as \emph{input-enabled} interface automata,
i.e., all states have a transition for all inputs.
\end{longversion}
\begin{shortversion}
The relation $\ioco$ is based on the assumption that implementations can be modelled as \emph{input-enabled} interface automata,
i.e., all states have a transition for all inputs.
\end{shortversion} An implementation $\ioco$-conforms to its specification
if the implementation never produces an output
that cannot be produced by the specification model in the same situation.
A particular, virtual output is \emph{quiescence},
actually expressing the absence of real outputs:
an implementation may only refuse outputs if the specification can do so.
Observing quiescence during a practical test is done by waiting for
a time-out.
The $\ioco$-testing theory has found its way into many MBT tools
and practical applications \cite{Mosetal09,Tre17}.

\begin{shortversion}
Whereas input-enabledness seems reasonable
for real-world systems, it is an inconvenience
in mathematical reasoning about $\ioco$,
in comparing specification models, and in stepwise refinement,
since the different domains for implementations and specifications
make that $\ioco$ is not reflexive and not transitive \cite{JaTr19}.
\end{shortversion}
\begin{longversion}
Whereas the testability hypothesis of input-enabledness may seem reasonable
for real-world software systems, it is an inconvenience
in mathematical reasoning about $\ioco$,
in comparing specification models, and in stepwise refinement,
since the different domains for implementations and specifications
make that $\ioco$ is not reflexive and not transitive \cite{JaTr19}.  
\end{longversion}

A variation of $\ioco$ is $\uioco$ \cite{BiReTr04}.
This relation is weaker than $\ioco$ and it was shown to have some
beneficial properties with respect to intuition of what conformance means,
as well as for formal reasoning about composition, transitivity, and
refinement \cite{BiReTr04,JaTr19}.
Moreover, a generalization of $\uioco$ was given in \cite{VoTr13}
that also applies to non-input-enabled implementations
and that is reflexive and transitive,
but a complete testing theory including test generation, test execution,
and test observations is still missing for this generalization,


\paragraph{Alternating Refinement.}

Originating from game theory, \emph{alternating refinement relations}
describe refinement as a game \cite{Aluetal98}.
Originally, alternating refinement was defined on alternating transition
systems, a variant of Kripke structures, which have state propositions
instead of input and output labels on transitions.
Behaviour of alternating transition systems is determined by \emph{agents}, which are either adversarial or collaborative.
\begin{shortversion}
In a two-player game of alternating refinement on two models $s_1$ and $s_2$, the antagonist controls the collaborative agents of $s_1$ and the adversarial agents of $s_2$.
The protagonist controls the collaborative agents of $s_2$ and the adversarial agents of $s_1$, and tries to ensure that the moves made by the agents in both models match.
\end{shortversion}
\begin{longversion}
The two-player game of alternating refinement on two models $s_1$ and
$s_2$ is then, in general, as follows.
The antagonist first chooses a move for the collaborative agents in $s_1$.
Second, the protagonist chooses a matching move for the collaborative
agents in $s_2$.
Third, the antagonist chooses a move for the adversarial agents in $s_2$,
and, fourth, the protagonist chooses a matching move for the adversarial
agents in $s_1$.
\end{longversion}
Alternating refinement holds if the protagonist has a winning strategy,
i.e., the protagonist is always able to match moves.
There are different ways of 'matching a move', and these determine
which alternating refinement relation is obtained.
The branching time \emph{alternating simulation} uses a local,
single transition-based notion of matching,
whereas the linear time \emph{alternating-trace-containment}
adopts a global, trace-based approach. 

A successful instantiation of alternating simulation is in interface theory,
where alternating transition systems are replaced by
\emph{Interface Automata} (IA) with inputs and outputs,
and where fixed agents are chosen:
the software system itself controls the outputs,
and the environment control the inputs.
This led to the definition of alternating simulation on
IA \cite{AlHe01}.
Unfortunately, alternating simulation is not black-box \emph{observational},
i.e., it is difficult to construct a realistic test and
observation scenario with which the differences between unrelated systems
can be observed in a black-box setting.
Since observable behaviour is often represented by trace-based
(linear time) relations, alternating-trace-containment for IA 
may be of interest, but this relation has not been translated to IA, yet.
A translation of alternating-trace-containment to labelled transition
systems with inputs and outputs was recently proposed \cite{BoSt18},
but only for deterministic models.


\paragraph{Relating Relations.}

The relations $\ioco$ and $\uioco$ on one hand,
and \emph{alternating-trace-containment} and \emph{alternating simulation}
on the other hand, were proposed in different communities
and for different purposes, yet, they show considerable overlap.
In particular, it has been shown that all four relations coincide
for deterministic models \cite{AaVa10,Aluetal98,BoSt18,VaBj10},
but for non-determinsitic models such a comparison has not been made, yet.
Only a conjecture in~\cite{BoSt18} claims that
alternating-trace-containment and $\ioco$ also coincide
for non-deterministic models.

If we manage to relate these independently defined relations
also for non-determinsitic systems, 
this would indicate that these relations indeed express
a generic and natural notion of conformance and refinement.
An integration of both paradigms would strengthen both
$\ioco$ theory and interface theory: it would add black-box observability
to alternating refinement, and it would provide concepts and algorithms
for refinement to $\ioco$/$\uioco$.


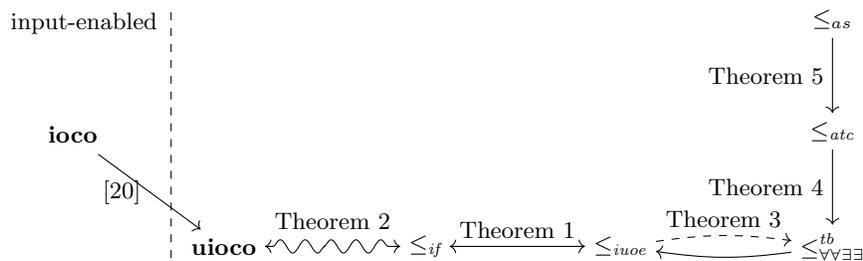
\begin{figure}[b!]
	\begin{tikzpicture}
	\node (ioco) {$\ioco$};
	\node (uioco) [below right=10mm and 10mm of ioco] {$\uioco$};
	\node (ir) [right=18mm of uioco] {$\ir$};
	\node (iuoe) [right=18mm of ir] {$\iuoe$};
	\node (aaee) [right=18mm of iuoe] {$\aaee$};
	\node (ati) [above=10mm of aaee] {$\ati$};
	\node (as) [above=10mm of ati] {$\as$};
	
	\node [above right=10mm and -14mm of ioco] {input-enabled};
	\path
	($(ioco)!0.65!(uioco) + (0,2.6)$) edge [dashed] ($(ioco)!0.65!(uioco) + (0,-0.7)$)
	;
	
	\path
	(ioco) edge [->] node [left] {\cite{Tre08}} (uioco)
	(uioco) edge [draw=none] node [above=1.5mm,inner sep=2pt,fill=white] {Theorem~\ref{the:uioco-ir}} (ir)
	($(uioco)!0.505!(ir)$) edge [->,decorate,decoration=snake,segment length=9.4] (uioco)
	($(uioco)!0.495!(ir)$) edge [->,decorate,decoration=snake,segment length=9.4] (ir)
	(ir) edge [<->] node [above] {Theorem~\ref{the:ir-iuoe}} (iuoe)
	(iuoe) edge [<-,bend right=8] (aaee)
	(iuoe) edge [->,dashed,bend left=8] node [above=0mm] {Theorem~\ref{the:iuoe-aaee}}(aaee)
	(aaee) edge [<-] node [left] {Theorem~\ref{the:ati-implies-aaee}} (ati)
	(ati) edge [<-] node [left] {Theorem~\ref{the:as-implies-ati}} (as)
	;
	
	\end{tikzpicture}
	\caption{
		Overview of the relations treated in this paper.
		An arrow from relation $A$ to relation $B$ denotes that $A$ is stronger than $B$.
		The dashed arrow only holds if the second of the related models is image finite.
		The wavy arrow only holds if quiescence is explicitly added to the models related by input-failure refinement.
		Relation $\ioco$ is only defined if the first argument is input-enabled.
	}
	\label{fig:overview}
\end{figure}	


\paragraph{Contributions and Overview.}

The main contribution of this paper is an integration of
the $\ioco$ theory of model-based testing \cite{Tre96},
the theory of interface automata \cite{AlHe01},
and the theory of alternating refinements \cite{Aluetal98}.
More specifically, we present the following results:

\begin{enumerate}

	\item
\emph{Input-failure refinement} $\ir$ and
\emph{input-universal output-existential refinement} $\iuoe$
are two equivalent preorders,
which are defined in Section~\ref{sec:input-failure-refinement}.
They are proved to coincide with $\uioco$,
after explicitly adding quiescence in the IAs, in Section~\ref{sec:uioco}.
The new preorders are in essence the same as the relation of substitutive refinement in~\cite{Chietal14}, but adapted to our context.
The new characterizations serve as a basis for the integration:
they increase intuition and understanding and they turn out to be helpful
for comparing with alternating refinements.

	\item 
The game-theoretic notion of alternating-trace containment
\cite{BoSt18,Aluetal98} is translated
to the setting of non-deterministic interface automata
in Section~\ref{sec:alternating-trace-inclusion}.
(Alternating simulation of \cite{Aluetal98} was already translated
in \cite{AlHe01}).
We show that the resulting alternating-trace containment preorder $\ati$
is weaker than alternating simulation preorder $\as$
for interface automata,
similar to the result of \cite{Aluetal98} for alternating transition
systems, in Section~\ref{sec:alternating-simulation}.

	\item 
We show that $\ati$ is not observational and not intuitive
as a conformance relation,
using a natural testing scenario for interface automata.
Motivated by this scenario, we define a slightly weaker game-theoretic
refinement relation $\aaee$.
We prove that $\aaee$ coincides with $\ir$ and $\iuoe$, and,
modulo proper treatment of quiescence, with $\uioco$,
for image-finite interface automata.
The tight link with $\uioco$ and $\ir$ implies that $\aaee$ is indeed
observational.
Moreover, these results disprove the conjecture that $\ioco$ and
alternating-trace containment coincide \cite{BoSt18}.

	\item 
We provide first steps towards a linear time -- branching time spectrum
for interface automata in Figure~\ref{fig:overview},
similar to the well-known linear time -- branching time spectrum
for labeled transition systems of Van Glabbeek \cite{vG01}.
Based on our classification, we motivate that also $\ioco$ is too strong to act as intuitive conformance relation.


\end{enumerate}
Recently, \cite{BoSt18} established a fundamental connection between model-based testing and 2-player concurrent games, in a setting of deterministic systems, where specifications are game arenas, test cases are game strategies, test case derivation is strategy synthesis, and conformance is alternating-trace containment. Our work show that the results of \cite{BoSt18} can be lifted to nondeterministic systems. This enables the application of a plethora of game synthesis techniques for test case generation.

\begin{shortversion}
Proofs and more extensive examples can be found in the technical report~\cite{JaVaTr19}.
\end{shortversion}
\begin{longversion}
  \vspace{10mm}
\end{longversion}

\section{Preliminaries}
\label{sec:preliminaries}

We start by introducing the basic definitions of state-based models with inputs and outputs, as a basis for the alternating relations as well as $\ioco$ and $\uioco$.
The former are defined on interface automata, whereas the latter are defined on labelled transition systems.
These paradigms differ mainly on the handling of internal transitions, which we omit in the scope of this paper.
Other differences are minor, so Definition~\ref{def:ia} reflects both  domains of models from both works.

\begin{definition}
  \label{def:ia}
  An Interface Automaton (IA) is a 5-tuple $(Q,I,O,T,q^0)$, where
  \begin{itemize}
    \item $Q$ is a set of states,
    \item $I$ and $O$ are disjoint sets of input and output labels,
    \item $T \subseteq Q \times (I \cup O) \times Q$ is a transition relation, and
    \item $q^0 \in Q$ is the initial state.
  \end{itemize}
  
  The domain of IA is denoted $\lts$.
  For $s \in \lts$, we write $Q_s$, $I_s$, $O_s$, $T_s$ and $q_s^0$ to refer to its respective elements, and $L_s = I_s \cup O_s$ for the full set of labels.
  For $s_1, s_2,\ldots, s_A, s_B, \dots$ a family of IAs, we write $Q_j$, $I_j$, $O_j$, $T_j$ and $q_j^0$ to refer to the respective  elements, and $L_j = I_j \cup O_j$, for $j = 1, 2,\ldots, A, B, \dots$.
\end{definition}

In examples, we represent IA as state diagrams as usual.
For the remainder of this paper, we assume that IA have the same input alphabet $I$ and output alphabet $O$, with $L = I \cup O$, unless explicitly stated otherwise.
Symbols $a$ and $b$ represent inputs, and $x$, $y$ and $z$ represent outputs.

\begin{definition}
  \label{def:lts-definitions}
  Let $s \in \lts$, $Q \subseteq Q_s$, $q,q' \in Q_s$, $\ell \in L$ and $\sigma \in L^*$, where $^*$ denotes the Kleene star, and $\epsilon$ denote the empty sequence.
  We define
  \begin{align*}
    q \xrightarrow{\epsilon}_s q'\:\: &\ifdefsmall q = q' &
    q \xrightarrow{\sigma\ell}_s q'\:\: &\ifdefsmall \exists r \in Q_s: q \xrightarrow{\sigma}_s r \wedge (r,\ell,q') \in T_s\\
    q \xrightarrow{\sigma}_s \:\: &\ifdefsmall \exists r \in Q_s: q \xrightarrow{\sigma}_s r &
    \trans(s,q) &\isdef \{(r,\ell,r') \in T_s \mid q = r\}\\
    \traces_s(q) &\isdef \{\sigma \in L^* \mid q \xrightarrow{\sigma}_s\} &
    \after[s]{Q}{\sigma} &\isdef \{r \in Q_s \mid \exists r' \in Q: r' \xrightarrow{\sigma} r\} \\
    \traces(s) &\isdef \traces_s(q_s^0) &
    \after{s}{\sigma} &\isdef \after[s]{\{q_s^0\}}{\sigma}\\
    \out_s(q) &\isdef \{x \in O \mid q \xrightarrow{x}_s\} &
    \out_s(Q) &\isdef \{x \in O \mid \exists q \in Q: x \in \out_s(q)\}\\
    \inp_s(q) &\isdef \{a \in I \mid q \xrightarrow{a}_s\} &
    \inp_s(Q) &\isdef \{a \in I \mid \forall q \in Q: a \in \inp_s(q)\}\\
    \multispan{4}{\quad$\text{$s$ is deterministic} \:\:\iff\:\: \forall \sigma \in \traces(s): |\after{s}{\sigma}| = 1$\hfil}&&&
  \end{align*}
\end{definition}

We omit the subscript for interface automaton $s$ when clear from the context.

\begin{definition}
  \label{def:paths}
  For $s \in \lts$, a \emph{path} through $s$ is a (finite or infinite) sequence $\pi = q^0 \ell^1 q^1 \ell^2 q^2 \cdots$ of alternating states from $Q_s$ and labels from $L$
  starting with state $q^0 = q_s^0$ and, if the sequence is finite, also ending in a state,
   such that each triplet $(q^j, \ell^{j+1}, q^{j+1})$ is contained in $T_s$.
  The domain of finite paths through $s$ is denoted $\Path(s)$.
  The \emph{trace} of path $\pi$ is the subsequence of labels that occur in it: $\trace(\pi) \isdef \ell^1 \ell^2 \cdots$.
  Note that each $\pi \in \Path(s)$ has $\trace(\pi) \in \traces(s)$.
  We write $\last(\pi)$ to denote the last state occurring in a finite path $\pi$.
\end{definition}

\section{Two Preorders on Interface Automata}
\label{sec:input-failure-refinement}

We now present two equivalent relations, which serve as a stepping stone to bridge the gap between $\ioco$ theory and alternating refinements.
The first relation has a clear observational interpretation, whereas the second is more elegant and convenient in reasoning, which turns out useful in proofs and examples.
They essentialy coincide with the relation of substitutive refinement in~\cite{Chietal14}, adapted to our context.

\subsection{Input-Failure Refinement}

The first relation is based on covariance and contravariance, strongly inspired by interface theory~\cite{AlHe01}.
Outputs are treated covariantly, as in normal trace containment: if $s_1$ refines $s_2$, then outputs produced by $s_1$ are also produced by $s_2$.
Inputs are treated contravariantly, instead: inputs refused by $s_1$ are also refused by $s_2$.
If $s_2$ refuses an input, then $s_1$ may choose to be more liberal than $s_2$, accepting that input instead.
If it does so, the behaviour after that input is unspecified.

We first make the notion of refusing an input explicit.

\begin{definition}
  \label{def:refusal}
  For any input symbol $a$, we define the \emph{input-failure of $a$} as $\overline a$.
  Likewise, for any set of inputs $A$, we define $\overline A \isdef \{\overline a \mid a \in A\}$.
  The domain of \emph{input-failure traces} is defined as 
  $\iftracesdom_{I,O} \isdef[0mm] L^* \cup L^* \cdot \overline I$.
  Set $S \subseteq \iftracesdom_{I,O}$ of input-failure traces is \emph{input-failure closed} if, for all $\sigma \in L^*$, $a \in I$ and $\rho \in \iftracesdom_{I,O}$,
    $ \sigma\overline a \in S \implies \sigma a \rho \in S$.
    The \emph{input-failure closure} of $S$ is the smallest input-failure closed superset of $S$, that is, $\rcl(S) \isdef S \cup \{\sigma a \rho \mid \sigma\overline a \in S, \rho \in \iftracesdom_{I,O}\}$.
\end{definition}

We associate with every IA a set of input-failure traces, to define \emph{input-failure refinement} and \emph{input-failure equivalence}, denoted $\ir$ and  $\ireq$, respectively.

\begin{definition}
  \label{def:ir-inclusion}
  Let $s_1,s_2 \in \lts$. Then
  \begin{align*}
    \irtraces(s_1) &\isdef \traces(s_1) \cup \{\sigma \overline a \mid \sigma \in L^*, 
    a \in I, a \not \in \inp(\after{s_1}{\sigma})
    \}\\
    s_1 \ir s_2 &\ifdef \irtraces(s_1) \subseteq \rcl(\irtraces(s_2))\\
    s_1 \ireq s_2 &\ifdef s_1 \ir s_2 \wedge s_2 \ir s_1
  \end{align*}
\end{definition}

Remark that the definition of $\inp(Q)$ is universally quantified over states in $Q$.
This means that $\sigma \overline{a}$ is an input-failure trace of $s$ if $\sigma \in \traces(s)$ and some state in $\after{s}{\sigma}$ refuses input $a$.
For a closed system, all actions are outputs.
In that case, input-failure refinement coincides with ordinary trace containment.

The $\irtraces$ provide an observational, trace-based semantics for IA.
Intuitively, to observe an input-failure trace of system $s_1$, we let $s_1$ interact with its environment.
The system produces outputs and consumes inputs from the environment, until we decide to stop, or until the system refuses an input.
If the resulting input-failure trace is not in the closure of the $\irtraces$ of specification $s_2$, then it proves $s_1 \not\ir s_2$.
If no such trace can be found, then $s_1 \ir s_2$ holds.

\begin{example}
 \label{exa:ir}
 Figure~\ref{fig:ir} shows four IA with $I = \{a\}$ and $O = \{x\}$.
 Clearly, $\overline{a} \not\in \irtraces(s_A)$ holds.
 Furthermore, $ax\overline{a} \not\in \irtraces(s_A)$ since following trace $ax$ in $s_A$ leads to state $q_A^1$, and input $a$ is not refused in $q_A^1$.
 In contrast, $a\overline{a} \in \irtraces(s_A)$ holds, since $(\after{s_A}{a}) = \{q_A^1,q_A ^2\}$, and input $a$ is not enabled in $q_A^2$. 
 
 Now, let us establish whether $s_B$, $s_C$ and $s_D$ are input-failure refinements of $s_A$.
 We find $s_B \not\ir s_A$, shown by trace $axax \in \irtraces(s_B)$ which is not allowed by $s_A$.
 Put formally, $axax \not\in\rcl(\irtraces(s_A))$ holds, because $axax$, $ax\overline{a}$ and $\overline{a}$ are not in $\irtraces(s_A)$.
 Similarly $s_C \not\ir s_A$ is shown by trace $ax\overline{a} \in \irtraces(s_C)$.
 Refusing $\overline{a}$ after $ax$ is not allowed by $s_A$, or formally, $ax\overline{a} \not\in\rcl(\irtraces(s_A))$.
 Finally, $s_D \ir s_A$ does hold, as $\irtraces(s_D) = \{\epsilon,a,aa,aax\} \cup aax^*\overline{a}$, and all of these traces are in $\rcl(\irtraces(s_A))$: traces $\epsilon$ and $a$ are in $\rcl(\irtraces(s_A))$ because they are in $\irtraces(s_A)$, and all other traces are in $\rcl(\irtraces(s_A))$ because $a\overline{a} \in \irtraces(s_A)$.
 Intuitively, $a\overline{a} \in \irtraces(s_A)$ implies that the behaviour after trace $aa$ is underspecified, so $s_D$ is free to choose any behaviour after this trace.
  
 \begin{figure}
  \tikzlts
  \begin{center}
   \begin{tikzpicture}[node distance=17mm]
     \node (center) {};
     \node [named,initial] (0) {$q_A^0$};
     \node (init-text) [above left=-1.7mm and 0 of 0] {$s_A$};
     \node [named] (1) [above right of=0] {$q_A^1$};
     \node [named] (2) [below right of=0] {$q_A^2$};
     \path
     (0) edge [loop below] node {$x$} (0)
     (0) edge node [above left=0mm and -1mm] {$a$} (1)
     (0) edge node [below left=0mm and -1mm] {$a$} (2)
     (1) edge [loop right] node {$x$} (1)
     (1) edge node [right] {$a$} (2)
     ;
     
     \node [above left=3mm and 8mm of center] {\rotatebox{-20}{$\notrel\ir$}};
     
     \node [named,initial] [above left=8mm and 22mm of center] (0) {$q_B^0$};
     \node (init-text) [above left=-1.7mm and 0 of 0] {$s_B$};
     \path
       (0) edge [loop below] node {$x$} (0)
       (0) edge [in=15,out=45,loop] node [right] {$a$} (0)
     ;
     
     \node [below left=0mm and 8mm of center] {\rotatebox{20}{$\notrel\ir$}};
     
     \node [named,initial,initial text=] [below left=8mm and 35mm of center] (0) {$q_C^0$};
     \node (init-text) [above left=-1.7mm and 0 of 0] {$s_C$};
     \node [named] (1) [right of=0] {$q_C^1$};
     \path
     (0) edge [loop above] node {$x$} (0)
     (1) edge [loop above] node {$x$} (1)
     (0) edge node [above] {$a$} (1)
     ;
     
     \node [above right=-2mm and 20mm of center] {$\irrev$};
     
     \node [named,initial,initial text=] [above right=8mm and 35mm of center] (0) {$q_D^0$};
     \node (init-text) [above left=-1.7mm and 0 of 0] {$s_D$};
     \node [named] (1) [below=6mm of 0] {$q_D^1$};
     \node [named] (2) [below=6mm of 1] {$q_D^2$};
     \path
     (0) edge node [right] {$a$} (1)
     (1) edge node [right] {$a$} (2)
     (2) edge [loop right] node {$x$} (2)
     ;
   \end{tikzpicture}
  \end{center}
  \caption{Specification IA $s_A$ is not input-failure refined by $s_B$ and $s_C$, but it is by $s_D$.}
  \label{fig:ir}
 \end{figure}
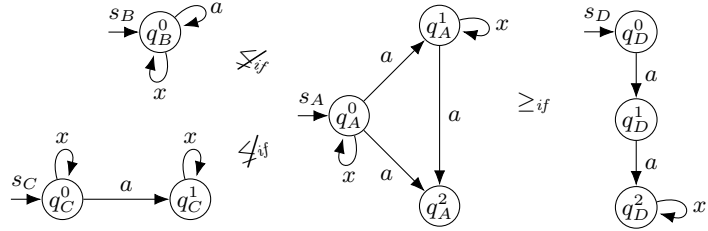
\end{example}


The closure of the input-failure traces serves as a canonical representation of the behaviour of an IA, as stated in Proposition~\ref{pro:rcl-irtraces-canonical}.
That is, if and only if the closures of two models are the same, then they are input-failure equivalent.

\begin{proposition}
  \label{pro:rcl-irtraces-canonical}
  Let $s_1, s_2 \in \lts$.
  Then 
  \begin{align*}
    s_1 \ir s_2 \iff& \rcl(\irtraces(s_1)) \subseteq \rcl(\irtraces(s_2))\\
    s_1 \ireq s_2 \iff& \rcl(\irtraces(s_1)) = \rcl(\irtraces(s_2))
  \end{align*}
\end{proposition}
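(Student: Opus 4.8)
The plan is to reduce everything to the fact that $\rcl$ is a genuine closure operator on $\iftracesdom_{I,O}$: that it is extensive ($S \subseteq \rcl(S)$), monotone ($S \subseteq S' \implies \rcl(S) \subseteq \rcl(S')$), and idempotent ($\rcl(\rcl(S)) = \rcl(S)$). Extensivity and monotonicity are immediate from the explicit formula $\rcl(S) = S \cup \{\sigma a \rho \mid \sigma\overline a \in S,\ \rho \in \iftracesdom_{I,O}\}$ in Definition~\ref{def:refusal}. Idempotence I would obtain by showing that $\rcl(S)$ is itself input-failure closed; since $\rcl(S)$ is, by definition, the smallest input-failure closed superset of $S$, the closure of an already closed set is that set, giving $\rcl(\rcl(S)) = \rcl(S)$.

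Granting these three properties, the first equivalence is short. For the direction $\Leftarrow$, extensivity gives $\irtraces(s_1) \subseteq \rcl(\irtraces(s_1))$, so the hypothesis $\rcl(\irtraces(s_1)) \subseteq \rcl(\irtraces(s_2))$ immediately yields $\irtraces(s_1) \subseteq \rcl(\irtraces(s_2))$, which is exactly $s_1 \ir s_2$. For the direction $\Rightarrow$, from $s_1 \ir s_2$, i.e.\ $\irtraces(s_1) \subseteq \rcl(\irtraces(s_2))$, monotonicity gives $\rcl(\irtraces(s_1)) \subseteq \rcl(\rcl(\irtraces(s_2)))$, and idempotence collapses the right-hand side to $\rcl(\irtraces(s_2))$, as required. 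The second equivalence then follows purely formally: $s_1 \ireq s_2$ unfolds to $s_1 \ir s_2 \wedge s_2 \ir s_1$, which by the first equivalence is the conjunction of the two inclusions between $\rcl(\irtraces(s_1))$ and $\rcl(\irtraces(s_2))$, hence their equality.

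The only real work sits in the idempotence step, namely verifying that $\rcl(S)$ is input-failure closed. I would take any $\tau\overline b \in \rcl(S)$ with $b \in I$ and show $\tau b \rho \in \rcl(S)$ for every $\rho \in \iftracesdom_{I,O}$. If $\tau\overline b$ already lies in $S$, this is exactly the defining clause of $\rcl$. The delicate case is when $\tau\overline b$ was freshly introduced by the closure, i.e.\ $\tau\overline b = \sigma a \rho'$ for some $\sigma\overline a \in S$ and $\rho' \in \iftracesdom_{I,O}$. Since $\tau\overline b$ ends in the input failure $\overline b$ while $\sigma a$ ends in a genuine label, the trailing $\overline b$ must be carried by $\rho'$: writing $\rho' = \rho''\overline b$ with $\rho'' \in L^*$ gives $\tau = \sigma a \rho''$. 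Then for any $\rho$, the word $\rho'' b \rho$ again lies in $\iftracesdom_{I,O}$ (it is an element of $L^*$ followed by an element of $\iftracesdom_{I,O}$), so from $\sigma\overline a \in S$ the defining clause yields $\sigma a (\rho'' b \rho) = \tau b \rho \in \rcl(S)$. This closedness argument — in particular the observation that the trailing input failure of a freshly generated trace must originate in its $\rho'$-part — is the crux; once it is in place, the rest of the proposition is routine bookkeeping with the closure axioms.
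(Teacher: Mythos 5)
Your proof is correct, and for the easy parts it coincides with the paper's: the $\impliedby$ direction (extensivity of $\rcl$ plus transitivity of $\subseteq$) and the reduction of the second equivalence to the first are exactly the paper's argument. The difference lies in the $\implies$ direction and in how much of Definition~\ref{def:refusal} is taken on faith. The paper argues in one line from minimality: $s_1 \ir s_2$ says that $\rcl(\irtraces(s_2))$ is an input-failure closed superset of $\irtraces(s_1)$, and since $\rcl(\irtraces(s_1))$ is the \emph{smallest} such superset, the inclusion follows. This silently relies on the definitional assertion that the explicit formula $S \cup \{\sigma a \rho \mid \sigma\overline a \in S,\ \rho \in \iftracesdom_{I,O}\}$ really is input-failure closed (and really is the smallest closed superset) — a claim the paper never verifies. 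You instead phrase the step through the closure-operator axioms (extensive, monotone, idempotent), and for idempotence you actually prove the nontrivial half of that definitional claim: that $\rcl(S)$ is itself input-failure closed, via the case split on whether $\tau\overline b$ lies in $S$ or is freshly generated, in which case the trailing failure symbol must sit in the suffix $\rho'$, so $\rho' = \rho''\overline b$ with $\rho'' \in L^*$ and the defining clause applied to $\sigma\overline a \in S$ yields $\tau b \rho \in \rcl(S)$. That case analysis is sound (in the fresh case $\rho'$ cannot be empty, since $\sigma a$ ends in a genuine label, and $\rho'' b \rho \in \iftracesdom_{I,O}$ as you note). So your route is slightly longer but more self-contained: minimality versus monotonicity-plus-idempotence are interchangeable formulations for any closure operator, but your version additionally validates the internal coherence of Definition~\ref{def:refusal}, which the paper's two-line proof presupposes.
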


\begin{proof}
  The latter statement follows from the former.
  We now prove the former.
  
  $(\implies)$
  Let $s_1 \ir s_2$.
  That is, $\irtraces(s_1) \subseteq \rcl(\irtraces(s_2))$, or put differently, $\rcl(\irtraces(s_2))$ is an input-failure closed superset of $\irtraces(s_1)$.
  Then $\rcl(\irtraces(s_2))$ must be larger than the smallest input-failure closed superset of $\irtraces(s_1)$, which is $\rcl(\irtraces(s_1))$.
  
  $(\impliedby)$
  This follows trivially from the fact that $\irtraces(s_1) \subseteq \rcl(\irtraces(s_1))$, and from transitivity of $\subseteq$.
  \qed
\end{proof}

Proposition~\ref{pro:rcl-irtraces-canonical} implies that relation $\ir$ is reflexive ($s \ir s$), so any software component may safely be replaced by an input-failure equivalent one.
Relation $\ir$ is also transitive ($s_1 \ir s_2 \wedge s_2 \ir s_3 \implies s_1 \ir s_3$), making it suitable for stepwise refinement.
Formally, it is thus a preorder.

\subsection{Input Universal / Output Existential Traces}
\label{sec:input-universal-output-existential}

The definition of input-failure refinement clearly reflects its observational nature.
Yet, reasoning about this relation can be simplified by using an alternative characterization.
This characterization is not expressed in terms of explicit input refusals, but it is based upon the existential and universal definitions of the respective operators $\out$ and $\inp$, from Definition~\ref{def:lts-definitions}.\begin{longversion}
\footnote{
not to be confused with the existential and universal quantifications of inputs and outputs in the interfaces of~\cite{Chaetal02}, which have a different meaning.}\end{longversion}

\begin{longversion}
Some auxiliary definitions and lemmas are introduced, before providing the characterization in Theorem~\ref{the:ir-iuoe}.
\end{longversion}

\begin{definition}
  \label{def:output-existential}
  \label{def:input-universal}
  Let $s \in \lts$.
  A word $\ell^1 \dots \ell^n \in L^*$ is \emph{$s$-output-existential} if 
  \[\forall j \in \{1 \dots n\}: \ell^j \in O \hspace{-1mm}\implies\hspace{-1mm} \ell^j \in \out(\after{s}{\ell^1 \dots \ell^{j-1}})\]
  and it is \emph{$s$-input-universal} if
  \[\forall j \in \{1 \dots n\}: \ell^j \in I \implies \ell^j \in \inp(\after{s}{\ell^1 \dots \ell^{j-1}})\]
  $\oetraces(s)$ denotes the set of  $s$-output-existential words in $L^*$, and
  $\iutraces(s)$ the set of $s$-input-universal words in $L^*$.
\end{definition}

\begin{longversion}
Note that the sets $\oetraces(s)$ and $\iutraces(s)$ are both prefix closed.
\end{longversion}

\begin{longversion}
\begin{lemma}
  \label{lem:oe-traces-form}
  Let $s \in \lts$.
  Then $\oetraces(s) = \traces(s) \cdot I^*$.
\end{lemma}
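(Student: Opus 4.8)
The plan is to prove the set equality $\oetraces(s) = \traces(s) \cdot I^*$ by establishing the two inclusions separately, in each case reasoning position by position along the word and exploiting the fact that $\after{s}{\tau} \neq \emptyset$ holds exactly when $\tau \in \traces(s)$.

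For the inclusion $\traces(s) \cdot I^* \subseteq \oetraces(s)$, I would take an arbitrary word $\sigma w$ with $\sigma \in \traces(s)$ and $w \in I^*$, and verify the defining condition of output-existentiality. Since $w$ contains only inputs, every output of $\sigma w$ occurs inside the prefix $\sigma$. If an output sits at position $j$, then $\ell^1 \dots \ell^j$ is a prefix of $\sigma$ and hence itself a trace (traces are prefix closed); a witnessing path reaches, after $\ell^1 \dots \ell^{j-1}$, a state from which $\ell^j$ can still be performed, so $\ell^j \in \out(\after{s}{\ell^1 \dots \ell^{j-1}})$ by the existential reading of $\out$. This confirms that $\sigma w \in \oetraces(s)$.

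For the converse inclusion $\oetraces(s) \subseteq \traces(s) \cdot I^*$, the key idea is to single out the position $p$ of the \emph{last} output in an output-existential word $\rho = \ell^1 \dots \ell^n$, setting $p = 0$ when $\rho$ has no output at all. By the choice of $p$ the suffix $\ell^{p+1} \dots \ell^n$ lies in $I^*$, so everything reduces to showing that the prefix $\ell^1 \dots \ell^p$ is a trace. I expect this to be the only real subtlety, because the output-existential condition constrains outputs but is silent about inputs, so one cannot argue inductively that every prefix is a trace — an intermediate non-enabled input would empty the after-set. The remedy is to invoke the condition \emph{only} at position $p$: output-existentiality gives $\ell^p \in \out(\after{s}{\ell^1 \dots \ell^{p-1}})$, and since $\out$ is defined via an existential over states, this already forces $\after{s}{\ell^1 \dots \ell^{p-1}}$ to be nonempty and to contain a state performing $\ell^p$, whence $\after{s}{\ell^1 \dots \ell^p} \neq \emptyset$, i.e.\ $\ell^1 \dots \ell^p \in \traces(s)$. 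The degenerate case $p = 0$ is immediate since $\epsilon \in \traces(s)$, so in all cases $\rho$ factors as a trace followed by a word in $I^*$.

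The main obstacle is therefore conceptual rather than computational: recognizing that the existential nature of $\out$ lets the single constraint at the last output position carry the whole argument, bypassing the intermediate inputs entirely. Once this observation is isolated, both inclusions are short and routine.
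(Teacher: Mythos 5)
Your proof is correct and takes essentially the same route as the paper's: the hard inclusion $\oetraces(s) \subseteq \traces(s) \cdot I^*$ is handled in both by isolating the last output position and using the existential reading of $\out$ there to conclude that the prefix ending at that output is a trace, with the all-inputs case handled via $\epsilon \in \traces(s)$. The only cosmetic difference is that the paper proves $\traces(s) \cdot I^* \subseteq \oetraces(s)$ by induction on the length of the trace part, whereas you verify the output-existentiality condition directly at each output position using prefix-closure of $\traces(s)$ — the underlying argument is identical.
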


\begin{proof}
  For $\sigma \in L^*$, we prove $\sigma \in \oetraces(s) \iff \sigma \in \traces(s) \cdot I^*$.
  
  $(\implies)$
  Assume $\sigma \in \oetraces(s)$.
  We prove $\sigma \in \traces(s) \cdot I^*$ by a case distinction:
  \begin{itemize}
    \item
      If $\sigma \in I^*$, then $\sigma \in \traces(s) \cdot I^*$ trivially holds, since $\epsilon \in \traces(s)$.
    \item
      If $\sigma \not\in I^*$, then $\sigma$ containts at least one output symbol. Let $x \in O$ be the last output symbol that occurs in $\sigma$. Then $\sigma = \rho \, x \, \tau$ for some $\tau \in I^*$.
      Since $\sigma \in \oetraces(s)$, also $\rho \, x \in \oetraces(s)$. By Definition~\ref{def:output-existential} of $\oetraces$ this implies $q_s^0 \xrightarrow{\rho} q \xrightarrow{x}$, for some $q$.
      Then $\rho \, x \in \traces(s)$, so $\sigma = \rho \, x \, \tau \in \traces(s) \cdot I^*$.
  \end{itemize}

  $(\impliedby)$
  Assume $\sigma \in \traces(s) \cdot I^*$. Then $\sigma = \rho\,\tau$, for some $\rho \in \traces(s)$ and $\tau \in I^*$.
  We first prove $\rho \in \oetraces(s)$ by induction on the length of $\rho$.
  For the base case $\rho = \epsilon$, $\rho \in \oetraces(s)$ trivially holds.
  For the inductive step, let $\rho = \rho'\, \ell$ and assume as induction hypothesis that $\rho' \in \oetraces(s)$.
  Then we distinguish two cases:
  \begin{itemize}
    \item
      If $\ell \in I$, then $\rho' \in \oetraces(s)$ implies $\rho \in \oetraces(s)$.
    \item
      If $\ell \in O$, then $\rho \in \traces(s)$ implies $q_s^0 \xrightarrow{\rho'} q \xrightarrow{\ell}$, for some $q$, so together with the induction hypothesis this implies $\rho \in \oetraces(s)$.
  \end{itemize}
  Thus, $\rho \in \oetraces(s)$ holds.
  Since $\sigma = \rho \, \tau$ with $\tau \in I^*$, this implies $\sigma \in \oetraces(s)$.
  \qed
\end{proof}

\begin{lemma}
  \label{lem:iu-trace-not-refused}
  Let $s \in \lts$, and $\sigma \in L^*$.
  Then 
  \begin{align*}
    \sigma \in \iutraces(s) \iff& \text{no decomposition $\sigma = \rho \, a \, \tau$}\\&\text{with $a \in I$ has $\rho \overline{a} \in \irtraces(s)$}
  \end{align*}
\end{lemma}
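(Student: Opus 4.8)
The plan is to prove the contrapositive of the biconditional on both sides simultaneously, i.e.\ to establish
\[
\sigma \notin \iutraces(s) \iff \exists\,\rho,a,\tau :\ a \in I,\ \sigma = \rho\,a\,\tau,\ \rho\overline{a} \in \irtraces(s).
\]
Writing $\sigma = \ell^1 \cdots \ell^n$, I would first observe that a decomposition $\sigma = \rho\,a\,\tau$ with $a \in I$ is nothing but a choice of a position $j \in \{1,\dots,n\}$ with $\ell^j \in I$, taking $a = \ell^j$, $\rho = \ell^1 \cdots \ell^{j-1}$ and $\tau = \ell^{j+1} \cdots \ell^n$. Under this correspondence, the right-hand side becomes ``there is an input position $j$ with $\ell^1 \cdots \ell^{j-1}\,\overline{\ell^j} \in \irtraces(s)$'', while the negation of Definition~\ref{def:input-universal} makes the left-hand side ``there is an input position $j$ with $\ell^j \notin \inp(\after{s}{\ell^1 \cdots \ell^{j-1}})$''. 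Thus the whole lemma collapses to the single per-position equivalence
\[
\rho\overline{a} \in \irtraces(s) \iff a \notin \inp(\after{s}{\rho}),
\]
for $\rho \in L^*$ and $a \in I$, which I would prove once and then quantify over $j$.

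To prove this equivalence I would unfold $\irtraces(s)$ from Definition~\ref{def:ir-inclusion} as the union of $\traces(s)$ and the set $\{\sigma\overline{b} \mid \sigma \in L^*,\ b \in I,\ b \notin \inp(\after{s}{\sigma})\}$. The key observation is that the $\traces(s)$ summand cannot contribute: the word $\rho\overline{a}$ ends in the failure symbol $\overline{a}$ and therefore lies in $L^* \cdot \overline I$ rather than in $L^*$, so $\rho\overline{a} \notin \traces(s)$. Membership of $\rho\overline{a}$ in $\irtraces(s)$ is thus decided entirely by the second summand, and since $\rho \in L^*$ and $a \in I$ already hold, it reduces exactly to the condition $a \notin \inp(\after{s}{\rho})$.

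The only point demanding a little care --- and the closest thing to an obstacle --- is that neither the definition of input-universality nor the second summand of $\irtraces$ explicitly requires $\rho$ to be a trace of $s$, so I must check that the degenerate case $\rho \notin \traces(s)$, i.e.\ $\after{s}{\rho} = \emptyset$, is treated consistently on both sides. Here the universal quantification in the definition of $\inp$ on a set of states does the work: $\inp(\emptyset) = \{b \in I \mid \forall q \in \emptyset : b \in \inp(q)\} = I$ holds vacuously, so $a \in \inp(\after{s}{\rho})$ in that case, making $a \notin \inp(\after{s}{\rho})$ false and, correspondingly, $\rho\overline{a} \notin \irtraces(s)$. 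Hence both sides agree in the degenerate case as well, and the per-position equivalence --- and with it the lemma --- holds in full generality.
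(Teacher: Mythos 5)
Your proposal is correct and takes essentially the same route as the paper: both proofs are direct unfoldings of Definitions~\ref{def:input-universal}, \ref{def:lts-definitions} and~\ref{def:ir-inclusion}, the heart of the matter in each case being the per-position equivalence $\rho\overline{a} \in \irtraces(s) \iff a \notin \inp(\after{s}{\rho})$. The only difference is presentational: the paper states this as a three-step chain of biconditionals and leaves the degenerate case $\rho \notin \traces(s)$ implicit (there the condition $\exists q \in (\after{s}{\rho}): q \not\xrightarrow{a}$ is vacuously false over the empty set), whereas you negate both sides and spell out that case explicitly.
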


\begin{proof}
  \begin{align*}
        & \sigma \in \iutraces(s)\\
    \iff& \text{all decompositions $\sigma = \rho \, a \, \tau$ with $a \in I$ have $a \in \inp(\after{s}{\rho})$} \ttag{Definition~\ref{def:input-universal}}\\
    \iff& \text{no decomposition $\sigma = \rho \, a \, \tau$ with $a \in I$ has $\exists q \in (\after{s}{\rho}): q \not\xrightarrow{a}$} \ttag{Definition~\ref{def:lts-definitions} of $\inp$}\\
    \iff& \text{no decomposition $\sigma = \rho \, a \, \tau$ with $a \in I$ has $\rho \overline{a} \in \irtraces(s)$} \\&\qquad \ttag{Definition~\ref{def:ir-inclusion} of $\irtraces$}
  \end{align*}
  \qed
\end{proof}

\begin{lemma}
  \label{lem:intersection-irtraces-iutraces}
  Let $s \in \lts$.
  Then
  $\iutraces(s) \cap \rcl(\irtraces(s)) \subseteq \traces(s)$.
\end{lemma}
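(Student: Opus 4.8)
The plan is to take an arbitrary $\sigma \in \iutraces(s) \cap \rcl(\irtraces(s))$ and show $\sigma \in \traces(s)$. The first observation I would record is that $\sigma \in L^*$, since $\iutraces(s) \subseteq L^*$ by Definition~\ref{def:input-universal}; in particular $\sigma$ carries no trailing input-failure symbol. I would then unfold membership in $\rcl(\irtraces(s))$ directly from the explicit formula in Definition~\ref{def:refusal}, namely $\rcl(S) = S \cup \{\rho\, a\, \tau \mid \rho\overline a \in S,\ \tau \in \iftracesdom_{I,O}\}$. This splits the argument into two cases: either $\sigma \in \irtraces(s)$ already, or $\sigma = \rho\, a\, \tau$ for some $a \in I$, some $\rho \overline a \in \irtraces(s)$, and some $\tau \in \iftracesdom_{I,O}$.

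In the first case the conclusion is immediate: by Definition~\ref{def:ir-inclusion} the elements added to $\traces(s)$ to form $\irtraces(s)$ all lie in $L^* \cdot \overline I$, which is disjoint from $L^*$, so $\irtraces(s) \cap L^* = \traces(s)$, and $\sigma \in L^*$ then forces $\sigma \in \traces(s)$. The whole content of the lemma therefore sits in ruling out the second case. Here I would derive a contradiction with the assumption $\sigma \in \iutraces(s)$: the decomposition $\sigma = \rho\, a\, \tau$ with $a \in I$ and $\rho \overline a \in \irtraces(s)$ is exactly the situation that Lemma~\ref{lem:iu-trace-not-refused} forbids for an input-universal word. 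Hence the second case cannot occur, and $\sigma \in \traces(s)$ as required.

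I expect no genuine difficulty, only bookkeeping: the one real subtlety is keeping track of the trailing symbol, i.e. using $\sigma \in L^*$ to guarantee that in the first case we land in $\traces(s)$ rather than in the barred part of $\irtraces(s)$. Lemma~\ref{lem:iu-trace-not-refused} does the heavy lifting in the second case, since it is precisely the statement that an $s$-input-universal word admits no decomposition $\rho\, a\, \tau$ with $\rho \overline a \in \irtraces(s)$, which is exactly the shape produced by the closure. Should I prefer to avoid invoking that lemma, I could argue in place: $\rho \overline a \in \irtraces(s)$ yields $a \notin \inp(\after{s}{\rho})$ by Definition~\ref{def:ir-inclusion}, whereas input-universality of $\sigma$ applied at the position of this occurrence of $a$ forces $a \in \inp(\after{s}{\rho})$, again a contradiction. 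Either route closes the proof.
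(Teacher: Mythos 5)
Your proof is correct and follows essentially the same route as the paper's: both invoke Lemma~\ref{lem:iu-trace-not-refused} to rule out the closure-added decompositions $\rho\, a\, \tau$ with $\rho\overline{a} \in \irtraces(s)$, and then conclude $\sigma \in \traces(s)$ from Definition~\ref{def:ir-inclusion} of $\rcl$. The only difference is that you spell out the case split and the trailing-symbol bookkeeping ($\sigma \in L^*$, so landing in $\traces(s)$ rather than the barred part of $\irtraces(s)$) that the paper compresses into its final step.
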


\begin{proof}
  For $\sigma \in \iutraces(s) \cap \rcl(\irtraces(s))$ we prove $\sigma \in \traces(s)$:
  \begin{align*}
            & \sigma \in \iutraces(s) \cap \rcl(\irtraces(s))\\
    \implies& \text{no decomposition $\sigma = \rho \, a \, \tau$ with $a \in I$ has $\rho \overline{a} \in \irtraces(s)$} \\&\qquad \wedge \sigma \in \rcl(\irtraces(s)) \ttag{Lemma~\ref{lem:iu-trace-not-refused}}\\
    \implies& \sigma \in \traces(s) \ttag{Definition~\ref{def:ir-inclusion} of $\rcl$}
  \end{align*}
  \qed
\end{proof}

\begin{lemma}
  \label{lem:intersection-oe-iu-in-traces}
  Let $s \in \lts$.
  Then $\iutraces(s) \cap \oetraces(s) = \iutraces(s) \cap \traces(s)$.
\end{lemma}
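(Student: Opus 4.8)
The plan is to prove the two set inclusions separately, reusing the structural description of $\oetraces$ from Lemma~\ref{lem:oe-traces-form} and the soundness statement of Lemma~\ref{lem:intersection-irtraces-iutraces}. The inclusion $\iutraces(s) \cap \traces(s) \subseteq \iutraces(s) \cap \oetraces(s)$ is the easy direction: by Lemma~\ref{lem:oe-traces-form}, $\oetraces(s) = \traces(s) \cdot I^*$, and taking the empty input suffix gives $\traces(s) \subseteq \oetraces(s)$, so intersecting both sides with $\iutraces(s)$ yields the claim immediately.

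For the reverse inclusion $\iutraces(s) \cap \oetraces(s) \subseteq \iutraces(s) \cap \traces(s)$, it suffices to show that every $\sigma \in \iutraces(s) \cap \oetraces(s)$ satisfies $\sigma \in \traces(s)$, since membership in $\iutraces(s)$ is already given. My plan is to route this through Lemma~\ref{lem:intersection-irtraces-iutraces}, which states $\iutraces(s) \cap \rcl(\irtraces(s)) \subseteq \traces(s)$: as $\sigma \in \iutraces(s)$ is assumed, it is enough to establish the auxiliary inclusion $\oetraces(s) \subseteq \rcl(\irtraces(s))$, which then gives $\sigma \in \rcl(\irtraces(s))$ and hence $\sigma \in \traces(s)$.

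To prove $\oetraces(s) \subseteq \rcl(\irtraces(s))$, I would take any $\sigma \in \oetraces(s)$ and, using Lemma~\ref{lem:oe-traces-form}, decompose it as $\sigma = \rho\tau$ where $\rho$ is the \emph{longest} prefix of $\sigma$ with $\rho \in \traces(s)$; maximality of $\rho$ together with $\sigma \in \traces(s) \cdot I^*$ forces $\tau \in I^*$. If $\tau = \epsilon$, then $\sigma = \rho \in \traces(s) \subseteq \irtraces(s) \subseteq \rcl(\irtraces(s))$. Otherwise $\tau = a\tau'$ with $a \in I$, and maximality of $\rho$ gives $\rho a \notin \traces(s)$, i.e. no state of $\after{s}{\rho}$ can perform $a$; since $\rho \in \traces(s)$ makes $\after{s}{\rho}$ nonempty, this yields $a \notin \inp(\after{s}{\rho})$ and therefore $\rho\overline{a} \in \irtraces(s)$ by Definition~\ref{def:ir-inclusion}. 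The input-failure closure then absorbs the entire remaining input suffix, so $\sigma = \rho a \tau' \in \rcl(\irtraces(s))$.

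The main obstacle is this last step: everything hinges on arguing that after the longest trace-prefix $\rho$ the next input $a$ must be refused by \emph{all} states of $\after{s}{\rho}$, so that the universally quantified $\inp$ excludes $a$ and an input-failure trace $\rho\overline{a}$ is generated, which is exactly what lets the closure operator swallow $\tau$. One must be careful that $\after{s}{\rho}$ is nonempty here, since otherwise $a \in \inp(\after{s}{\rho})$ holds vacuously and no input failure would arise; this nonemptiness is guaranteed precisely because $\rho$ is itself a trace. A self-contained alternative that avoids $\rcl$ entirely is also available: since $\after{s}{\rho}$ is nonempty and input-universality of $\sigma$ places each input of $\tau$ in the universally quantified $\inp$ of the reached state set, a short induction along $\tau$ shows each successive input is actually executable, directly rebuilding $\sigma \in \traces(s)$.
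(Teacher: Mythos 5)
Your proof is correct, and for the nontrivial inclusion it takes a genuinely different route than the paper. The paper proves $\iutraces(s) \cap \oetraces(s) \subseteq \traces(s)$ by a direct induction on the length of $\sigma$: writing $\sigma = \rho\,\ell$ with $\rho \in \traces(s)$ by the induction hypothesis, it extends the trace across $\ell$ using output-existentiality when $\ell \in O$ and using input-universality plus nonemptiness of $\after{s}{\rho}$ when $\ell \in I$ — entirely elementary, with no refusal machinery. You instead factor the argument through Lemma~\ref{lem:intersection-irtraces-iutraces} by establishing the auxiliary inclusion $\oetraces(s) \subseteq \rcl(\irtraces(s))$, via a longest-trace-prefix decomposition $\sigma = \rho\,a\,\tau'$ in which maximality of $\rho$ turns the first unexecutable input $a$ into a genuine input failure $\rho\overline{a} \in \irtraces(s)$, which the closure then absorbs. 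Both arguments hinge on the same subtle point — that $\after{s}{\rho}$ is nonempty, so the universally quantified $\inp$ does not hold vacuously — and you handle it correctly; there is also no circularity, since Lemmas~\ref{lem:oe-traces-form}, \ref{lem:iu-trace-not-refused} and~\ref{lem:intersection-irtraces-iutraces} all precede this lemma. What your route buys is a reusable structural fact (every $s$-output-existential word is either a trace of $s$ or lies in the input-failure closure by way of a refused input), which mirrors the reasoning the paper later redoes inline in the $(\implies)$ direction of Theorem~\ref{the:ir-iuoe}; what the paper's route buys is self-containedness and brevity, needing nothing beyond the definitions. Your closing remark sketching the $\rcl$-free induction along $\tau$ is essentially the paper's own argument specialized to the input suffix.
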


\begin{proof}
  $(\subseteq)$
  Assume $\sigma \in \iutraces(s) \cap \oetraces(s)$.
  We prove $\sigma \in \traces(s)$ by induction on the length of $\sigma$.
  For the base case $\sigma = \epsilon$, $\sigma \in \traces(s)$ trivially holds.
  For the inductive step, assume $\sigma = \rho \, \ell$ with $\rho \in \traces(s)$.
  We distinguish two cases:
  \begin{itemize}
    \item
      If $\ell \in I$, then
      \begin{align*}
                & \rho \in \traces(s)\\
        \implies& \text{there is some $q \in (\after{s}{\rho})$}\\
        \implies& q_s^0 \xrightarrow{\rho} q \xrightarrow{\ell} \ttag{$\ell \in I$ and $\sigma \in \iutraces(s)$}\\
        \implies& \sigma \in \traces(s)
      \end{align*}
    \item
      If $\ell \in O$, then $\sigma = \rho\, \ell \in \oetraces(s)$ implies that there is a $q$ such that $q_s^0 \xrightarrow{\rho} q \xrightarrow{\ell}$ by Definition~\ref{def:output-existential} of $\oetraces(s)$, so $\sigma \in \traces(s)$ holds.
  \end{itemize}
  If $\sigma \in \traces(s)$, then clearly also $\sigma \in \iutraces(s) \cap \traces(s)$.
  
  $(\supseteq)$
  This follows directly from Lemma~\ref{lem:oe-traces-form}.
  \qed
\end{proof}
\end{longversion}

\begin{definition}
  \label{def:iuoe}
  Let $s_1, s_2 \in \lts$.
  Then 
  \[s_1 \iuoe s_2 \ifdef \oetraces(s_1) \cap \iutraces(s_2) \subseteq \iutraces(s_1) \cap \oetraces(s_2)\]
\end{definition}

\begin{theorem}
  \label{the:ir-iuoe}
  $s_1 \ir s_2 \iff s_1 \iuoe s_2$
\end{theorem}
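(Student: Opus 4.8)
The plan is to prove the two inclusions of the biconditional separately, in each case reducing membership in $\rcl(\irtraces(\cdot))$ to statements about $\oetraces$ and $\iutraces$ through the auxiliary lemmas. The starting observation is that every element of $\irtraces(s)$ is either a word in $L^*$ or a trailing input failure $\sigma\overline a$, so both directions split into these two cases. Before starting, I would record two membership characterisations of the closure, read off from Definition~\ref{def:ir-inclusion} together with the lemmas. For a word $\sigma \in L^*$, combining Lemmas~\ref{lem:oe-traces-form}, \ref{lem:iu-trace-not-refused} and~\ref{lem:intersection-irtraces-iutraces} gives $\sigma \in \rcl(\irtraces(s)) \iff \sigma \in \oetraces(s) \vee \sigma \notin \iutraces(s)$. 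For a trailing input failure, inspecting how $\rcl$ can produce a word ending in $\overline a$ yields $\sigma\overline a \in \rcl(\irtraces(s)) \iff (\sigma \in \traces(s) \wedge a \notin \inp(\after{s}{\sigma})) \vee \sigma \notin \iutraces(s)$, where the first disjunct is just $\sigma\overline a \in \irtraces(s)$ and the second, by Lemma~\ref{lem:iu-trace-not-refused}, captures every way the closure can reach $\sigma\overline a$ from an earlier refusal along $\sigma$.

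For $s_1 \iuoe s_2 \implies s_1 \ir s_2$, take $\mu \in \irtraces(s_1)$. If $\mu \in \traces(s_1) \subseteq \oetraces(s_1)$, then either $\mu \notin \iutraces(s_2)$, giving $\mu \in \rcl(\irtraces(s_2))$ by the word characterisation, or $\mu \in \iutraces(s_2)$, whence the hypothesis yields $\mu \in \oetraces(s_2)$ and Lemma~\ref{lem:intersection-oe-iu-in-traces} gives $\mu \in \traces(s_2) \subseteq \rcl(\irtraces(s_2))$. If $\mu = \sigma\overline a$, then $\sigma \in \traces(s_1)$ and $a \notin \inp(\after{s_1}{\sigma})$; assuming $\sigma \in \iutraces(s_2)$ (otherwise we are done by the input-failure characterisation), the hypothesis gives $\sigma \in \iutraces(s_1) \cap \oetraces(s_2)$, so $\sigma \in \traces(s_2)$ by Lemma~\ref{lem:intersection-oe-iu-in-traces}. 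The remaining and crucial point is $a \notin \inp(\after{s_2}{\sigma})$, which I would prove by contradiction: if $a \in \inp(\after{s_2}{\sigma})$ then $\sigma a \in \iutraces(s_2)$, while $\sigma a \in \oetraces(s_1)$ since appending an input to a trace stays output-existential (Lemma~\ref{lem:oe-traces-form}); the hypothesis then forces $\sigma a \in \iutraces(s_1)$, contradicting $a \notin \inp(\after{s_1}{\sigma})$. Hence $\sigma\overline a \in \rcl(\irtraces(s_2))$ by the input-failure characterisation.

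For $s_1 \ir s_2 \implies s_1 \iuoe s_2$, take $\sigma \in \oetraces(s_1) \cap \iutraces(s_2)$. Proposition~\ref{pro:rcl-irtraces-canonical} rewrites the hypothesis as $\rcl(\irtraces(s_1)) \subseteq \rcl(\irtraces(s_2))$; since $\sigma \in \oetraces(s_1) \subseteq \rcl(\irtraces(s_1))$, we obtain $\sigma \in \rcl(\irtraces(s_2)) \cap \iutraces(s_2) \subseteq \traces(s_2) \subseteq \oetraces(s_2)$ via Lemmas~\ref{lem:intersection-irtraces-iutraces} and~\ref{lem:oe-traces-form}. For $\sigma \in \iutraces(s_1)$ I invoke Lemma~\ref{lem:iu-trace-not-refused}: were there a decomposition $\sigma = \rho a \tau$ with $a \in I$ and $\rho\overline a \in \irtraces(s_1)$, then $\rho\overline a \in \rcl(\irtraces(s_2))$, and since $\rho \in \iutraces(s_2)$ by prefix-closedness, the input-failure characterisation forces $a \notin \inp(\after{s_2}{\rho})$, contradicting the input-universality of $\sigma$ in $s_2$. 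Thus no such decomposition exists and $\sigma \in \iutraces(s_1)$, which completes this direction.

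The main obstacle is the trailing-input-failure case of the first direction: transferring a refusal of $a$ from $s_1$ to $s_2$ does not follow directly from the covariant/contravariant bookkeeping and requires the auxiliary word $\sigma a$ and a second application of $s_1 \iuoe s_2$ to reach a contradiction. The other delicate point is the input-failure characterisation of the closure itself, where one must verify that the only route by which $\rcl$ produces a word ending in $\overline a$, apart from $\sigma\overline a$ being an input-failure trace outright, is an earlier refusal along $\sigma$.
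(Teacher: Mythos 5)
Your proposal is correct and takes essentially the same route as the paper's proof: the same case split on ordinary words versus trailing input failures (with sub-cases on membership in $\iutraces(s_2)$), the same auxiliary lemmas, and the same crucial trick in the hard case of forming the word $\sigma a$ and applying the $\iuoe$ hypothesis a second time to transfer the refusal of $a$ from $s_1$ to $s_2$. The only difference is organizational: you front-load the closure-unfolding reasoning into two explicit membership characterisations of $\rcl(\irtraces(\cdot))$ (which also lets you route one branch through Proposition~\ref{pro:rcl-irtraces-canonical}), whereas the paper inlines exactly this reasoning in its implication chains.
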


\begin{proof}
  $(\implies)$
  Assume $s_1 \ir s_2$ (1) and $\sigma \in \oetraces(s_1) \cap \iutraces(s_2)$ (2).
  We prove $\sigma \in \iutraces(s_1) \cap \oetraces(s_2)$.
  \begin{align*}
            & \sigma \in \iutraces(s_2) \ttag{assumption (2)}\\
    \implies& \text{No decomposition $\sigma = \rho \, a \,\tau$ with $a \in I$ has $\rho \overline{a} \in \irtraces(s_2)$} \ttag{Lemma~\ref{lem:iu-trace-not-refused}}\\
    \implies& \text{All decompositions $\sigma = \rho\,  a \, \tau$ with $a \in I$ have $\rho \overline{a} \not\in \irtraces(s_2)$}\\
    \implies& \text{All decompositions $\sigma = \rho \, a \, \tau$ with $a \in I$ have $\rho \overline{a} \not\in \rcl(\irtraces(s_2))$} \ttag{Definition~\ref{def:ir-inclusion} of $\rcl$}\\
    \implies& \text{All decompositions $\sigma = \rho \, a \, \tau$ with $a \in I$ have $\rho \overline{a} \not\in \irtraces(s_1)$} \ttag{assumption (1) and Definition~\ref{def:ir-inclusion} of ${\ir}$}\\
    \implies& \text{No decomposition $\sigma = \rho \, a \, \tau$ with $a \in I$ has $\rho \overline{a} \in \irtraces(s_1)$}\\
    \implies& \sigma \in \iutraces(s_1) \quad(3)\quad \text{ and} \ttag{Lemma~\ref{lem:iu-trace-not-refused}}
    \\&\qquad\text{no decomposition $\sigma = \rho \, a \, \tau$ with $a \in I$ has $\rho \in \traces(s_1)$ and }\\&\qquad \rho a \not\in \traces(s_1) \ttag{Definition~\ref{def:ir-inclusion} of $\irtraces$}\\
    \implies& \text{No decomposition $\sigma = \rho \, a \, \tau$ has $\rho \in \traces(s_1)$, $a \tau \in I^*$, $\rho a \not\in \traces(s_1)$}\\
    \implies& \sigma \not\in (\traces(s_1) \cdot I^*) \setminus \traces(s_1)\\
    \implies& \sigma \not\in \oetraces(s_1) \setminus \traces(s_1) \ttag{Lemma~\ref{lem:oe-traces-form}}\\
    \implies& \sigma \in \traces(s_1) \ttag{$\sigma \in \oetraces(s_1)$ by assumption (2)}\\
    \implies& \sigma \in \traces(s_1) \cap \iutraces(s_2) \ttag{$\sigma \in \iutraces(s_2)$ by assumption (2)}\\
    \implies& \sigma \in \irtraces(s_1) \cap \iutraces(s_2) \ttag{Definition~\ref{def:ir-inclusion} of $\irtraces$}\\
    \implies& \sigma \in \rcl(\irtraces(s_2)) \cap \iutraces(s_2) \ttag{assumption (1) and Definition~\ref{def:ir-inclusion}}\\
    \implies& \sigma \in \traces(s_2) \ttag{Lemma~\ref{lem:intersection-irtraces-iutraces}}\\
    \implies& \sigma \in \oetraces(s_2) \ttag{$\traces(s_2) \subseteq \oetraces(s_2)$ by Lemma~\ref{lem:oe-traces-form}} \quad (4)
  \end{align*}
  From (3) and (4) we conclude  that $\sigma \in \iutraces(s_1) \cap \oetraces(s_2)$, as required, which proves $s_1 \iuoe s_2$.
  
  $(\impliedby)$
  Assume $s_1 \iuoe s_2$ (1) and $\sigma \in \irtraces(s_1)$ (2).
  We prove $s_1 \ir s_2$ by showing $\sigma \in \rcl(\irtraces(s_2))$.
  We distinguish three cases:
  \begin{itemize}
    \item
      If $\sigma \in L^* \setminus \iutraces(s_2)$, then
      \begin{align*}
		& \sigma \not\in \iutraces(s_2)\\
	\implies& \text{there exists a decomposition $\sigma = \rho\, a\, \tau$ with $a \in I$}\\&\qquad\text{and $\rho \overline{a} \in \irtraces(s_2)$} \ttag{Lemma~\ref{lem:iu-trace-not-refused}}\\
	\implies& \text{there exists a decomposition $\sigma = \rho\, a \,\tau$ with $a \in I$}\\&\qquad \text{and $\rho \, a \,\tau \in \rcl(\irtraces(s_2))$} \ttag{Definition~\ref{def:ir-inclusion} of $\rcl$}\\
	\implies& \sigma \in \rcl(\irtraces(s_2))
      \end{align*}
    \item 
      If $\sigma \in \iutraces(s_2)$, then
      \begin{align*}
                & \sigma \in \iutraces(s_2) \cap L^* \ttag{$\iutraces(s_2) \subseteq L^*$}\\
	\implies& \sigma \in \iutraces(s_2) \cap \traces(s_1) \ttag{(2) and Definition~\ref{def:ir-inclusion} of $\irtraces$}\\
	\implies& \sigma \in \iutraces(s_2) \cap \oetraces(s_1) \ttag{$\traces(s_1) \subseteq \oetraces(s_1)$ by Lemma~\ref{lem:oe-traces-form}}\\
	\implies& \sigma \in \iutraces(s_2) \cap \oetraces(s_2) \ttag{assumption (1)}\\
	\implies& \sigma \in \traces(s_2) \ttag{Lemma~\ref{lem:intersection-oe-iu-in-traces}}\\
	\implies& \sigma \in \irtraces(s_2) \ttag{Definition~\ref{def:ir-inclusion} of $\irtraces$}\\
	\implies& \sigma \in \rcl(\irtraces(s_2)) \ttag{Definition~\ref{def:ir-inclusion} of $\rcl$}
      \end{align*}
    \item
      If $\sigma \not\in L^*$, then 
      \begin{align*}
                & \sigma \not\in L^*\\
        \implies& \sigma = \rho\overline{a} \text{ for some $a \in I$ and $q \in Q_s$ with $q_1^0 \xrightarrow{\rho} q \not\xrightarrow{a}$} \ttag{assumption (2) and Definition~\ref{def:ir-inclusion} of $\irtraces$}\\
        \implies& \sigma = \rho\overline{a} \wedge \rho \in \traces(s_1) \wedge \rho a \not\in \iutraces(s_1) \ttag{assumption (2) and Lemma~\ref{lem:iu-trace-not-refused}}\\
        \implies& \sigma = \rho\overline{a} \wedge \rho \in \traces(s_1) \wedge \rho a \not\in (\iutraces(s_1) \cap \oetraces(s_2))\\
        \implies& \sigma = \rho\overline{a} \wedge \rho \in \traces(s_1) \wedge \rho a \not\in (\oetraces(s_1) \cap \iutraces(s_2)) \ttag{assumption (1)}\\
        \implies& \sigma = \rho\overline{a} \wedge \rho \in \traces(s_1)\wedge \rho a \in \oetraces(s_1) \wedge \rho a \not\in (\oetraces(s_1) \cap \iutraces(s_2))\ttag{Lemma~\ref{lem:oe-traces-form}}\\
        \implies& \sigma = \rho\overline{a} \wedge \rho \in \traces(s_1)\wedge \rho a \not\in \iutraces(s_2)\\
        \implies& \sigma \in \irtraces(s_2) \ttag{Lemma~\ref{lem:iu-trace-not-refused}}\\
        \implies& \sigma \in \rcl(\irtraces(s_2)) \ttag{Definition~\ref{def:ir-inclusion} of $\rcl$}
      \end{align*}
  \end{itemize}
  \qed
\end{proof}

\begin{example}
  \label{exa:iu-oe}
  We revisit Example~\ref{exa:ir}, and we should find the same IA to be related by $\iuoe$ as by $\ir$, by Theorem~\ref{the:ir-iuoe}.
  We find $axax \in \oetraces(s_B) \cap \iutraces(s_A)$ and $axax \not\in\oetraces(s_A)$, which confirms $s_B \not\iuoe s_A$.
  We also find $axa \in \oetraces(s_C) \cap \iutraces(s_A)$ and $axa \not\in\iutraces(s_C)$, confirming $s_C \not\iuoe s_A$.
  Finally, we find $\oetraces(s_D) \cap \iutraces(s_A) = \{\epsilon, a\}$, and these traces are both in $\iutraces(s_D)$ and in $\oetraces(s_A)$, which confirms $s_D \iuoe s_A$.
\end{example}

\section{Characterizing \emph{uioco}}
\label{sec:uioco}


An often used implementation relation for MBT on interface automata
(or labelled transition systems) is $\ioco$ \cite{Tre96,Tre08}.
\begin{shortversion}
For $\ioco$ it is assumed that implementations can be modelled as
\emph{input-enabled} interface automata, denoted by $\iots$.
\end{shortversion}
\begin{longversion}
For $\ioco$ it is assumed that implementations can be modelled as
\emph{input-enabled} interface automata, denoted by $\iots$ (testability hypothesis).
\end{longversion}
Moreover, quiescence is assumed to be observable.
Formally, quiescence is expressed by adding a fresh output
label $\delta \not \in L_s$ in all states where no outputs are possible.
(Since this changes the output alphabet, we will not assume the globally defined alphabets $L$, $I$ and $O$ for the remainder of this section.)

\begin{definition}
  \label{def:ioco}
Let $i \in \iots$ and $s \in \lts$
with $I_i = I_s$, $O_i = O_s$, and $\delta \not \in L_s$.
\begin{enumerate}
\item
  $\iots \isdef \{ s \in \lts \mid
  \forall q \in Q_s, \forall a \in I_s: q \xrightarrow{a} \}$
\item
  $\begin{array}[t]{lll}
    \Delta(s) & \isdef &
    (Q_s, I_s, O_s \cup \{\delta\}, T_\delta, q_s^0) \in \lts
    \text{ , with} \\
    T_\delta & \isdef &
    T_s \cup \{ (q,\delta,q) \mid q \in Q_s, \out(q) = \emptyset \} \\
  \end{array}$
\item
   $ i \ioco s \ifdef
    \forall \sigma \in \traces(\Delta(s)):
    \out(\after{\Delta(i)}{\sigma}) \subseteq
    \out(\after{\Delta(s)}{\sigma})$
\end{enumerate}
\end{definition}

A variation of $\ioco$ is $\uioco$ \cite{BiReTr04}.
Whereas $\ioco$ quantifies over all possible traces (with quiescence)
in $\traces(\Delta(s))$, including those where some input in the trace
may be underspecified, $\uioco$ only considers traces where all inputs
are never underspecified.
We take the generalized definition from \cite{VoTr13},
which also applies to non-input-enabled implementations.
This definition coincides with the original one \cite{BiReTr04}
if restricted to input-enabled implementations.

\begin{definition}
  \label{def:uioco}
Let $i,s \in \lts$ with $I_i = I_s$ and $O_i = O_s$, and let $\preccurlyeq$ denote the prefix relation on traces.
\begin{enumerate}
\item
  $\begin{array}[t]{llll}
    \utraces(s) & \isdef &
    \{ \sigma \in \traces(s) \mid &
    \forall \rho \in L_s^*, a \in I_s :
    \rho\,a \preccurlyeq \sigma \;\implies\; \\
    & & & a \in \inp(\after{s}{\rho}) \} \\
  \end{array}$
\item
  $\begin{array}[t]{lll}
  i \uioco s & \ifdef &
  \forall \sigma \in \utraces(\Delta(s)) : \\
    & & \begin{array}[t]{llll}
        & \out(\after{\Delta(i)}{\sigma}) & \subseteq &
          \out(\after{\Delta(s)}{\sigma}) \\
        \;\wedge\; & \inp(\after{\Delta(i)}{\sigma}) & \supseteq &
                     \inp(\after{\Delta(s)}{\sigma}) \\
        \end{array}
  \end{array}$
\end{enumerate}
\end{definition}

\begin{proposition}
  \label{pro:uioco}
\begin{shortversion}
  All $i \in \iots$ and $s \in \lts$ have $i \ioco s \;\implies\; i \uioco s$.
  There exist $i \in \iots$ and $s \in \lts$ with $i \uioco s \;{\notrel[6.5pt]\implies}\; i \ioco s$.
\end{shortversion}
\begin{longversion}
Let $i \in \iots$ and $s \in \lts$.
\begin{enumerate}
  \item
    $i \uioco s \iff  
    \forall \sigma \in \utraces(\Delta(s)) :
    \out(\after{\Delta(i)}{\sigma}) \subseteq
    \out(\after{\Delta(s)}{\sigma})$
\item
    $i \ioco s \;\implies\; i \uioco s$;~ $\ioco \;\neq\; \uioco$
\end{enumerate}
\end{longversion}
\end{proposition}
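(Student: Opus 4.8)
The plan is to prove the two items in turn; the crucial point throughout is that input-enabledness of $i$ survives the $\Delta$-construction and renders the input-inclusion conjunct of $\uioco$ vacuous. For item (1), the left-to-right direction of the equivalence merely drops a conjunct, so the content is the converse. Here I would observe that $\Delta(i)$ is again input-enabled: the construction adds only $\delta$-labelled self-loops, and $\delta$ is an output, so every state of $\Delta(i)$ retains all of its input transitions. Hence $\inp(q) = I_s$ for every state $q$ of $\Delta(i)$, and since $\inp$ of a set of states is read universally over those states (Definition~\ref{def:lts-definitions}), $\inp(\after{\Delta(i)}{\sigma}) = I_s$ for every $\sigma$ — including the vacuous case $\after{\Delta(i)}{\sigma} = \emptyset$. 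As $\inp(\after{\Delta(s)}{\sigma}) \subseteq I_s$ always holds, the conjunct $\inp(\after{\Delta(i)}{\sigma}) \supseteq \inp(\after{\Delta(s)}{\sigma})$ holds for every $\sigma$, so the output condition alone already entails $i \uioco s$.

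For the implication in item (2) I would combine this reformulation with the inclusion $\utraces(\Delta(s)) \subseteq \traces(\Delta(s))$, which is immediate from Definition~\ref{def:uioco}. Indeed $i \ioco s$ asserts the output inclusion for all $\sigma \in \traces(\Delta(s))$, hence in particular for all $\sigma \in \utraces(\Delta(s))$, and by the reformulation this is exactly $i \uioco s$.

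To separate the relations, i.e.\ to show $\ioco \neq \uioco$, I would exhibit a witness exploiting an input that is underspecified behind a nondeterministic branch. Take $I = \{a,b\}$ and $O = \{x,y\}$, and let $s$ branch on $a$ from its initial state into two states, one accepting $b$ and then emitting $x$, the other refusing $b$; all states but the $x$-emitting one are output-free. Then $ab \in \traces(\Delta(s))$ through the first branch, but $b \notin \inp(\after{s}{a})$ because the second branch refuses $b$, so $ab \notin \utraces(\Delta(s))$. Picking an input-enabled $i$ that stays quiescent along the prefixes $\epsilon$ and $a$, matching $s$ there, but emits $y$ after $ab$, gives $i \uioco s$ — every u-trace is respected — while $i \ioco s$ fails on the non-u-trace $ab$, where $y \notin \out(\after{\Delta(s)}{ab}) = \{x\}$.

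The main obstacle is not the logical content, which is light, but the careful treatment in item (1) of the universal reading of $\inp$ on sets of states together with the empty after-set and quiescence edge cases; once $\inp(\after{\Delta(i)}{\sigma}) = I_s$ is secured uniformly, both the reformulation and the implication follow at once, and only the separating example calls for an explicit, though routine, construction.
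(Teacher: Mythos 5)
Your proof is correct and takes the same (essentially the only) route as the paper, which simply states that the proposition ``follows directly from the definitions'': input-enabledness of $i$ is preserved by $\Delta$ and makes the input-inclusion conjunct of $\uioco$ vacuous, from which both the reformulation and the implication follow, and $\utraces(\Delta(s)) \subseteq \traces(\Delta(s))$ does the rest. Your separating witness is also structurally the same as the paper's own Example~\ref{exa:ioco-not-ir} ($s_E$, $s_F$ in Figure~\ref{fig:example-ioco-not-uioco}), just with a second input letter $b$ in place of the reused $a$.
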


\begin{proof}
  This follows directly from the definitions.
\end{proof}

The next step is to relate $\ioco$ and $\uioco$ to the relations defined
in the previous sections.
The main result of this section is that 
$\uioco$ is the same as input-failure refinement
(Theorem~\ref{the:uioco-ir}),
and thus also as input-universal-output-existential refinement
(Theorem~\ref{the:ir-iuoe}),
if in the latter quiescence is explicitly added.
The consequence is that $\ioco$ and input-failure refinement
do not coincide, following Proposition~\ref{pro:uioco}.
The difference between the two relations is the treatment
of specification traces which are not input-universal,
as shown in Example~\ref{exa:ioco-not-ir}.

%

\begin{lemma}
  \label{lem:utraces-iutraces}
  $\utraces(s) = \iutraces(s) \cap \traces(s)$.
\end{lemma}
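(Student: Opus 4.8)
The plan is to unfold both definitions and observe that, once membership in $\traces(s)$ is factored out, the two defining predicates are literally the same condition phrased in two different ways. First I would note that $\utraces(s)$ is by construction a subset of $\traces(s)$, so $\sigma \in \utraces(s)$ already entails $\sigma \in \traces(s)$; on the right-hand side this membership is supplied by the explicit intersection with $\traces(s)$. It therefore suffices to show that, for any $\sigma \in \traces(s)$, the universally quantified clause in the definition of $\utraces(s)$ is equivalent to $\sigma \in \iutraces(s)$.

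The key observation is a bijection between the two ways of quantifying over the input positions of $\sigma = \ell^1 \cdots \ell^n$. A prefix decomposition $\sigma = \rho\, a\, \tau$ with $a \in I_s$ and $\rho a \preccurlyeq \sigma$ corresponds exactly to a position $j \in \{1, \dots, n\}$ with $\ell^j = a \in I$, $\rho = \ell^1 \cdots \ell^{j-1}$, and $\tau = \ell^{j+1} \cdots \ell^n$, and conversely. Under this correspondence the clause $a \in \inp(\after{s}{\rho})$ from the $\utraces$ condition is syntactically identical to the clause $\ell^j \in \inp(\after{s}{\ell^1 \cdots \ell^{j-1}})$ from input-universality (Definition~\ref{def:input-universal}). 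Since $\sigma \in \traces(s)$, every such $\rho$ is itself a trace, so $\after{s}{\rho}$ is nonempty and no degenerate empty-set case arises. Hence the two universally quantified predicates hold for exactly the same $\sigma$.

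Chaining these equivalences gives: $\sigma \in \utraces(s)$ iff $\sigma \in \traces(s)$ and the $\utraces$ predicate holds, iff $\sigma \in \traces(s)$ and $\sigma \in \iutraces(s)$, iff $\sigma \in \iutraces(s) \cap \traces(s)$. I expect no genuine obstacle here; the statement is essentially a reconciliation of notation. The only point requiring care is the bookkeeping that matches the prefix-based quantification of $\utraces$ against the position-based quantification of input-universality, together with checking that the trace-membership requirement lines up on both sides — built into $\utraces$ on the left, and furnished by the explicit intersection on the right.
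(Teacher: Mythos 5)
Your proof is correct and takes the same route as the paper, which simply declares the lemma immediate from Definition~\ref{def:uioco} of $\utraces$ and Definition~\ref{def:input-universal} of $\iutraces$; you have merely spelled out the definitional unfolding (the prefix-versus-position correspondence) that the paper leaves implicit. Nothing further is needed.
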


\begin{proof}
  Immediate from Definition~\ref{def:uioco} of $\utraces$ and Definition~\ref{def:input-universal} of $\iutraces$.
  \qed
\end{proof}

\begin{theorem}
  \label{the:uioco-ir}
    $i \uioco s \;\iff\; \Delta(i) \;\ir\; \Delta(s)$
\end{theorem}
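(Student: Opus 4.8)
The plan is to work throughout with the $\Delta$-transformed systems $\Delta(i)$ and $\Delta(s)$ and to prove the two implications directly, leaning on the characterization $\utraces = \iutraces \cap \traces$ from Lemma~\ref{lem:utraces-iutraces}. Two preliminary observations drive everything. First, since $i \in \iots$ is input-enabled, so is $\Delta(i)$ (adding $\delta$-loops touches only outputs); hence no state of $\Delta(i)$ refuses any input, so input-failure traces never arise on the left and $\irtraces(\Delta(i)) = \traces(\Delta(i))$. Second, by Proposition~\ref{pro:uioco} the input clause of $\uioco$ is vacuous for input-enabled implementations, so $i \uioco s$ amounts to $\out(\after{\Delta(i)}{\sigma}) \subseteq \out(\after{\Delta(s)}{\sigma})$ for every $\sigma \in \utraces(\Delta(s))$.

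The technical heart is an auxiliary claim, assuming $i \uioco s$: every $\sigma \in \traces(\Delta(i)) \cap \iutraces(\Delta(s))$ lies in $\traces(\Delta(s))$. I would prove this by induction on the length of $\sigma$. In the step $\sigma = \rho\ell$ the induction hypothesis gives $\rho \in \traces(\Delta(s))$; if $\ell \in I$ then input-universality of $\sigma$ forces $\ell \in \inp(\after{\Delta(s)}{\rho})$ and hence $\rho\ell \in \traces(\Delta(s))$, while if $\ell$ is an output then $\rho \in \utraces(\Delta(s))$ by Lemma~\ref{lem:utraces-iutraces} and the $\uioco$ output-inclusion carries $\ell \in \out(\after{\Delta(i)}{\rho})$ over to $\Delta(s)$.

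For ($\Rightarrow$), I would show $\traces(\Delta(i)) \subseteq \rcl(\irtraces(\Delta(s)))$. Given $\sigma \in \traces(\Delta(i))$, if $\sigma \in \iutraces(\Delta(s))$ the claim places it in $\traces(\Delta(s)) \subseteq \irtraces(\Delta(s))$; otherwise I take the first position at which input-universality fails, writing $\sigma = \rho\, a\, \tau$ with $\rho \in \iutraces(\Delta(s))$, $a \in I$, and $a \notin \inp(\after{\Delta(s)}{\rho})$. The claim yields $\rho \in \traces(\Delta(s))$, so $\after{\Delta(s)}{\rho}$ is nonempty and some state there refuses $a$, giving $\rho\overline{a} \in \irtraces(\Delta(s))$ and therefore $\sigma \in \rcl(\irtraces(\Delta(s)))$ by definition of the closure. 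For ($\Leftarrow$), I would fix $\sigma \in \utraces(\Delta(s))$ and $x \in \out(\after{\Delta(i)}{\sigma})$; then $\sigma x \in \traces(\Delta(i)) = \irtraces(\Delta(i)) \subseteq \rcl(\irtraces(\Delta(s)))$, and since appending an output preserves input-universality, $\sigma x \in \iutraces(\Delta(s))$. Lemma~\ref{lem:intersection-irtraces-iutraces} then gives $\sigma x \in \traces(\Delta(s))$, i.e.\ $x \in \out(\after{\Delta(s)}{\sigma})$, establishing the output inclusion and hence $i \uioco s$.

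I expect the main obstacle to be the forward direction's treatment of specification traces that are not input-universal: the argument must locate the earliest underspecified input, certify via the inductive claim that the prefix up to it is a genuine trace of $\Delta(s)$ (so that $\rho\overline{a}$ is an actual input-failure trace rather than a vacuous one over an empty state set), and only then invoke the input-failure closure. The remaining bookkeeping --- input-enabledness collapsing $\irtraces(\Delta(i))$ to $\traces(\Delta(i))$, and output-appending preserving membership in $\iutraces$ --- is routine.
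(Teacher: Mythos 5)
Your proof establishes the theorem only under an assumption that is not among its hypotheses: you take $i \in \iots$ from the outset, and both pillars of your argument --- the collapse $\irtraces(\Delta(i)) = \traces(\Delta(i))$ and the vacuity of the input clause via Proposition~\ref{pro:uioco} --- are consequences of input-enabledness. Theorem~\ref{the:uioco-ir}, however, concerns the generalized $\uioco$ of Definition~\ref{def:uioco}, which the paper adopts precisely because it also applies to non-input-enabled implementations: there $i$ ranges over all of $\lts$, and $\uioco$ carries a non-vacuous second conjunct $\inp(\after{\Delta(i)}{\sigma}) \supseteq \inp(\after{\Delta(s)}{\sigma})$. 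That generality is needed downstream, e.g.\ in Corollary~\ref{cor:det-preserves-relations}, which invokes the theorem for arbitrary $s_1, s_2 \in \lts$, and it is exactly what the paper's own proof (a reduction to $\iuoe$ via Theorem~\ref{the:ir-iuoe}) handles.

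Concretely, two steps of your argument break once $i$ may refuse inputs. In your forward direction, $\irtraces(\Delta(i))$ then contains genuine input-failure traces $\sigma\overline{a}$, and placing them in $\rcl(\irtraces(\Delta(s)))$ requires the very input clause you discarded: in the case where $\sigma \in \iutraces(\Delta(s))$, one has $\sigma \in \utraces(\Delta(s))$ by your auxiliary claim and Lemma~\ref{lem:utraces-iutraces}, and then the clause $\inp(\after{\Delta(s)}{\sigma}) \subseteq \inp(\after{\Delta(i)}{\sigma})$ is what rules out $a \in \inp(\after{\Delta(s)}{\sigma})$ (which would contradict the refusal in $\Delta(i)$), yielding $\sigma\overline{a} \in \irtraces(\Delta(s))$; without that clause there is nothing to derive this from. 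In your backward direction, proving $i \uioco s$ now requires establishing $\inp(\after{\Delta(i)}{\sigma}) \supseteq \inp(\after{\Delta(s)}{\sigma})$ for every $\sigma \in \utraces(\Delta(s))$, a statement your proof never addresses because it is vacuous under your restriction (the paper obtains it by applying the $\iuoe$ inclusion to $\sigma\ell$ for $\ell \in \inp(\after{\Delta(s)}{\sigma})$). Within the input-enabled case your proof is correct and pleasantly direct --- the auxiliary induction, the earliest-underspecified-input decomposition, and the appeal to Lemma~\ref{lem:intersection-irtraces-iutraces} all check out, and you avoid the detour through $\iuoe$ --- but as written it proves a strictly weaker statement than the theorem.
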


\begin{proof}
  By Theorem~\ref{the:ir-iuoe}, it suffices to prove $i \uioco s \iff \Delta(i) \iuoe \Delta(s)$.
  
  $(\implies)$
  Assume $i \uioco s$ (1) and $\sigma \in \oetraces(\Delta(i)) \cap \iutraces(\Delta(s))$ (2).
  We must prove $\sigma \in \iutraces(\Delta(i)) \cap \oetraces(\Delta(s))$.
  The proof is by induction on the length of $\sigma$.
  
  For the base case $\sigma = \epsilon$, $\sigma \in \iutraces(\Delta(i))$ and $\sigma \in \oetraces(\Delta(s))$ hold trivially.
  For the inductive step, let $\sigma = \sigma'\, \ell$, and assume as induction hypothesis that $\sigma' \in \iutraces(\Delta(i)) \cap \oetraces(\Delta(s))$ (IH).
  
  We now establish that $\sigma' \in \utraces(\Delta(s))$ \;(3) holds, as follows:
  \begin{align*}
            & \sigma' \in \oetraces(s) \cup \iutraces(s) \ttag{Assumptions (2) and (IH)}\\
    \implies& \sigma' \in \traces(s) \cup \iutraces(s) \ttag{Lemma~\ref{lem:intersection-oe-iu-in-traces}}\\
    \implies& \sigma' \in \utraces(\Delta(s)) \ttag{Lemma~\ref{lem:utraces-iutraces}}
  \end{align*}
  Now, we distinguish two cases:
  \begin{itemize}
    \item 
      If $\ell \in I_i$, then $\sigma \in \oetraces(\Delta(s))$ holds by Definition~\ref{def:output-existential} of $\oetraces$.
      Furthermore,
      \begin{align*}
                & \sigma' \in \utraces(\Delta(s)) \wedge \sigma \in \iutraces(\Delta(s)) \ttag{Assumptions (2) and (3)}\\
        \implies& \sigma' \in \utraces(\Delta(s)) \wedge \ell \in \inp(\after{\Delta(s)}{\sigma})  \ttag{$\sigma = \sigma'\ell$ and Lemma~\ref{lem:oe-traces-form}}\\
        \implies& \ell \in \inp(\after{\Delta(i)}{\sigma'}) \ttag{Assumption (1)}\\
        \implies& \sigma \in \iutraces(\Delta(i)) \ttag{$\sigma = \sigma'\ell$}
      \end{align*}
    \item
      If $\ell \in O_i \cup \{ \delta \}$, then $\sigma \in \iutraces(\Delta(s))$ holds by Definition~\ref{def:input-universal} of $\iutraces$.
      Furthermore,
      \begin{align*}
                & \sigma' \in \utraces(\Delta(s)) \wedge \sigma \in \oetraces(\Delta(i))  \ttag{Assumptions (2) and (3)}\\
        \implies& \sigma' \in \utraces(\Delta(s)) \wedge \ell \in \out(\after{\Delta(i)}{\sigma})  \ttag{$\sigma = \sigma'\ell$ and Lemma~\ref{lem:oe-traces-form}}\\
        \implies& \ell \in \out(\after{\Delta(s)}{\sigma'}) \ttag{Assumption (1)}\\
        \implies& \sigma \in \oetraces(\Delta(s)) \ttag{$\sigma = \sigma'\ell$ and Lemma~\ref{lem:oe-traces-form}}
      \end{align*}
  \end{itemize}
  
  $(\impliedby)$
  Assume $\oetraces(\Delta(i)) \cap \iutraces(\Delta(s)) \subseteq \iutraces(\Delta(i)) \cap \oetraces(\Delta(s))$ (1).
  To prove $i \uioco s$, we assume some $\sigma \in \utraces(\Delta(s))$ (2), for which we will show  $\out(\after{\Delta(i)}{\sigma}) \subseteq \out(\after{\Delta(s)}{\sigma})$ and $\inp(\after{\Delta(i)}{\sigma}) \supseteq \inp(\after{\Delta(s)}{\sigma})$.
  Assumption (2) implies $\sigma \in \iutraces(\Delta(s)) \cap \oetraces(\Delta(s))$ (3) by Lemmas~\ref{lem:intersection-oe-iu-in-traces} and~\ref{lem:utraces-iutraces}.
  
  First, we establish that $\sigma \in \iutraces(\Delta(i))$ (4) holds, shown by induction to the length of $\sigma$.
  The base case $\sigma = \epsilon$ trivially holds, and for the inductive step, let $\sigma = \sigma'\ell$ and assume as induction hypothesis that $\sigma' \in \iutraces(\Delta(i))$ (IH).
  Now if $\ell$ is an output, the proof is trivial, so assume $\ell$ is an input.
  Then we distinguish two cases: $\sigma \in \oetraces(\Delta(i))$ or $\sigma \not\in \oetraces(\Delta(i))$.
  In the former case, $\sigma \in \iutraces(\Delta(i))$ follows from assumptions (1) and (3).
  In the latter case, $\sigma' \not\in \oetraces(\Delta(i))$ also holds, and Lemma~\ref{lem:intersection-oe-iu-in-traces} then implies that $\sigma' \not\in \traces(\Delta(i))$, so then $\ell \in \inp(\after{\Delta(i)}{\sigma'})$ vacuously holds, and together with (IH) this implies $\sigma \in \iutraces(\Delta(i))$.
  
  We now prove $\out(\after{\Delta(i)}{\sigma}) \subseteq \out(\after{\Delta(s)}{\sigma})$, by assuming some $\ell \in \out(\after{\Delta(i)}{\sigma})$ (5) and proving $\ell \in \out(\after{\Delta(s)}{\sigma})$.
  Assumption (5) implies $\sigma\ell \in \oetraces(\Delta(i))$, and assumptions (3) and (5) implies $\sigma\ell \in \iutraces(\Delta(s))$, so then assumption (1) implies $\sigma\ell \in \oetraces(\Delta(s))$.
  This proves $\ell \in \out(\after{\Delta(s)}{\sigma})$.
  
  We also prove $\inp(\after{\Delta(i)}{\sigma}) \supseteq \inp(\after{\Delta(s)}{\sigma})$, by assuming some $\ell \in \inp(\after{\Delta(s)}{\sigma})$ (6) and proving $\ell \in \inp(\after{\Delta(i)}{\sigma})$.
  This holds vacuously if $\sigma \not \in \traces(\Delta(i))$, so assume $\sigma \in \traces(\Delta(i))$ holds.
  Then $\sigma \in \oetraces(\Delta(i))$ also holds by Lemma~\ref{lem:oe-traces-form}, and then also $\sigma\ell \in \oetraces(\Delta(i))$ (7) holds.
  Assumptions (3) and (6) imply $\sigma\ell \in \oetraces(\Delta(s))$, so together with (7), this implies $\sigma\ell \in \iutraces(\Delta(i))$.
  This proves $\ell \in \inp(\after{\Delta(i)}{\sigma})$.
  \qed
\end{proof}

\begin{example}
  \label{exa:ioco-not-ir}
  
  Consider $s_E \in \iots$ and $s_F \in \lts$
  with $I_E = I_F = \{a\}$ and  $O_E = O_F = \{x,y\}$
  in Fig.~\ref{fig:example-ioco-not-uioco},
  where quiescence has been explicitly added.
  Implementation $s_E$ is not $\ioco$-conformant to specification $s_F$:
  if we consider the trace $aa \in \traces(\Delta(s_F))$
  then $y \in \out(\after{\Delta(s_E)}{aa})$
  but $y \not\in \out(\after{\Delta(s_F)}{aa})$.

  However, trace $aa$ does not disprove $\uioco$-conformance,
  since it is not $s_F$-input-universal:
  $aa \notin \utraces(\Delta(s_F))$,
  since $a \not\in \inp(\after{\Delta(s_F)}{a})$.
  In fact, $s_E \uioco s_F$ holds, which we prove via
  Theorems~\ref{the:ir-iuoe} and \ref{the:uioco-ir}
  by showing that $\Delta(s_E) \iuoe \Delta(s_F)$.
  We first establish that
  $\oetraces(\Delta(s_E)) \cap \iutraces(\Delta(s_F)) = \delta^* + \delta^* a \delta^*$:
  extending any trace $\sigma$ in this set by an output $\ell$ other than $\delta$ causes
  $\sigma\ell \not\in \oetraces(\Delta(s_E))$, and extending it by an input
  $\ell$ causes $\sigma\ell \not\in \iutraces(\Delta(s_F))$.
  Clearly, any trace in $\delta^* + \delta^* a \delta^*$ is also in
  $\iutraces(\Delta(s_E))$ and in $\oetraces(\Delta(s_F))$,
  so $\Delta(s_E) \iuoe \Delta(s_F)$ holds.
  It follows that $\Delta(s_E) \ir \Delta(s_F)$
  and $s_E \uioco s_F$ hold.
  
  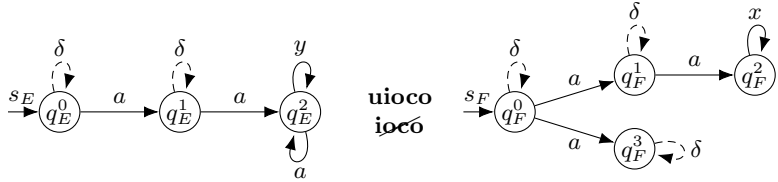
\begin{figure}[h!]
    \tikzlts
    \begin{center}
      \begin{tikzpicture}[node distance=16mm]
  \node [named,initial,initial text=] (0) {$q_E^0$};
  \node (init-text) [above left=-2mm and 0 of 0] {$s_E$};
  \node [named] (1) [right of=0] {$q_E^1$};
  \node [named] (2) [right of=1] {$q_E^2$};
  \path
  (0) edge [densely dashed, loop above] node {$\delta$} (0)
  (0) edge node [above] {$a$} (1)
  (1) edge [densely dashed, loop above] node {$\delta$} (1)
  (1) edge node [above] {$a$} (2)
  (2) edge [loop above] node {$y$} (2)
  (2) edge [loop below] node {$a$} (2)
  ;

  \node [align=center] [right=37mm of 0] {$\uioco$\\$\notrel{\ioco}$};
  \node [named,initial,initial text=] [right=55mm of 0] (0) {$q_F^0$};
  \node (init-text) [above left=-2mm and 0 of 0] {$s_F$};
  \node [named] (1a) [above right=1mm and 12mm of 0] {$q_F^1$};
  \node [named] (1b) [below right=1mm and 12mm of 0] {$q_F^3$};
  \node [named] (2) [right of=1a] {$q_F^2$};
  \path
  (0) edge [densely dashed, loop above] node {$\delta$} (0)
  (0) edge node [above] {$a$} (1a)
  (0) edge node [below] {$a$} (1b)
  (1a) edge [densely dashed, loop above] node {$\delta$} (1a)
  (1a) edge node [above] {$a$} (2)
  (1b) edge [densely dashed, loop right] node {$\delta$} (1b)
  (2) edge [loop above] node {$x$} (2)
  ;
      \end{tikzpicture}
    \end{center}
    \caption{$s_E \in \iots$ and $s_F \in \lts$.
             The dashed transitions are added by $\Delta$.}
    \label{fig:example-ioco-not-uioco}
  \end{figure}

\end{example}

\begin{longversion}
A last remark concerns the similarities and difference
between input-refusals (Def.~\ref{def:refusal} and~\ref{def:ir-inclusion})
and output refusals, or quiescence (Def.~\ref{def:ioco}.2).
Both are defined as refusals, i.e., some actions that can be refused
in some state, but
each input is treated separately, $a \not \in \inp(\after{s}{\sigma})$ for some $a$,
whereas outputs are only treated collectively, $x \not\in \out(\after{s}{\sigma})$ for all $x$.
Moreover, output refusals can occur anywhere in a trace
(cf.\ $\ldots$repetitive quiescence$\ldots$ \cite{Tre96}):
after quiescence a next input can occur.
Input refusal are final, i.e., they always occur as the last action
of a trace. 
As such, input refusals behave analogous to \emph{failures semantics}
in the \emph{linear time -- branching time spectrum} \cite{vG01},
whereas quiescence is analogous to \emph{failure-trace semantics}.
Relations where quiescence always occurs as last action in a trace
have also been defined e.g, \emph{quiescent-trace preorder}
in the context of I/O-Automata \cite{Vaa91}.
We might add \emph{repetitive input-refusals}, which would lead
stronger refinement relations, e.g., we would be able to discriminate
between $s_P$ and $s_Q$ in Fig.~\ref{fig:example-repetitive-input-refusal}:
$s_P$ and $s_Q$ are input-failure equivalent,
but the \emph{repetitive input-refusal trace}
$a\,\overline{b}\,\overline{c}$ would be able to tell them apart.

  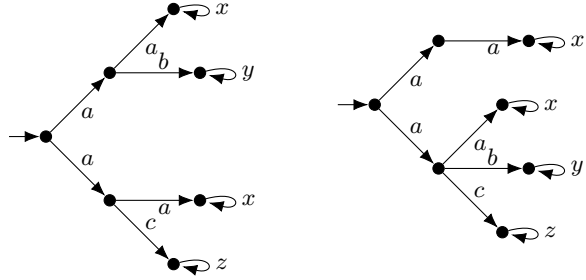
\begin{figure}[h!]
    \tikzlts
    \begin{center}
      $\vcenter{\hbox{\begin{tikzpicture}[node distance=12mm]
  \node [state,initial,initial text=] (0) {};
  \node (init-text) [above left=-2mm and 0 of 0] {};
  \node [state] (1a) [above right of=0]  {};
  \node [state] (1b) [below right of=0]  {};
  \node [state] (2a) [above right of=1a] {};
  \node [state] (2b) [right of=1a]       {};
  \node [state] (3a) [right of=1b]       {};
  \node [state] (3b) [below right of=1b] {};
  \path
  (0) edge node [below right=-1mm] {$a$} (1a)
  (0) edge node [above right=-1mm] {$a$} (1b)
  (1a) edge node [below right=-1mm] {$a$} (2a)
  (1a) edge node [above right=-1mm] {$b$} (2b)
  (1b) edge node [below right=-1mm] {$a$} (3a)
  (1b) edge node [above right=-1mm] {$c$} (3b)
  (2a) edge [loop right] node {$x$} (2a)
  (2b) edge [loop right] node {$y$} (2b)
  (3a) edge [loop right] node {$x$} (3a)
  (3b) edge [loop right] node {$z$} (3b)
  ;
      \end{tikzpicture}}}$
      \qquad
      $\vcenter{\hbox{\begin{tikzpicture}[node distance=12mm]
  \node [state,initial,initial text=] (0) {};
  \node (init-text) [above left=-2mm and 0 of 0] {};
  \node [state] (1a) [above right of=0]  {};
  \node [state] (1b) [below right of=0]  {};
  \node [state] (2)  [right of=1a]       {};
  \node [state] (3a) [above right of=1b] {};
  \node [state] (3b) [right of=1b]       {};
  \node [state] (3c) [below right of=1b] {};
  \path
  (0)  edge node [below right=-1mm] {$a$} (1a)
  (0)  edge node [above right=-1mm] {$a$} (1b)
  (1a) edge node [below right=-1mm] {$a$} (2)
  (1b) edge node [below right=-1mm] {$a$} (3a)
  (1b) edge node [above right=-1mm] {$b$} (3b)
  (1b) edge node [above right=-1mm] {$c$} (3c)
  (2)  edge [loop right] node {$x$} (2)
  (3a) edge [loop right] node {$x$} (3a)
  (3b) edge [loop right] node {$y$} (3b)
  (3c) edge [loop right] node {$z$} (3c)
  ;
      \end{tikzpicture}}}$
    \end{center}
    \caption{IA $s_P$ and IA $s_Q$.}
    \label{fig:example-repetitive-input-refusal}
  \end{figure}
  
\end{longversion}

\section{Game Characterizations}
\label{sec:alternating-trace-inclusion}

Ordinary trace containment can be seen as a game between a protagonist and antagonist: the antagonist chooses a path in the left-hand model, and the protagonist should find a path in the right-hand model having the same trace. 
Trace containment then holds if the protagonist can always win.
Alur et al.~\cite{Aluetal98} generalized this game to alternating-trace containment, which we will now compare to input-failure refinement, $\ioco$ and $\uioco$.

Alternating-trace containment acts on alternating transition systems.
Such a model is parameterized by a set of \emph{agents}, which are either collaborative or adversarial.
Every agent can restrict the possible transitions by choosing a \emph{strategy}.
If every agent has chosen a strategy, this yields a unique path following these choices.
The game of alternating-trace containment on models $s_1$ and $s_2$ is then played as follows.
First, the antagonist chooses a strategy for the collaborative agents in $s_1$.
Second, the protagonist chooses a matching strategy for the collaborative agents in $s_2$.
Third, the antagonist chooses a strategy for the adversarial agents in $s_2$, and fourth, the protagonist matches this choice for the adversarial agents in $s_1$.
In this way, the protagonist must ensure that the path in $s_1$ following these strategies has the same trace as the path in $s_2$.
Again, $s_1$ is alternating-trace contained in $s_2$ if the protagonist can always win.

\subsection{Alternating-Trace Containment for IA}

The agents in~\cite{Aluetal98} have no predefined roles, and any number of them may be defined.
In our setting, we instantiate a fixed number of agents to reflect the input-output-behaviour of a software system.
In particular, we introduce agents controlling the respective inputs and outputs, similarly to~\cite{BoSt18,AlHe01}.
In practice, a system itself acts as an agent controlling its outputs, whereas the environment serves as an agent controlling the inputs of the system.
The system and environment may also abstain from performing an action.


\begin{definition}
  \label{def:env-sys-strategies}
  \label{def:action-strategies}
  Let $s \in \lts$.
  An \emph{output strategy} for $s$ is a partial function $\stratsys : \Path(s) \rightharpoonup O$, 
  such that $\stratsys(\pi)\downarrow$ implies $\stratsys(\pi) \in \out(\last(\pi))$ for all $\pi$ (where $\stratsys(\pi)\downarrow$ means that $\stratsys(\pi)$ is defined).
  An \emph{input strategy} for $s$ is a partial function $\stratenvati : \Path(s) \rightharpoonup I$, 
  such that $\stratenvati(\pi)\downarrow$ implies $\stratenvati(\pi) \in \inp(\last(\pi))$.
  The domains of output and input strategies for $s$ are $\Stratsys(s)$ and $\Stratenvati(s)$ respectively.
\end{definition}

A system cannot only choose which outputs it produces, but also which transition it takes for a given input or output, in the case of non-determinism.
It also chooses how to resolve race conditions, that is, whether to take an input or an output transition, if both the input and output strategy choose an action.
To this end we introduce a \emph{determinization strategy} and a \emph{race condition strategy}.

\begin{definition}
  \label{def:det-strategy}
  Let $s \in \lts$.
  A \emph{determinization strategy} for $s$ is a partial function $\stratdet: \Path(s) \times L \rightharpoonup Q_s$ satisfying:
  (a) $q^n \xrightarrow{\ell}$ implies $\stratdet(q^0\ell^0\cdots q^n, \ell) \downarrow$, and
  (b) $\stratdet(q^0\ell^0\cdots q^n, \ell) = q^{n+1}$ implies $q^n \xrightarrow{\ell} q^{n+1}$.
  A \emph{race condition strategy} for $s$ is a function $\stratrc: \Path(s) \rightarrow \{0,1\}$, where $0$ denotes choosing the input in
  case of a race, whereas $1$ denotes choosing the output.
  The respective domains of determinization and race condition strategies for $s$ are denoted $\Stratdet(s)$ and $\Stratrc(s)$.
\end{definition}

\begin{longversion}
Note that for deterministic interface automata only a single, trivial determinisation strategy exists, so then $|\Stratdet(s)| = 1$.
\end{longversion}

The combination of an input strategy, an output strategy, a determinization strategy and a race condition strategy uniquely determines a path through an interface automaton.

\begin{definition} 
  \label{def:outcomes}
  Let $\strat = \langle \stratenvati, \stratsys, \stratdet, \stratrc \rangle \in \Stratenvati(s) \times \Stratsys(s) \times \Stratdet(s) \times \Stratrc(s)$ for $s \in \lts$, and let function $\operatorname{next}_{s, \strat} : \Path(s) \rightarrow \Path(s)$ be given by \[\operatorname{next}_{s, f}(\pi) =
  \begin{cases}
    \pi  \stratenvati(\pi) \stratdet(\pi,\stratenvati(\pi)) & \mbox{if } \stratenvati(\pi)\!\downarrow \wedge (\stratsys(\pi) \!\downarrow \Rightarrow \stratrc(\pi) = 0 )\\
   \pi  \stratsys(\pi)  \stratdet(\pi,\stratsys(\pi))& \mbox{if } \stratsys(\pi)\!\downarrow \wedge (\stratenvati(\pi) \!\downarrow \Rightarrow \stratrc(\pi) = 1 )\\
  \pi & \mbox{otherwise}
  \end{cases}
  \]
  Note that the infinite sequence
  $\pi_0, \pi_1,\ldots$ with
  $\pi_0 = q_s^0$ and
  $\forall j>0 :\pi_j = \operatorname{next}_{s, \strat}(\pi_{j-1})$ forms a chain of finite paths ordered by prefix.
  The \emph{outcome} of $s$ and $f$, notation $\Outc_s(f)$, is the limit under prefix ordering of $\pi_0, \pi_1,\ldots$.
  Observe that $\Outc_s(f)$ is either a finite path $\pi$ with $\operatorname{next}_{s, f}(\pi) = \pi$, or an infinite path.
\end{definition}

A software system is assumed to control its own outputs, as well as non-determinism as race conditions, so the corresponding strategies are collaborative.
Inputs are chosen by the environment, so the input strategy is adversarial.
This leads to the following instantiation of alternating-trace containment for IA.

\begin{definition}
  Let $s_1, s_2 \in \lts$.
  Then $s_1$ is \emph{alternating-trace contained} in $s_2$, denoted $s_1 \ati s_2$, if
  \begin{align*}
    &\forall \stratsys^1 \in \Stratsys(s_1), \forall \stratdet^1 \in \Stratdet(s_1), \forall \stratrc^1 \in \Stratrc(s_1),\\
    &\exists \stratsys^2 \in \Stratsys(s_2), \exists \stratdet^2 \in \Stratdet(s_2), \exists \stratrc^2 \in \Stratrc(s_2),\\
    &\forall \stratenvati^2 \in \Stratenvati(s_2), \exists \stratenvati^1 \in \Stratenvati(s_1):
    \\&\;
    \trace(\Outcati[1]) = \trace(\Outcati[2])
    \end{align*}
\end{definition}

Having defined alternating-trace containment for IA, we can now disprove the conjecture in~\cite{BoSt18}:
Alternating-trace containment does not coincide with $\ioco$, nor with $\uioco$, $\ir$ or $\iuoe$, as shown by Example~\ref{exa:problems-ati-with-quantifier-order}.

\begin{proof}
  Example~\ref{exa:problems-ati-with-quantifier-order} shows that $s_G$ is related to $s_H$ by $\iuoe$, $\ir$, $\ioco$ and $\uioco$, but not by $\ati$.
\end{proof}

\begin{example}
  \label{exa:problems-ati-with-quantifier-order}
  Consider IA $s_G$ and $s_H$ in Figure~\ref{fig:problems-ati-with-quantifier-order}.
  IA $s_G$ is input-enabled, so $\ioco$ can be applied.
  Both IA have an output transition in every state, so $\Delta$ has no effect, which implies that $\ir$ and $\uioco$ coincide, even without explicitly applying $\Delta$.
  All traces of $s_H$ are input-universal, so $\uioco$ and $\ioco$ also coincide.
  
  Then $\oetraces(s_G) \cap \iutraces(s_H)$ are the traces in $z^*az^*ax^*$ and $z^*az^*by^*$, and all prefixes of those traces.
  These are included in $\iutraces(s_G) \cap \oetraces(s_H)$, so $s_G \iuoe s_H$ holds, and $s_G$ is thus also related to $s_H$ by relations $\ir$, $\uioco$ and $\ioco$.
  
\begin{shortversion}
  Now, let us play the game of alternating-trace containment.
  The antagonist chooses outputs $x$ and $y$ in states $q_G^2$ and $q_G^{2\prime}$ respectively, and resolves race conditions such that states $q_G^2$ or $q_G^{2\prime}$ are reached.
  In this way, the antagonist could enforce traces $aax$ and $aby$, so the protagonist must match this with outputs $x$ and $y$ in respectively $q_H^2$ and $q_H^{2\prime}$.
  The protagonist should also choose a determinization strategy.
  Only two choices are possible: from $q_H^0$ it can make a transition to either $q_H^1$ or to $q_H^{1\prime}$.
  Suppose that the protagonist chooses $q_H^1$.
  The antagonist must then choose an input strategy, and it chooses to pick input $a$ after path $q_H^0$, and input $b$ after any path with trace $a$.
  Now, all strategies for $s_H$ have been chosen and the outcome is $q_H^0 a q_H^1 b q_H^3$, with trace $ab$.
  The protagonist should choose a matching strategy to pick inputs in $s_G$, but it cannot: if it picks inputs to produce trace $ab$, then the resulting trace in $s_G$ will be $aby$ instead of $ab$.
  The protagonist thus loses the game.
  Had the protagonist chosen a determininization strategy to $q_H^{1\prime}$, then he would lose in the same manner.
\end{shortversion}
\begin{longversion}
  Now, let us play the game of alternating-trace containment.
  The antagonist chooses a strategy which picks output $x$ after path $q_G^0\,aq_G^1\,aq_G^2$, output $y$ after $q_G^0\,aq_G^1\,bq_G^{2\prime}$, and no output otherwise.
  It resolves race conditions in $s_G$ by always choosing inputs in states $q_G^0$ and $q_G^1$.
  Since $s_G$ is deterministic, no determinization strategy needs to be chosen.
  
  The protagonist should now choose an output strategy.
  It never chooses $z$, since $z$ is also never chosen by the antagonist.
  Suppose the protagonist does not choose $x$ after $q_H^0\,aq_H^1\,aq_H^2$, then the protagonist would lose: the antagonist can then pick inputs following trace $aa$.
  This would unavoidably lead to trace $aa$ in $s_H$, but the protagonist cannot match this trace in $s_G$: it should then also pick inputs in $s_G$ following trace $aa$, but this would result in an outcome with trace $aax$.
  Thus, the protagonist should choose an output strategy that picks $x$ after any path with trace $aa$.
  However, it can choose $x$ only after path $q_H^0 a q_H^1 a q_H^2$, since this is the only path after which $x$ is enabled.
  In the same manner, it should also pick $y$ after $q_H^0 a q_H^{1\prime} b q_H^{2\prime}$.

  Furthermore, the protagonist should produce a determinization strategy.
  Only two choices are possible: from $q_H^0$, it can make a transition to either $q_H^1$ or to $q_H^{1\prime}$.
  Suppose that the protagonist chooses $q_H^1$.
  The antagonist must then choose an input strategy, and it chooses one which picks input $a$ after path $q_H^0$, and input $b$ after any path with trace $a$.
  Now, all strategies for $s_H$ have been chosen: they follow the path $q_H^0 a q_H^1 b q_H^4$, so they produce trace $ab$.
  The protagonist should then choose a matching strategy to pick inputs in $s_G$, but it cannot: it should pick at least input $a$ after path $q_G^0$ and input $b$ after path $q_G^0 a q_G^1$ to match trace inputs $a$ and $b$, but then the strategies for $s_G$ follow path $q_G^0 a q_G^1 b q_G^{2\prime} y q_G^{3\prime}$, which produces trace $aby$.
  The protagonist has thus lost the game.
  Had the protagonist chosen a different determinization strategy to state $q_H^{1\prime}$, then it would have lost in the same manner, so no winning strategy exists.
\end{longversion}

  Thus, $s_G \not\ati s_H$ holds, even though $s_G$ is related to $s_H$ by relations $\ir$, $\iuoe$, $\uioco$ and $\ioco$.
  The intuitive reason is that the protagonist must already choose a determinization strategy for $s_H$, before the antagonist chooses an input strategy for $s_H$.
  Would this order of turns be reversed, then the protagonist could win the game for this example.
  The protagonist could then choose the determinization strategy for $s_H$ such that either trace $aax$ or trace $aby$ is matched, depending on which inputs are chosen by the antagonist in $s_G$.
  
  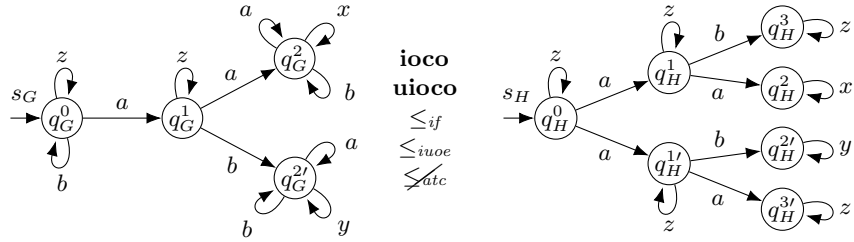
\begin{figure}
    \tikzlts
    \begin{center}
      \begin{tikzpicture}[node distance=16mm]
	\node [named,initial,initial text=] (0) {$q_G^0$};
	\node (init-text) [above left=-1mm and 0 of 0] {$s_G$};
	\node [named] (1) [right of=0] {$q_G^1$};
	\node [named] (2) [above right=4mm and 11mm of 1] {$q_G^2$};
	\node [named] (2p) [below right=4mm and 11mm of 1] {$q_G^{2\prime}$};
	\path
  (0) edge [loop below] node {$b$} (0)
	(0) edge node [above] {$a$} (1)
	(1) edge node [above left=0mm and -1mm] {$a$} (2)
	(1) edge node [below left=0mm and -1mm] {$b$} (2p)
	(2) edge [in=30,out=60,loop] node [above right] {$x$} (2)
  (2) edge [in=120,out=150,loop] node [above left] {$a$} (2)
  (2) edge [in=-60,out=-30,loop] node [right=1mm] {$b$} (2)
	(2p) edge [in=-60,out=-30,loop] node [below right] {$y$} (2p)
  (2p) edge [in=30,out=60,loop] node [right=1mm] {$a$} (2p)
  (2p) edge [in=-150,out=-120,loop] node [below left] {$b$} (2p)
  (0) edge [loop above] node {$z$} (0)
  (1) edge [loop above] node {$z$} (1)
	;
	\node [align=center] [right=40mm of 0] {$\ioco$\\$\uioco$\\$\ir$\\$\iuoe$\\$\notrel{\ati}$};
	
	\node [named,initial,initial text=] [right=60mm of 0] (0) {$q_H^0$};
	\node (init-text) [above left=-1mm and 0 of 0] {$s_H$};
	\node [named] (1) [above right=2mm and 11mm of 0] {$q_H^1$};
	\node [named] (2) [below right=-2mm and 11mm of 1] {$q_H^2$};
	\node [named] (3) [above right=2mm and 11mm of 1] {$q_H^3$}; 
	\node [named] (1p) [below right=2mm and 11mm of 0] {$q_H^{1\prime}$};
	\node [named] (2p) [above right=-2mm and 11mm of 1p] {$q_H^{2\prime}$};
	\node [named] (3p) [below right=2mm and 11mm of 1p] {$q_H^{3\prime}$};
	\path
	(0) edge node [above left=0mm and -1mm] {$a$} (1)
	(0) edge node [below left=0mm and -1mm] {$a$} (1p)
	(1) edge node [below left=0mm and -1mm] {$a$} (2)
	(1) edge node [above left=0mm and -1mm] {$b$} (3)
	(1p) edge node [above left=0mm and -1mm] {$b$} (2p)
	(2) edge [loop right] node {$x$} (2)
	(2p) edge [loop right] node {$y$} (2p)
	(1p) edge node [below left=0mm and -1mm] {$a$} (3p)
	
  (0) edge [loop above] node {$z$} (0)
  (1) edge [loop above] node {$z$} (1)
  (3) edge [loop right] node {$z$} (3)
  (1p) edge [loop below] node {$z$} (1p)
  (3p) edge [loop right] node {$z$} (3p)
	;
      \end{tikzpicture}
    \end{center}
    \caption{IA $s_G$ and $s_H$.}
    \label{fig:problems-ati-with-quantifier-order}
  \end{figure}
\end{example}

The analysis of $s_G \not\ati s_H$ in Example~\ref{exa:problems-ati-with-quantifier-order} is rather complex.
An intuitive experiment showing the difference between $s_G$ and $s_H$ would improve understanding of $\ati$, but unfortunately, no observational interpretation of alternating-trace containment is given in~\cite{Aluetal98}.
\begin{longversion}
Clearly, experiments characterizing (alternating) simulation~\cite{Abr87} suffice, but we will show in Section~\ref{sec:alternating-simulation} that alternating simulation is stronger than alternating-trace containment.
Therefore, such experiments are too strong: they distinguish IA for which alternating-trace containment holds.
\end{longversion}

For non-input-enabled IA, another difference between alternating-trace containment and the other relations is shown in Example~\ref{exa:problems-ati-trace-based}.

\begin{example}
  \label{exa:problems-ati-trace-based}
  Consider IA $s_I$ and $s_J$ in Figure~\ref{fig:problems-ati-trace-based}.
  Clearly, $s_I \iuoe s_J$ holds, since $\oetraces(s_I) \cap \iutraces(s_J)$ are the traces $y^* + y^*xy^*$ and their prefixes, which are in $\iutraces(s_I)$ and $\oetraces(s_J)$.
  Therefore, $s_I \ir s_J$ and $s_I \uioco s_J$ hold as well by the same resoning as in Example~\ref{exa:problems-ati-with-quantifier-order}.
  Since $s_I$ is not input-enabled, $\ioco$ is not defined.

\begin{shortversion}
  Now, we play the game of alternating-trace containment.
  In $s_I$, the antagonist first picks output $x$, and then abstains performing an output, so the trace of the outcome in $s_I$ is $x$.
  Clearly, the protagonist can match this in $s_J$.
  But no matter the determinization strategy he chooses, if the antagonist picks inputs $a$ and $b$ in $s_J$ whenever possible, then the protagonist cannot match these inputs in $s_I$.
\end{shortversion}
\begin{longversion}
  Now, we play the game of alternating-trace containment.
  The antagonist picks an output strategy with $\stratsys^G(q_I^0) = x$ and $\stratsys^G(q_I^0 x q_I^1) = \bot$, so it follows that the trace of the outcome in $s_I$ is $x$.
  
  Now, the protagonist must choose a determinization strategy, an output strategy and a race condition strategy for $s_J$.
  To match trace $x$, the output strategy must have $\stratsys^G(q_J^0) = x$.
  Suppose the determinization strategy has $\stratdet^G(q_J^0, x) = q_J^1$.
  Then to ensure that $xy$ is not an outcome, $\stratsys^G(q_I^0 x q_J^1) = \bot$ must be chosen.
  But then the antagonist can choose $\stratenvati^G(q_J^0 x q_J^1) = a$, resulting in an outcome with trace $xa$.
  Likewise, if the determinization strategy chooses the lower branch to $q_J^{1\prime}$, the antagonist can enforce trace $xb$.
  Traces $xa$ and $xb$ cannot be matched in $s_I$, so the protagonist loses and $s_I \not\ati s_J$ holds.
\end{longversion}

  The intuitive reasoning is that the antagonist may choose inputs $a$ and $b$ after paths $q_J^0 x q_J^1$ and $q_J^0 x q_J^{1\prime}$, respectively, whereas $a$ and $b$ are not universally enabled after trace $x$.
  Would the antagonist pick only inputs in $\inp(\after{s_J}{x})$, then the protagonist could win the game.

  \begin{figure}
    \tikzlts
    \begin{center}
      \begin{tikzpicture}[node distance=16mm]
        \node (center) {};
        \node [named,initial,initial text=] (0) {$q_I^0$};
        \node (init-text) [above left=-2mm and 0 of 0] {$s_I$};
        \node [named] (1) [right of=0] {$q_I^1$};
        \path
        (0) edge node [above] {$x$} (1)
        (0) edge [loop above] node {$y$} (0)
        (1) edge [loop above] node {$y$} (1)
        ;
        
        \node [align=center] [right=25mm of center] {$\uioco$\\$\ir$\\$\iuoe$\\$\notrel{\ati}$};
        
        \node [named,initial,initial text=] [right=45mm of center] (0) {$q_J^0$};
        \node (init-text) [above left=-2mm and 0 of 0] {$s_J$};
        \node [named] (1) [above right of=0] {$q_J^1$};
        \node [named] (2) [right of=1] {$q_J^2$};
        \node [named] (1p) [below right of=0] {$q_J^{1\prime}$};
        \node [named] (2p) [right of=1p] {$q_J^{2\prime}$};
        \path
        (0) edge node [above left] {$x$} (1)
        (1) edge node [below] {$a$} (2)
        (0) edge node [below left] {$x$} (1p)
        (1p) edge node [above] {$b$} (2p)
        
        (0) edge [loop above] node {$y$} (0)
        (1) edge [loop below] node {$y$} (1)
        (2) edge [loop below] node {$y$} (2)
        (1p) edge [loop above] node {$y$} (1p)
        (2p) edge [loop above] node {$y$} (2p)
        ;
      \end{tikzpicture}
    \end{center}
    \caption{Interface automata $s_I$ and $s_J$.}
    \label{fig:problems-ati-trace-based}
  \end{figure}
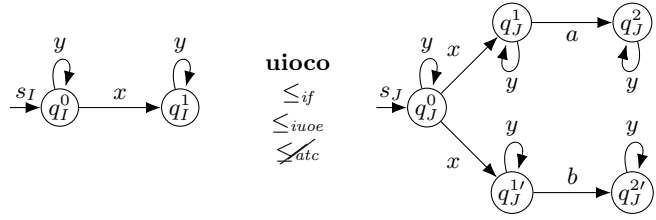
\end{example}

\subsection{The Game of Input-Failure Refinement}

Based on Examples~\ref{exa:problems-ati-with-quantifier-order} and~\ref{exa:problems-ati-trace-based}, we change the rules of the game of alternating-trace containment, in order to obtain a slightly weaker relation with a clearer observational meaning.
First, we argue that an environment usually cannot observe the precise state of a system, and thus also not the path taken by the system.
It can only observe traces of inputs and outputs, which restricts the input strategies.

\begin{definition}
  \label{def:trace-based}
  For $s \in \lts$, an input strategy $\stratenv$ is \emph{trace-based} if, for all $\pi_1,\pi_2 \in \Path(s)$, $\trace(\pi_1) = \trace(\pi_2)$ implies $\stratenv(\pi_1) = \stratenv(\pi_2)$.
  The domain of trace-based input strategies for $s$ is denoted $\Stratenv(s)$.
\end{definition}

\begin{longversion}
\begin{fact}
  \label{fac:env-strat-universal}
  Let $s \in \lts$, $\stratenv \in \Stratenv(s)$ and $\pi \in \Path(s)$.
  Then $\stratenv(\pi)\downarrow$ implies $\stratenv(\pi) \in \inp(\after{s}{\trace(\pi)})$.
\end{fact}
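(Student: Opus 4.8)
The plan is to unfold the universally quantified definition of $\inp$ on a set of states and reduce the claim to the single-path enabledness guarantee that every input strategy already satisfies by Definition~\ref{def:action-strategies}. Writing $a = \stratenv(\pi)$, I would recall from Definition~\ref{def:lts-definitions} that $\inp(\after{s}{\trace(\pi)}) = \{b \in I \mid \forall q \in \after{s}{\trace(\pi)} : q \xrightarrow{b}\}$ and that $\after{s}{\trace(\pi)} = \{q \in Q_s \mid q_s^0 \xrightarrow{\trace(\pi)} q\}$. So it suffices to fix an arbitrary state $q$ with $q_s^0 \xrightarrow{\trace(\pi)} q$ and establish $q \xrightarrow{a}$; the result then follows by universal generalisation.

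The key step is to realise such a $q$ as the endpoint of a concrete path carrying the same trace as $\pi$. Since $q_s^0 \xrightarrow{\trace(\pi)} q$, there is a finite path $\pi' \in \Path(s)$ starting in $q_s^0$, ending in $q = \last(\pi')$, and with $\trace(\pi') = \trace(\pi)$: by Definition~\ref{def:paths} together with the inductive definition of $\xrightarrow{\sigma}$ in Definition~\ref{def:lts-definitions}, a derivation witnessing $q_s^0 \xrightarrow{\trace(\pi)} q$ is precisely such a path. Because $\stratenv$ is trace-based and $\trace(\pi') = \trace(\pi)$, Definition~\ref{def:trace-based} gives $\stratenv(\pi') = \stratenv(\pi) = a$; in particular $\stratenv(\pi')\downarrow$. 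I would then invoke Definition~\ref{def:action-strategies}, which guarantees that whenever an input strategy is defined on a path it prescribes an action enabled in the last state of that path, so that $a = \stratenv(\pi') \in \inp(\last(\pi')) = \inp(q)$, i.e.\ $q \xrightarrow{a}$.

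Since $q$ was an arbitrary element of $\after{s}{\trace(\pi)}$, this yields $a \in \inp(\after{s}{\trace(\pi)})$, which is the claim. I expect the only subtle point to be the second step: one must check that the abstract reachability witness $q_s^0 \xrightarrow{\trace(\pi)} q$ can be packaged as a genuine element $\pi' \in \Path(s)$ (finite, rooted at the initial state, ending in $q$) so that the trace-based condition of Definition~\ref{def:trace-based} is actually applicable to it, and that $\pi'$ and $\pi$ share the \emph{same} trace rather than merely reaching related states. Everything else is a direct unfolding of the definitions of $\inp$, $\after$, and trace-based input strategies.
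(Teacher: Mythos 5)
Your proof is correct, and it is exactly the argument the paper leaves implicit: the statement is labelled a \emph{Fact} and given no proof, being regarded as a direct consequence of the definitions. Your unfolding --- packaging each reachability witness $q_s^0 \xrightarrow{\trace(\pi)} q$ as a path $\pi' \in \Path(s)$ with $\trace(\pi') = \trace(\pi)$, transferring definedness and the value of $\stratenv$ via the trace-based condition of Definition~\ref{def:trace-based}, and then applying the enabledness requirement $\stratenv(\pi') \in \inp(\last(\pi'))$ from Definition~\ref{def:action-strategies} --- is the intended justification, and your flagged subtle point (that the witness really is a genuine element of $\Path(s)$ carrying the \emph{same} trace) is indeed the only step that needs care.
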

\end{longversion}

A second change is the order of turns.
The antagonist must first resolve all its choices, before the protagonist resolves any choices.

\begin{definition}
  \label{def:ati}
  Let $s_1, s_2 \in \lts$.
  Then $s_1 \aaee s_2$, if
  \begin{align*}
    &\forall \stratenv^2 \in \Stratenv(s_2), \forall \stratdet^1 \in \Stratdet(s_1), \forall \stratsys^1 \in \Stratsys(s_1), \forall \stratrc^1 \in \Stratrc(s_1),\\
    &\exists \stratenv^1 \in \Stratenv(s_1), \exists \stratdet^2 \in \Stratdet(s_2), \exists \stratsys^2 \in \Stratsys(s_2), \exists \stratrc^2 \in \Stratrc(s_2):\\
    &~~~~ \trace(\Outc(\stratenv^1,\stratsys^1,\stratdet^1,\stratrc^1)) = \trace(\Outc(\stratenv^2,\stratsys^2,\stratdet^2,\stratrc^2))
  \end{align*}
\end{definition}

In contrast to alternating-trace containment, this game has a correspondence with $\ioco$ theory in the non-deterministic setting.
It does not coincide with $\ioco$, but with $\uioco$.
We show this in Theorem~\ref{the:iuoe-aaee}, via input-failure refinement.
A technical detail is that this correspondence only holds in both directions when the right-hand interface automaton is image-finite.

\begin{definition}
  Interface automaton $s$
  is \emph{image-finite} if, for each $q \in Q_s$ and $\ell \in L_s$, $q$ has finitely many $\ell$-successors, i.e.,
  set $\{ q' \mid (q,\ell,q') \in T_s \}$ is finite.
\end{definition}

\begin{theorem}
  \label{the:iuoe-aaee}
  Let $s_1, s_2 \in \lts$.
  Then\:
  $s_1 \aaee s_2 \implies s_1 \iuoe s_2$.\\
  Furthermore, if $s_2$ is image-finite, then\:
  $s_1 \aaee s_2 \impliedby s_1 \iuoe s_2$.
\end{theorem}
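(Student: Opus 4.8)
The plan is to prove the two implications separately, phrasing everything through $\iuoe$; by Theorem~\ref{the:ir-iuoe} this equally settles the statement as phrased with $\ir$. For the unconditional direction $s_1 \aaee s_2 \implies s_1 \iuoe s_2$ I would argue by contraposition. Assume $s_1 \not\iuoe s_2$ and fix a \emph{shortest} witness $\sigma \in \oetraces(s_1)\cap\iutraces(s_2)$ with $\sigma\notin\iutraces(s_1)\cap\oetraces(s_2)$. Minimality together with prefix-closure of $\oetraces$ and $\iutraces$ guarantees that every proper prefix of $\sigma$ already lies in $\oetraces(s_1)\cap\iutraces(s_1)\cap\oetraces(s_2)\cap\iutraces(s_2)$, hence in $\traces(s_1)\cap\traces(s_2)$ by Lemma~\ref{lem:intersection-oe-iu-in-traces}. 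Writing $\sigma=\rho c$, the last symbol $c$ is thus the single point of failure, giving two cases: either $c\in O$ with $c\notin\out(\after{s_2}{\rho})$ (using Lemma~\ref{lem:oe-traces-form}, since appending an input cannot break $\oetraces(s_2)$), or $c=a\in I$ with $a\notin\inp(\after{s_1}{\rho})$ while $a\in\inp(\after{s_2}{\rho})$ (since appending an output cannot break $\iutraces(s_1)$, and $\rho a\in\iutraces(s_2)$).

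In both cases I would let the antagonist \emph{announce} $\sigma$. In $s_1$, the strategies $\stratsys^1,\stratdet^1,\stratrc^1$ drive a fixed realizing path for the outputs of $\rho$, emit exactly the outputs prescribed by $\sigma$ with $\stratrc^1=1$ so that a forced output preempts any input, and—in the input-refusal case—determinize after $\rho$ to a state of $\after{s_1}{\rho}$ that refuses $a$. In $s_2$, the trace-based strategy $\stratenv^2$ feeds precisely the inputs of $\sigma$ along $\rho$ (and the final $a$ in the second case), which is legitimate because those inputs are universally enabled by $\sigma\in\iutraces(s_2)$ (Fact~\ref{fac:env-strat-universal}). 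The decisive observation is that the outputs of $s_1$ and the inputs of $s_2$ are both antagonist-controlled, so the protagonist can steer the common trace only through its inputs in $s_1$ and its outputs in $s_2$. Hence any play deviating from $\sigma$ is unmatchable: a protagonist input in $s_1$ differing from $\sigma$ cannot be mirrored in $s_2$ (the antagonist offers no such $s_2$-input), while halting or emitting in $s_2$ cannot be mirrored in $s_1$ (the antagonist emits no further $s_1$-output). Tracking $\sigma$ faithfully instead reaches the defect: in the first case $s_1$ is forced to emit $c$ after $\rho$ whereas $s_2$ cannot; in the second case $s_2$ is forced to take $a$ after $\rho$ whereas $s_1$, sitting in the refusing state, cannot. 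Either way the outcome traces differ for \emph{every} protagonist response, so $s_1\not\aaee s_2$.

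For the converse, assume $s_1\iuoe s_2$ with $s_2$ image-finite, and fix arbitrary antagonist strategies $\stratenv^2,\stratdet^1,\stratsys^1,\stratrc^1$. I would build the protagonist's response move-by-move, maintaining the invariant that after $k$ steps both runs carry the same trace $\sigma_k\in\traces(s_1)\cap\traces(s_2)$ with $\sigma_k\in\iutraces(s_2)$. The play advances in one of two ways. If $\stratsys^1$ offers an output $x$ in $s_1$, the protagonist declines to input (so $x$ is taken regardless of $\stratrc^1$); since $\sigma_k x\in\traces(s_1)\subseteq\oetraces(s_1)$ and $\sigma_k x\in\iutraces(s_2)$, $\iuoe$ yields $\sigma_k x\in\oetraces(s_2)$, i.e.\ $x\in\out(\after{s_2}{\sigma_k})$, so the protagonist emits the matching $x$ in $s_2$. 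Otherwise, if $\stratenv^2$ offers an input $a$ in $s_2$, then $a\in\inp(\after{s_2}{\sigma_k})$ by Fact~\ref{fac:env-strat-universal}, so $\sigma_k a\in\iutraces(s_2)$ and $\sigma_k a\in\oetraces(s_1)$ (Lemma~\ref{lem:oe-traces-form}); now $\iuoe$ gives $\sigma_k a\in\iutraces(s_1)$, meaning $a$ is \emph{universally} enabled in $s_1$ after $\sigma_k$ and is therefore available at whatever state $\stratdet^1$ has reached, so the protagonist inputs $a$ in $s_1$ and lets it through in $s_2$. When neither offer is made, both runs halt with common trace $\sigma_k$.

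The asymmetry of these two steps is exactly what makes image-finiteness necessary on $s_2$ alone: on the $s_1$-side the required input is universally enabled and is thus available no matter which successor $\stratdet^1$ chose, so no foresight is needed; on the $s_2$-side the required output is only existentially available, so the protagonist must pick a determinization $\stratdet^2$ that keeps a realizing path alive for the whole, possibly infinite, common trace $\sigma=\lim_k \sigma_k$. That common trace is itself completely fixed by the antagonist strategies and the state-independent move rule above, and every finite prefix $\sigma_k$ lies in $\traces(s_2)$ by the invariant; when $s_2$ is image-finite, the tree of $s_2$-paths realizing $\sigma$ is finitely branching, so König's Lemma supplies one infinite realizing path, along which $\stratdet^2,\stratsys^2,\stratrc^2$ are read off (and $\stratenv^1$ is recorded as a trace-based strategy). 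I expect this limit/foresight argument for the $s_2$-determinization to be the main obstacle—both setting up the König argument cleanly and verifying that the four resulting strategies indeed reproduce $\sigma$ as the $s_2$-outcome trace—whereas the move-by-move matching and the two $\iuoe$ invocations are routine.
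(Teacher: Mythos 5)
Your proof is correct and takes essentially the same approach as the paper: the forward direction is the paper's contrapositive construction of antagonist strategies ($\stratenv^2$, $\stratsys^1$, $\stratdet^1$, $\stratrc^1$) that force play along the defective trace $\sigma$ so that no protagonist response can match, and the converse is the paper's argument as well---mirror $\stratenv^2$ into a trace-based $\stratenv^1$ using the $\iuoe$ invariant, then use image-finiteness of $s_2$ and K\"onig's lemma on the finitely-branching tree of $s_2$-paths realizing the outcome trace to extract a single path $\hat{\pi}$ from which $\stratsys^2$, $\stratdet^2$ and $\stratrc^2$ are read off. The only differences are cosmetic---your shortest-witness refinement localizes the defect at the last symbol, where the paper instead handles an arbitrary witness via its ``can realize $\sigma$ via $\ell$'' predicate---plus one crossed pair of justifications: in your two defect cases the parentheticals are swapped (for $c \in O$ the relevant fact is that appending an output cannot break $\iutraces(s_1)$, forcing the failure into $\oetraces(s_2)$; for $c = a \in I$ it is that appending an input cannot break $\oetraces(s_2)$, forcing the failure into $\iutraces(s_1)$).
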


\begin{proof}
  $(\implies)$
  We prove the contrapositive: assume $s_1 \not\iuoe s_2$, and we prove $s_1 \not\aaee s_2$.
  By assumption $s_1 \not\iuoe s_2$, there exists a sequence $\sigma \in L^*$ with
  $\sigma\in\oetraces(s_1) \cap \iutraces(s_2)$ and $\sigma\not\in \iutraces(s_1) \cap \oetraces(s_2)$.
  We define strategy functions $\stratenv^2$, $\stratsys^1$, $\stratdet^1$ and $\stratrc^1$ that try to realize $\sigma$ as an outcome, whenever possible.
  Strategy $\stratenv^2$ is defined as follows, for  $\pi_2 \in\Path(s_2)$,
  \begin{align*}
  \stratenv^2(\pi_2) = & 
  \begin{cases}
  a &\text{if } \trace(\pi_2) \; a \text{ is a prefix of } \sigma \\
  \bot &\text{otherwise}
  \end{cases}\\
  \end{align*}
  Note that $\stratenv^2$ is a trace-based action strategy: since $\sigma\in \iutraces(s_2)$, input symbol $a$ is enabled after every path of $s_2$ with the same trace as $\pi_2$.
  We say that path $\pi_1 \in \Path(s_1)$ \emph{can realize $\sigma$ via $\ell \in L$} if there exists a path $\pi \in \Path(s_1)$ with $\trace(\pi) = \sigma$, $\pi_1$ a proper prefix of $\pi$, and $\ell$ the first symbol in $\pi$ following $\pi_1$.
  Now we define $\stratsys^1$ and $\stratrc^1$ as follows, for $\pi_1 \in\Path(s_1)$,
    \begin{align*}   
      \stratsys^1(\pi_1) = & 
        \begin{cases}
          x &\text{if } \pi_1 \text{ can realize } \sigma \text{ via } x \in O \\
          \bot &\text{otherwise}
        \end{cases}\\
          \stratrc^1(\pi_1) = & 
              \begin{cases}
              1 & \text{if } \pi_1 \text{ can realize } \sigma \text{ via some } x \in O \\
              0 & \text{otherwise}
              \end{cases}    
    \end{align*}
  Note that $\stratsys^1$ is an output strategy, because if $\pi_1$ can realize $\sigma$ via $x$, then $x$ is enabled in the last state of $\pi_1$.
  In addition, we choose determinization strategy $\stratdet^1$ such that
  \begin{eqnarray*}
  \stratdet^1(\pi_1,\ell) = q & \wedge &
  \pi_1 \text{ can realize } \sigma  \text{ via } \ell \\  
  & \implies & (\trace(\pi_1) \; \ell = \sigma) \vee (\pi_1 \; \ell \; q \text{ can realize } \sigma \text{ via some } \ell').
  \end{eqnarray*}   
  We claim that, no matter how we define $\stratenv^1$, $\stratsys^2$, $\stratdet^2$ and $\stratrc^2$,
  \begin{eqnarray*}
  \trace(\Outc[1]) & \neq & \trace(\Outc[2]).
  \end{eqnarray*}
  Since $\sigma\not\in \iutraces(s_1) \cap \oetraces(s_2)$, $\sigma$ is nonempty.
  Let $\ell$ be the first symbol occurring in $\sigma$. If $\ell \in I$, then the
  strategies for $s_2$ will do either $\ell$ or an output symbol to start $\trace(\Outc[2])$, whereas the strategies for $s_1$ will either choose an input symbol to start $\trace(\Outc[1])$, or choose to terminate so that $\trace(\Outc[1]) = \epsilon$.
  Thus, the only way in which both strategies end up with the same trace is by performing
  an $\ell$-step.
  Otherwise, if $\ell \in O$, then the strategy for $s_1$ will choose to do $\ell$, whereas
  the strategy for $s_2$ will either choose an output symbol or choose to terminate.
  Again, the only way in which both strategies end up with the same trace is by performing
  an $\ell$-step.
  By repeating the same argument, we see that the only way in which both strategies possibly may end up with the same trace is by selecting paths with trace $\sigma$.
  But this is not possible since  $\sigma\not\in \iutraces(s_1) \cap \oetraces(s_2)$:
  at some point either the strategy for $s_1$ will fail to match an input transition,
  or the strategy for $s_2$ will fail to match an output transition.
  Consequently, $\stratenv^2$, $\stratsys^1$, $\stratdet^1$ and $\stratrc^1$ are witnesses proving $s_1 \not\aaee s_2$.
  This proves that the contrapositive holds, that is, $s_1 \aaee s_2 \implies s_1 \iuoe s_2$. 

  $(\impliedby)$
  Assume $s_2$ is image-finite and $s_1 \iuoe s_2$ (1).
  We prove $s_1 \aaee s_2$.
  Let $\stratenv^2 \in \Stratenv(s_2)$, $\stratdet^1 \in \Stratdet(s_1)$, $\stratsys^1 \in \Stratsys(s_1)$ and $\stratrc^1 \in \Stratrc(s_1)$.
  
  First we define, for all $\pi_1 \in \Path(s_1)$,
  \begin{align*}
    \stratenv^1(\pi_1) &= 
    \begin{cases}
      \stratenv^2(\pi_2) &\text{if $\pi_2 \in \Path(s_2)$ and }\\&\qquad\text{$\trace(\pi_1) = \trace(\pi_2) \in \oetraces(s_1) \cap \iutraces(s_2)$}\\
      \bot &\text{otherwise}
    \end{cases}\\
  \end{align*}
  By definition of $\stratenv^1$, $\pi_1 \in \Path(s_1)$ and $\stratenv^1(\pi_1) \downarrow$ implies that there exists
  some $\pi_2 \in \Path(s_2)$ such that 
  \begin{align*}
            & \trace(\pi_1) = \trace(\pi_2) \in \oetraces(s_1) \cap \iutraces(s_2)\\
    \implies& \trace(\pi_1) = \trace(\pi_2)\text{ and }\trace(\pi_2)\stratenv^2(\pi_2) \in \oetraces(s_1) \cap \iutraces(s_2) \ttag{Fact~\ref{fac:env-strat-universal} and Definition~\ref{def:input-universal} of $\iutraces$ and $\oetraces$}\\
    \implies& \trace(\pi_1)\stratenv^1(\pi_1) \in \oetraces(s_1) \cap \iutraces(s_2) \ttag{construction of $\stratenv^1$}\\
    \implies& \trace(\pi_1)\stratenv^1(\pi_1) \in \iutraces(s_1) \cap \oetraces(s_2) \ttag{Assumption (1)}\\
    \implies& \stratenv^1(\pi_1) \in \inp(\last(\pi_1))  \quad \ttag{Definition~\ref{def:input-universal} of $\iutraces$}
  \end{align*}
  This means $\stratenv^1(\pi_1)$ meets the conditions for input strategies in Definition~\ref{def:action-strategies}.
  Clearly, $\stratenv^1$ is also trace-based, so $\stratenv^1 \in \Stratenv(s_1)$ holds.
  
  Let $\pi = \Outc(\stratenv^1, \stratsys^1, \stratdet^1, \stratrc^1)$.
  Now consider the following digraph $G = (V, E)$:
  \begin{eqnarray*}
  V & = & \{ \pi_2 \in\Path(s_2) \mid \exists \pi_1 :  \pi_1 \mbox{ prefix of } \pi \mbox{ with } \trace(\pi_1) = \trace(\pi_2) \},\\
  E & = & \{ (\pi_2, \pi_2') \in V \times V \mid \pi_2' \mbox{ extends } \pi_2 \mbox{ with a single transition } \}.
  \end{eqnarray*}
  Note that $V$ is a prefix-closed set of finite paths of $s_2$, that each vertex in $V$ has a finite outdegree (since $s_2$ is image-finite),
  and that digraph $G$ is a tree.
  Let $\pi_1$ be a finite prefix of $\pi$ with $\trace(\pi_1) = \sigma$.
  It follows from the definitions of $\Outc$ and $\stratenv^1$ that $\sigma \in \oetraces(s_1) \cap \iutraces(s_2)$.
  Hence, by assumption (1), $\sigma \in \iutraces(s_1) \cap \oetraces(s_2)$.
  From this we infer $\sigma \in \iutraces(s_2) \cap \oetraces(s_2)$ and
  (using Lemma~\ref{lem:intersection-oe-iu-in-traces}) $\sigma\in\traces(s_2)$.
  This means that, for any prefix $\pi_1$ of $\pi$, $V$ contains a path $\pi_2$ with $\trace(\pi_1) = \trace(\pi_2)$.
  In particular, if $\pi$ is finite then $V$ contains a path $\hat{\pi}$ with $\trace(\pi) = \trace(\hat{\pi})$.
  Moreover, if $\pi$ is infinite then, by K\"onigs infinity lemma~\cite{Knu97}, digraph $G$ has an infinite path from the root, which corresponds to
  an infinite path $\hat{\pi}$ of $s_2$ with $\trace(\pi) = \trace(\hat{\pi})$.

  Based on $\hat{\pi}$, we define $\stratsys^2$, $\stratdet^2$ and $\stratrc^2$ as follows, for all $\pi_2 \in \Path(s_2)$,
  \begin{align*}
    \stratsys^2(\pi_2) &=
      \begin{cases}
        x &\text{ if $\pi_2 x$ is a prefix of $\hat{\pi}$}\\
        \bot & \text{otherwise}
      \end{cases}\\
    \stratdet^2(\pi_2, \ell) &= 
      \begin{cases}
        q &\text{ if $\pi_2 \ell q$ is a prefix of $\hat{\pi}$}\\
        \text{arbitrary} & \text{otherwise}
      \end{cases}\\
    \stratrc^2(\pi_2) &=
      \begin{cases}
        0 & \text{ if $\pi_2 a$ is a prefix of $\hat{\pi}$, for some $a \in I$}\\
        1 & \text{ if $\pi_2 x$ is a prefix of $\hat{\pi}$, for some $x \in O$}\\
        \text{arbitrary} & \text{ otherwise}
      \end{cases}
  \end{align*}
  We claim that $\Outc(\stratenv^2, \stratsys^2, \stratdet^2, \stratrc^2) = \hat{\pi}$. The definitions of strategies $\stratsys^2$, $\stratdet^2$ and $\stratrc^2$ are all geared towards outcome $\hat{\pi}$. 
  But also $\stratenv^2$ steers the outcome towards $\hat{\pi}$.
  Because suppose $\pi_2 a$ is a prefix of $\hat{\pi}$, for some $a \in I$.
  Let $\trace(\pi_2) = \sigma$.
    Then $\sigma \in \iutraces(s_1) \cap \oetraces(s_2)$ and there exists a prefix $\pi_1 a$ of $\pi$ with $\trace(\pi_1) = \sigma$.
    This implies $\stratenv^1(\pi_1) = a$.
    Hence, by definition of $\stratenv^1$, $\stratenv^1(\pi_1) = \stratenv^2(\pi_2)$ and thus
    $\stratenv^2(\pi_2) = a$.
    Using this observation, allows us to prove $\Outc(\stratenv^2, \stratsys^2, \stratdet^2, \stratrc^2) = \hat{\pi}$ with a simple
    inductive argument.
  Hence 
  \begin{align*}          
    & \trace(\Outc(\stratenv^2, \stratsys^2, \stratdet^2, \stratrc^2))
    \hspace{-0.5mm}=\hspace{-0.5mm} \trace(\hat{\pi})
    \hspace{-0.5mm}=\hspace{-0.5mm} \trace(\pi) 
    \hspace{-0.5mm}=\hspace{-0.5mm} \trace(\Outc(\stratenv^1, \stratsys^1, \stratdet^1, \stratrc^1)),
  \end{align*}
  which implies $s_1 \aaee s_2$, as required.
  \qed
\end{proof}

\begin{example}
  We revisit Example~\ref{exa:ir} to investigate the game-characterization of input-failure refinement.
  The IA in Figure~\ref{fig:ir} are image-finite so we should find the same related IA.
\begin{shortversion}
  First, consider $s_B \not\aaee s_A$.
  The antagonist chooses $\stratenv^A$, $\stratsys^B$ and $\stratrc^B$ by following the trace $axax$ in both models.
  Strategy $\stratdet^B$ is fixed, since $s_B$ is deterministic.
  Now, the protagonist should choose strategies so that the traces of the resulting outcomes match.
  It can match the first three actions $axa$, but this leads to $q_A^2$ in $s_A$.
  Here the protagonist cannot match the last action $x$, so indeed $s_B \not\aaee s_A$ holds.
  We show $s_C \not\aaee s_A$ similarly.
  Let the antagonist choose strategies following trace $axa$.
  Now, the protagonist ends up in state $q_A^1$ and $q_C^1$ after two actions, where it cannot match the third action $a$ in $s_C$.
  Finally, $s_D \aaee s_A$ holds.
  Since $q_D^0$ and $q_D^1$ do not have outgoing output transitions and $q_A^2$ has no input transitions, the antagonist can only choose a single action $a$ in $s_A$, which can easily be matched by the protagonist $s_D$.
\end{shortversion}
\begin{longversion}
  First, consider $s_B \not\aaee s_A$.
  The antagonist must first choose $\stratenv^A$, $\stratsys^B$, $\stratdet^B$ and $\stratrc^B$.
  It tries to follow the trace $axax$ in both models.
  That is, it chooses
  \begin{align*}
    \stratenv^A(q_A^0) &= a
      & \stratsys^B(q_B^0) &= \bot\\
    \stratenv^A(q_A^0 a q_A^1) &= \bot
      & \stratsys^B(q_B^0 a q_B^0) &= x\\
    \stratenv^A(q_A^0 a q_A^1 x q_A^1) &= a
      & \stratsys^B(q_B^0 a q_B^0 x q_B^0) &= \bot\\
    \stratenv^A(q_A^0 a q_A^1 x q_A^1 a q_A^2) &= \bot
      & \stratsys^B(q_B^0 a q_B^0 x q_B^0 a q_B^0) &= x\\
    \stratrc^B(q_B^0) &= 0\\
    \stratrc^B(q_B^0 a q_B^0) &= 1\\
    \stratrc^B(q_B^0 a q_B^0 x q_B^0) &= 0\\
    \stratrc^B(q_B^0 a q_B^0 x q_B^0 a q_B^0) &= 1
  \end{align*}

  Now, the protagonist should choose $\stratenv^B$, $\stratsys^A$, $\stratdet^A$ and $\stratrc^A$ such that the traces of the resulting outcomes for $s_A$ and $s_B$ match.
  
  Clearly, $x \not\in \trace(\Outc[B])$ because of the choice of $\stratsys^B$ by the antagonist.
  Thus, the protagonist must choose $\stratsys^A(q_A^0) \neq x$ or $\stratrc^A(q_A^0) = 0$ to match traces.
  In both cases, $a \in \trace(\Outc[A])$ holds, so the protagonist must match this with $\stratenv^B(q_B^0) = a$.
  By the choice of $\stratsys^B$ by the antagonist, this causes $a \in \trace(\Outc[B])$.
  The protagonist also needs to resolve non-determinism in $s_A$: trace $a$ leads to either $q_A^1$ or $q_A^2$.
  Choosing the latter state makes the protagonist lose directly, since it will then fail to match the output $x$ of $q_B^0$ in state $q_A^2$.
  Choosing the former state, we follow the same line of reasoning of defining strategies step by step.
  We eventually conclude that the protagonist is forced to choose its strategies such that $q_A^0 a q_A^1 x q_A^1 a q_A^2 \in \Outc[A]$ and $q_B^0 a q_B^0 x q_B^0 a q_B^0 \in \Outc[B]$.
  Here, the protagonist loses the game: the antagonist chooses $\stratsys^B(q_B^0 a q_B^0 x q_B^0 a q_B^0) = x$, which cannot be matched in state $q_A^2$.
  Thus, $s_B \not\aaee s_A$ indeed holds.
  
  Next, $s_C \not\aaee s_A$ can be shown by a similar approach.
  The antagonist chooses strategies following trace $axa$.
  The protagonist matches the first action $a$ by choosing $\stratenv^C(q_C^0) = a$ and $\stratdet^A(q_A^0, a) = q_A^1$, and the second action $x$ by choosing $\stratsys^A(q_A^0 a q_A^1) = x$.
  However, it cannot match the third action $a$: this would require choosing $\stratenv^C(q_C^0 a q_C^1 x q_C^1) = a$, but this is impossible, since $a$ is not enabled in $q_C^1$.
  This confirms $s_C \not\aaee s_A$.
  
  Now, we play a similar game for $s_D \aaee s_A$.
  Since $q_D^0$ and $q_D^1$ do not have outgoing output transitions and $q_A^2$ has no input transitions, the antagonist must choose $\stratsys^B(q_D^0) = \bot$, $\stratsys^B(q_D^0 a q_D^1) = \bot$ and $\stratenv^A(q_A^0 a q_A^2) = \bot$.
  If it would also choose $\stratenv^A(q_A^0) = \bot$, then the protagonist could win in a trivial way by always choosing $\bot$ as well, so the antagonist chooses $\stratenv^A(q_A^0) = a$.
  The protagonist matches this as follows:
  \begin{align*}
    \stratenv^D(q_D^0) &= a & \stratenv^D(q_D^0 a q_D^1) &= \bot\\
    \stratsys^A(q_A^0) &= \bot & \stratsys^D(q_A^0 a q_A^2) &= \bot\\
    \stratdet^A(q_A^0, a) &= q_A^2 
  \end{align*}
  Clearly, $\trace(\Outc[A]) = \trace(\Outc[B]) = \{\epsilon, a\}$ now holds.
  Since the antagonist could not have played the game differently, this is a winning strategy for the protagonist, proving $s_D \aaee s_A$.
\end{longversion}
\end{example}

\begin{shortversion}
  For image-infinite IA, the relation between $\aaee$ and $\iuoe$ is indeed strict, as is proven in~\cite{JaVaTr19}.
  For completeness, we also establish that $\aaee$ is indeed weaker than $\ati$.
  Strictness follows from Examples~\ref{exa:problems-ati-with-quantifier-order} and~\ref{exa:problems-ati-trace-based}.
\end{shortversion}

\begin{longversion}
For image-infinite IA, Theorem~\ref{the:iuoe-aaee} states that the game-characterization is stronger than input-failure refinement.
Example~\ref{exa:image-finite} shows that this implication is then indeed strict.
\begin{example}
  \label{exa:image-finite}
  Consider IA $s_K$ and $s_L$ in Figure~\ref{fig:problems-ati-image-finite}, where $s_L$ is infinitely branching: there is an infinite number of paths from the initial state, but each path has a finite length.
  Any positive integer $n$ thus has $\traces(q_L^n) = x^n$.
  Consequently, $\traces(s_L) = x^*$.
  Moreover, $\traces(s_K) = x^*$ holds as well.
  IA $s_K$ and $s_L$ are thus trace-equivalent, and since no inputs are present, they are also input-failure equivalent.
  Likewise, $s_K$ is also related to $s_L$ by $\iuoe$, $\ioco$ and $\uioco$.

  In the game of $\aaee$, the antagonist first picks a strategy to choose output transitions in $s_K$.
  Suppose it chooses the transition $q_K^0 \xrightarrow{x} q_K^0$ indefinitely.
  The trace of the outcome in $s_K$ following this strategy is thus the infinite trace $x x x\dots$.
  To match these traces, the protagonist should pick an output strategy which also keep producing output $x$.
  Furthermore, the protagonist should pick a determinization strategy $\stratdet^F$, which is defined solely by the non-determinism from the initial state, $\stratdet^F(q_L^0, x) = q_L^n$.
  Every choice of $q_L^n$ results in a finite outcome $x^{n+1}$.
  The protagonist thus fails to match the infinite outcome trace $x x x \dots$, so this game does not properly reflect input-failure refinement.

  Remark that this discrepancy for image-infinite IA is not caused by the division of actions into inputs and outputs.
  After all, the IA in Figure~\ref{fig:problems-ati-image-finite} contain only output transitions.
  A similar game-characterization for ordinary trace inclusion would thus also require image-finite models.

  \begin{figure}
    \tikzlts
    \begin{center}
      \begin{tikzpicture}[node distance=16mm]
	\node [initial,initial text=,named] (0) {$q_K^0$};
	\node (init-text) [above left=-2mm and 0 of 0] {$s_K$};
	\path
	(0) edge [loop above] node {$x$} (1)
  (0) edge [loop below] node {$a$} (1)
	;
      
  \node [align=center] [right=8mm of 0] {$\ioco$\\$\uioco$\\$\ir$\\$\iuoe$\\$\notrel\aaee$};
      
	\node [named,initial,initial text=] [right=28mm of 0] (0) {$q_L^0$};
	\node (init-text) [above left=0mm and 0 of 0] {$s_L$};
	\node [named] (2) [above right=10mm and 10mm of 0] {$q_L^1$};
	\node [state] (2a) [right of=2] {};
	\node [named] (3) [above right=0mm and 10mm of 0] {$q_L^2$};
	\node [state] (3a) [right of=3] {};
	\node [state] (3b) [right of=3a] {};
	\node [named] (4) [above right=-10mm and 10mm of 0] {$q_L^3$};
	\node [state] (4a) [right of=4] {};
	\node [state] (4b) [right of=4a] {};
	\node [state] (4c) [right of=4b] {};
	\node [] (5) [above right=-22mm and 10mm of 0] {};
	\path
	(0) edge node [right=1mm] {$x$} (2)
	(0) edge node [below right=0mm] {$x$} (3)
	(0) edge node [below=0mm] {$x$} (4)
	(0) edge [dotted] node [below=2mm] {$x$} (5)
	(2) edge node [above] {$x$} (2a)
	(3) edge node [above] {$x$} (3a)
	(3a) edge node [above] {$x$} (3b)
	(4) edge node [above] {$x$} (4a)
	(4a) edge node [above] {$x$} (4b)
	(4b) edge node [above] {$x$} (4c)
	;
      \end{tikzpicture}
    \end{center}
    \caption{IA $s_K$ and $s_L$.}
    \label{fig:problems-ati-image-finite}
  \end{figure}
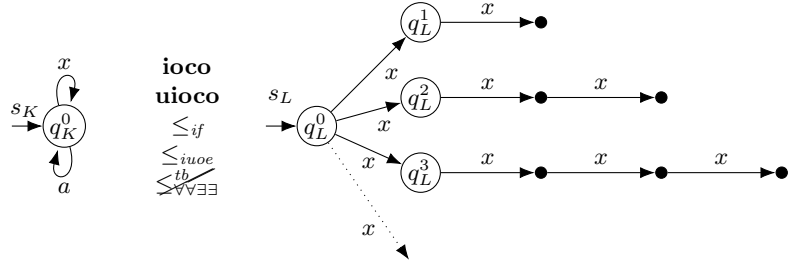
\end{example}
\end{longversion}

\begin{longversion}
We have now presented two games: alternating-trace containment, and a game-characterization of input-failure refinement.
For completeness, we establish that input-failure refinement is indeed weaker.
\end{longversion}

\begin{theorem}
  \label{the:ati-implies-aaee}
  $s_1 \ati s_2 \implies s_1 \aaee s_2$.
\end{theorem}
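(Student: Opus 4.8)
The plan is to turn a winning play of the $\ati$-game into one of the $\aaee$-game. Fix an $\aaee$-challenge: a trace-based input strategy $\stratenv^2 \in \Stratenv(s_2)$ together with system strategies $\stratsys^1 \in \Stratsys(s_1)$, $\stratdet^1 \in \Stratdet(s_1)$, $\stratrc^1 \in \Stratrc(s_1)$. Since every trace-based input strategy is in particular an input strategy, I would feed $\stratsys^1,\stratdet^1,\stratrc^1$ into $s_1 \ati s_2$ to obtain system strategies $\stratsys^2,\stratdet^2,\stratrc^2$ for $s_2$, which I take as the $\aaee$-response for $s_2$, and then instantiate the innermost $\forall$ of $\ati$ with $\stratenvati^2 := \stratenv^2$ to obtain a path-based input strategy $\stratenvati^1$ for $s_1$ with $\trace(\Outc(\stratenvati^1,\stratsys^1,\stratdet^1,\stratrc^1)) = \trace(\Outc(\stratenv^2,\stratsys^2,\stratdet^2,\stratrc^2))$. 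Writing $\pi^1$ for the left-hand outcome and $\sigma = \trace(\pi^1)$, the right-hand side already realises $\sigma$, so it remains only to replace the path-based $\stratenvati^1$ by a trace-based $\stratenv^1 \in \Stratenv(s_1)$ with the same outcome trace.

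Because $\stratsys^1,\stratdet^1,\stratrc^1$ are fixed, the input strategy influences the play only along the single path $\pi^1$, whose prefixes have pairwise distinct traces. I would therefore set $\stratenv^1(\rho) = a$ whenever $\trace(\rho)\,a$ is a prefix of $\sigma$ with $a \in I$, and $\stratenv^1(\rho) = \bot$ otherwise. This is trace-based by construction, and since it agrees with $\stratenvati^1$ at every prefix of $\pi^1$ while leaving the other three strategies untouched, a step-by-step induction shows $\Outc(\stratenv^1,\stratsys^1,\stratdet^1,\stratrc^1) = \pi^1$, which has trace $\sigma$. Together with the $s_2$-strategies above this yields the required matching of outcome traces.

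The only gap, and the main obstacle, is to check that $\stratenv^1$ is a \emph{legal} input strategy: by trace-basedness (cf.\ Fact~\ref{fac:env-strat-universal}) each selected input $a$ must lie in $\inp(\after{s_1}{\trace(\rho)})$, i.e.\ be enabled after \emph{every} $s_1$-path with that trace, not merely at the state visited by $\pi^1$. Equivalently, I must show $\sigma \in \iutraces(s_1)$. Note that $\sigma \in \iutraces(s_2)$ already holds, since every input of $\sigma$ is selected by the trace-based $\stratenv^2$ and is hence universally enabled in $s_2$. I expect this to be where the strength of $\ati$ is really used, through its universal quantification over the determinisation strategies of $s_1$, and I would argue by contradiction. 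Suppose $\sigma = \sigma' a\tau$ with $a \in I$ and some $\hat q \in \after{s_1}{\sigma'}$ with $\hat q \not\xrightarrow{a}$, and pick a path $\hat\rho$ of $s_1$ with trace $\sigma'$ ending in $\hat q$. In a fresh instance of the $\ati$-game I would let the antagonist play $s_1$-strategies that emit exactly the outputs of $\sigma'$, use the universally quantified $\stratdet^1$ to route $s_1$ along $\hat\rho$ to $\hat q$, and then emit no further output. Against any protagonist response for $s_2$ the antagonist plays $\stratenv^2$; because $\sigma \in \iutraces(s_2)$ this forces the $s_2$-outcome strictly beyond $\sigma'$ (the protagonist can neither stop $s_2$ at $\sigma'$ nor let its outputs deviate from the forced outputs of $\sigma'$ without an immediate mismatch), whereas the pinned $s_1$ is stuck after $\sigma'$ and can only refuse $a$. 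Hence no matching $\stratenvati^1$ exists, contradicting $s_1 \ati s_2$.

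This establishes $\sigma \in \iutraces(s_1)$, so $\stratenv^1 \in \Stratenv(s_1)$ is legal, completing the witness for $s_1 \aaee s_2$. I would stress that, unlike the reverse implication of Theorem~\ref{the:iuoe-aaee}, this direction merely reproduces an \emph{already existing} outcome path and therefore needs no image-finiteness assumption and no appeal to K\"onig's lemma.
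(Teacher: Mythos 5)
Your proposal has the same skeleton as the paper's proof: instantiate $s_1 \ati s_2$ with the given $\stratsys^1,\stratdet^1,\stratrc^1$, choose $\stratenvati^2 := \stratenv^2$ (legal, since every trace-based input strategy is in particular an input strategy), keep the resulting $\stratsys^2,\stratdet^2,\stratrc^2$, and replace the path-based response $\stratenvati^1$ by a trace-based strategy that reads the next input off $\sigma = \trace(\Outcati[1])$. The substantive difference is at the crux. The paper defines $\stratenv^1$ with the additional guard $a \in \inp(\after{s_1}{\trace(\pi)})$ and then asserts ``by construction'' that $\trace(\Outc[1]) = \trace(\Outcati[1])$; you observe, correctly, that this assertion is precisely as strong as the claim that every input of $\sigma$ is universally enabled in $s_1$ after its prefix, i.e.\ $\sigma \in \iutraces(s_1)$: with the guard, the strategy becomes undefined at a non-universal input and the new play halts or emits an output instead of following $\pi^1$; without the guard, as in your variant, the strategy is illegal. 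So the step the paper records as immediate is exactly your ``main obstacle'', and your instinct that it can only be discharged by invoking $s_1 \ati s_2$ a second time, against a fresh adversarial choice of $\stratdet^1$, is right. Your closing remark is also correct: this direction needs no image-finiteness and no K\"onig-style argument, unlike the converse direction of Theorem~\ref{the:iuoe-aaee}.

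Your Step-4 argument has the right mechanism, but two of its claims say more than the game gives, and both need the same repair. First, $s_1$ is not ``pinned'': the protagonist picks $\stratenvati^1$ \emph{last}, so the $s_1$-play need not follow $\hat\rho$ at all; it may halt at any input position of $\hat\rho$, deviate there with another enabled input, or continue past $\hat q$ with inputs other than $a$. Second, the $s_2$-play is not automatically forced beyond $\sigma'$: the protagonist may try to halt it earlier, emit an output not occurring in $\sigma$, or pre-empt a pending input of $\stratenv^2$ by an output via $\stratrc^2 = 1$. What closes every escape route is a disjointness argument on achievable outcome traces. On the $s_1$ side, the antagonist emits outputs exactly along $\hat\rho$ and resolves all races in their favour, so every achievable $s_1$-outcome trace agrees with $\sigma$ up to a prefix $\rho \preccurlyeq \sigma'$ after which $\sigma$ continues with an input, and the trace either halts at $\rho$ or continues with inputs only, the first of them different from that input of $\sigma$ (at $\hat q$ this is where $a \notin \inp(\hat q)$ enters). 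On the $s_2$ side, every input of the play is supplied by $\stratenv^2$, which by trace-basedness together with the original matched play selects exactly the next input of $\sigma$ on \emph{every} path whose trace is such a prefix $\rho$, and that input is enabled wherever the play currently is (Fact~\ref{fac:env-strat-universal}); hence the $s_2$-play can neither halt at $\rho$ nor continue from $\rho$ with anything other than that very input or a protagonist output. Comparing the maximal common prefix with $\sigma$ of a purported common outcome trace now yields an immediate contradiction (halt versus no halt, input versus output, or maximality violated), for finite and infinite outcomes alike. All the ingredients for this are present in your sketch, but the two quoted claims must be replaced by this case analysis before the argument is a proof.
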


\begin{proof}
  Assume $s_1 \ati s_2$ (1).
  Changing the order of quantifiers yields a stronger relation: in general, $\exists A, \forall B : \phi(A,B)$ clearly implies $\forall B, \exists A : \phi(A,B)$, for any $A$, $B$ and predicate $\phi$.
  Consequently, (1) implies
  \begin{align*}
    &\forall \stratenvati^2 \in \Stratenvati(s_2), \forall \stratdet^1 \in \Stratdet(s_1), \forall \stratsys^1 \in \Stratsys(s_1), \forall \stratrc^1 \in \Stratrc(s_1),\\
    &\exists \stratenvati^1 \in \Stratenvati(s_1), \exists \stratdet^2 \in \Stratdet(s_2), \exists \stratsys^2 \in \Stratsys(s_2), \exists \stratrc^2 \in \Stratrc(s_2),\\
    &\; \trace(\Outcati[1]) = \trace(\Outcati[2]).
  \end{align*}
  
  Now, we prove that the game played with unrestricted input strategies is stronger than the game played with only trace-based input strategies.
  To prove $s_1 \aaee s_2$, we assume arbitrary $\stratenv^2 \in \Stratenv(s_2)$, $\stratsys^1 \in \Stratsys(s_1)$, $\stratdet^1 \in \Stratdet(s_1)$ and $\stratrc^1 \in \Stratrc(s_1)$. 
  
  Assumption (1) implies that $\stratenvati^1 \in \Stratenvati(s_1)$, $\stratsys^2 \in \Stratsys(s_2)$, $\stratdet^2 \in \Stratdet(s_2)$ and $\stratrc^2 \in \Stratrc(s_2)$ exist such that $\trace(\Outcati[1]) =  \trace(\Outc[2])$ (2).
  We construct $\stratenv^1$ from $\stratenvati^1$ as follows:
  \[
    \stratenv^1(\pi) = 
      \begin{cases}
        a & \text{if $\trace(\pi) a$ is a prefix of $\trace(\Outcati[1])$}\\
          & \quad\text{and $a \in \inp(\after{s_1}{\trace(\pi)})$}\\
        \bot & \text{otherwise}
      \end{cases}
  \]
  Clearly, $\stratenv^1$ is a trace-based input strategy.
  By construction, this strategy also has $\trace(\Outcati[1]) = \trace(\Outc[1])$ (3).
  Moreover, $\trace(\Outc[1]) = \trace(\Outc[2])$ follows from assumptions (2) and (3), which shows that $\stratenv^1$, $\stratsys^2$, $\stratdet^2$ and $\stratrc^2$ are witnesses proving $s_1 \aaee s_2$.
  \qed
\end{proof}

\section{Alternating Simulation}
\label{sec:alternating-simulation}

In \cite{Aluetal98}, two alternating refinement relations have been introduced for alternating transition systems: alternating-trace containment
and alternating simulation. As shown in \cite{Aluetal98}, alternating simulation is stronger than alternating-trace containment, and
both relations coincide for deterministic alternating transition systems.
This should also hold for instantiations on interface automata.
We thus compare our adaptation of alternating-trace containment to the adaptation of alternating simulation from~\cite{AlHe01}.

\begin{definition}
  \label{def:as}
  {\normalfont\cite{AlHe01}}
  Let $s_1$, $s_2 \in \lts$.
  Then $R \subseteq Q_1 \times Q_2$ is an \emph{alternating simulation from $s_1$ to $s_2$} if for all $(q_1, q_2) \in R$, 
  \begin{itemize}
    \item 
      $\out(q_1) \subseteq \out(q_2)$ and $\inp(q_2) \subseteq \inp(q_1)$, and
    \item
      for all $\ell \in \out(q_1) \cup \inp(q_2)$ and $q_1' \in (\after{q_1}{\ell})$, there is a $q_2' \in (\after{q_2}{\ell})$ such that $q_1' \mathrel{R} q_2'$.
  \end{itemize}
  The greatest alternating simulation is denoted $\as$.
  We write $s_1 \as s_2$ to denote $q_1^0 \as q_2^0$.
\end{definition}

Theorem~\ref{the:as-implies-ati} relates the two alternating relations.
Example~\ref{exa:as-implies-ati-strict} proves strictness.

\begin{theorem}
  \label{the:as-implies-ati}
  Let $s_1, s_2 \in \lts$.
  Then $s_1 \as s_2 \implies s_1 \ati s_2$.
\end{theorem}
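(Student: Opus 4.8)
The plan is to turn a winning alternating simulation into explicit protagonist strategies that keep the two runs perfectly synchronized, exploiting that $\as$ is local and branching. Let $R$ denote the greatest alternating simulation $\as$, so that $q_1^0 \mathrel{R} q_2^0$. I fix arbitrary antagonist strategies $\stratsys^1 \in \Stratsys(s_1)$, $\stratdet^1 \in \Stratdet(s_1)$, $\stratrc^1 \in \Stratrc(s_1)$ (the first block of quantifiers in $\ati$). The guiding idea is that the protagonist should \emph{mirror} the antagonist's output and race choices from $s_1$ onto $s_2$, and afterwards mirror the antagonist's input choices from $s_2$ back onto $s_1$. Since $(q_1,q_2) \in R$ forces $\out(q_1) \subseteq \out(q_2)$ and $\inp(q_2) \subseteq \inp(q_1)$ at every related pair (Definition~\ref{def:as}), each mirrored action is guaranteed to be enabled on the receiving side.

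To express this as honest strategies (functions of a \emph{single} path) despite the quantifier order, I would observe that a determinization strategy and a trace together pin down a unique path: let $\phi_1(\sigma)$ be the path of $s_1$ obtained by following $\stratdet^1$ along the labels of $\sigma$ (defined whenever that path exists). I then set the protagonist's $s_2$-strategies to $\stratsys^2(\pi_2) = \stratsys^1(\phi_1(\trace(\pi_2)))$, $\stratrc^2(\pi_2) = \stratrc^1(\phi_1(\trace(\pi_2)))$, and $\stratdet^2(\pi_2,\ell)$ equal to an $R$-matching successor of $\stratdet^1(\phi_1(\trace(\pi_2)),\ell)$, as supplied by the second clause of Definition~\ref{def:as}. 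Wherever these prescriptions are undefined or would violate the validity constraints of Definitions~\ref{def:action-strategies} and~\ref{def:det-strategy}, I default to $\bot$ (respectively an arbitrary successor, or $0$); this is harmless, as such paths never arise on the synchronized run. Once the antagonist reveals $\stratenvati^2 \in \Stratenvati(s_2)$, I build $\phi_2$ analogously from $\stratdet^2$ and define $\stratenvati^1(\pi_1) = \stratenvati^2(\phi_2(\trace(\pi_1)))$, again defaulting to $\bot$ where invalid. Validity of the mirrored strategies follows immediately from the two inclusions of $R$: an output prescribed for $s_2$ lies in $\out$ of the $R$-larger state, and an input prescribed for $s_1$ lies in $\inp$ of the $R$-larger state.

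The core of the argument is then an induction. By Definition~\ref{def:outcomes} the outcome $\pi_1 = \Outcati[1]$ follows $\stratdet^1$ and $\pi_2 = \Outcati[2]$ follows $\stratdet^2$, so on the synchronized run $\phi_1(\trace(\pi_2))$ and $\phi_2(\trace(\pi_1))$ return exactly the corresponding prefixes. I prove by induction on the number of steps the invariant that, after $k$ steps, the two prefixes carry equal traces and their last states are $R$-related; the base case is $q_1^0 \mathrel{R} q_2^0$. For the inductive step, at $R$-related states $(q_1,q_2)$ the mirroring makes \emph{both} automata face the identical triple $(\stratenvati^2(\pi_2),\, \stratsys^1(\pi_1),\, \stratrc^1(\pi_1))$ of input, output and race values. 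Since the label chosen by $\operatorname{next}$ depends only on this triple, $s_1$ and $s_2$ perform the same action $\ell$ or both halt. That action is either an output in $\out(q_1)$ or an input in $\inp(q_2)$, so the second clause of Definition~\ref{def:as} applies to the antagonist-chosen successor $\stratdet^1(\pi_1,\ell)$ and produces the $R$-matching successor selected by $\stratdet^2$, restoring the invariant. Passing to the limit yields $\trace(\Outcati[1]) = \trace(\Outcati[2])$, establishing $s_1 \ati s_2$.

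The main obstacle I expect is precisely the quantifier order: the protagonist must commit $s_2$'s output, race, and determinization strategies before seeing $\stratenvati^2$, so the strategies cannot be defined "as the run unfolds" but must be bona fide functions. The device resolving this is the reconstruction of the corresponding opposite-side path purely from a trace via the already-fixed determinization strategy, which cleanly decouples the two mirroring directions. The delicate points to verify are that these reconstructed paths coincide with the actual synchronized prefixes and that every mirrored action is enabled (the two inclusions of $R$); the race mirroring then needs no case split, because making the two triples \emph{identical}, rather than merely compatible, forces $\operatorname{next}$ to behave identically in all three of its branches.
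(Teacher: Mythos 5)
Your proof is correct and follows essentially the same route as the paper's: the paper likewise converts the alternating simulation into protagonist strategies by mirroring $\stratsys^1$, $\stratrc^1$, $\stratdet^1$ and (afterwards) $\stratenvati^2$ across a trace-preserving, $\as$-preserving correspondence between paths --- there realized as an inductively defined partial map $g:\Path(s_1)\rightharpoonup\Path(s_2)$ together with its inverse --- and it maintains exactly your invariant that corresponding prefixes have equal traces and $\as$-related endpoints. Your device of recovering the corresponding path purely from its trace via the already-fixed determinization strategies ($\phi_1$ from $\stratdet^1$, $\phi_2$ from $\stratdet^2$) is a slightly tidier way to obtain the well-defined pull-back that the paper gets from the claimed injectivity of $g$, but it is a technical variation, not a different argument.
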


\begin{proof}
  Assume $s_1 \as s_2$ (1).
  To prove $s_1 \ati s_2$, assume arbitrary strategies $\stratsys^1 \in \Stratsys(s_1)$, $\stratdet^1 \in \Stratdet(s_1)$ and $\stratrc^1 \in \Stratrc(s_1)$ (chosen by the antagonist).
  
  First, let us extend the definition of alternating simulation $\as$ to relate paths $\as$ of $s_1$ and $s_2$, instead of only states, as follows: $\pi_1 \as \pi_2$ holds if $\trace(\pi_1) = \trace(\pi_2)$ and if all pairs of states $(q_1,q_2)$ in $\pi_1$ and $\pi_2$ have $q_1 \as q_2$.
  
  Now, we define a partial function $g : \Path(s_1) \rightharpoonup \Path(s_2)$ inductively as follows:
  \begin{align*}
    g(q_1^0) &= q_2^0\\
    g(\pi_1 \ell q_1) &=
			\begin{cases}
			  g(\pi_1) \ell q_2
					& \text{for some $q_2$ such that $q_1 \as q_2$, }\\
 					& \text{if $g(\pi_1) \downarrow$ and $\ell \in \out(\last(\pi_1)) \cup \inp(\last(g(\pi_1)))$}\\
			  \bot & \text{otherwise}
			\end{cases}
  \end{align*}
  Let $P_1$ be the subset of $\Path(s_1)$ for which $g$ is defined, then clearly $P_1$ is prefix closed.
  We first show that for every $\pi_1 \in \Path(s_1)$,
  \begin{itemize}
    \item
			if $\pi_1 = \pi_1' \ell q_1$, $\pi_1' \in P_1$ and $\ell \in \out(q_1) \cup \inp(q_2)$, then indeed there exists some $q_2$ with $q_1 \as q_2$, and
    \item
			$\pi_1 \as g(\pi_1)$ holds (1).
  \end{itemize}
  We do this by induction on the length of $\trace(\pi_1)$.
	For the base case, let $\pi = q_1^0$, then the former point is vacuously true, and the latter point follows directly from $s_1 \as s_2$.
	For the inductive step, assume that $\pi_1 = \pi_1' \ell q_1$, and assume as induction hypothesis that $\pi_1' \as g(\pi_1')$.
	Then both points follows directly from $\last(\pi_1') \as \last(g(\pi_1'))$ and the construction of $g$.
	
  From the latter point, it also follows that $g$ is well-defined.
  Additionally, $g$ is also clearly injective, so there is a partial inverse $g^{-1}$.
  Let $P_2$ be the subset of $\Path(s_2)$ for which $g^{-1}$ is defined.
  
  Now, we construct $\stratsys^2 \in \Stratsys(s_2)$, $\stratdet^2 \in \Stratdet(s_2)$ and $\stratrc^2 \in \Stratrc(s_2)$ as follows:
  \begin{align*}
    \stratsys^2(\pi_2) &=
      \begin{cases}
        \stratsys^1(g^{-1}(\pi_1)) & \text{if $\pi_2 \in P_2$}\\
        \bot & \text{otherwise}
      \end{cases}\\
    \stratdet^2(\pi_2, \ell) &=
      \begin{cases}
        g(\stratdet^1(g^{-1}(\pi_2))) & \text{if $\pi_2 \in P_2$ and $\stratdet^{-1}(g^{-1}(\pi_2)) \downarrow$}\\
        \text{arbitrary} & \text{otherwise}
      \end{cases}\\
    \stratrc^2(\pi_2) &=
      \begin{cases}
        \stratrc^1(g^{-1}(\pi_1)) & \text{if $\pi_2 \in P_2$}\\
        \text{arbitrary} & \text{otherwise}
      \end{cases}
  \end{align*}

	Furthermore, for arbitrary $\stratenvati^2 \in \Stratenvati(s_2)$, we construct $\stratenvati^1$ as
	\begin{align*}
	  \stratenvati^1(\pi_1) =
	  \begin{cases}
	    \stratenvati^2(\pi_2) & \text{if $\pi_1 \in P_1$}\\
	    \bot & \text{otherwise}
	  \end{cases}
	\end{align*}

  From assumption (1) that $\pi_1 \as g(\pi_1)$ and the construction of $\stratsys^2$, $\stratrc^2$, $\stratdet^2$ and $\stratenvati^1$, it follows that $\Outcati[1] \as \Outcati[2]$.
  This implies that $\trace(\Outcati[1]) = \trace(\Outcati[2])$, so these strategies are witnesses proving $s_1 \ati s_2$.
  \qed
\end{proof}

\begin{example}
  \label{exa:as-implies-ati-strict}
  Readers familiar with ordinary trace containment and simulation will recognize IA $s_M$ and $s_N$ in Figure~\ref{fig:as-implies-ati-strict} as a standard example that shows the difference between the two relations.
  These IA also show the difference between the alternating refinement relations, since $s_M \ati s_N$ and $s_M \not\as s_N$ hold.
  
  Let us first establish $s_M \not\as s_N$.
  Any alternating simulation relation $R$ from $s_M$ to $s_N$ must have $q_M^0 \mathrel{R} q_N^0$.
  Since $q_M^0$ and $q_N^0$ share output $x$, Definition~\ref{def:as} states that either $q_M^1 \mathrel{R} q_N^1$ or $q_M^1 \mathrel{R} q_N^{1\prime}$ should hold.
  If $q_M^1 \mathrel{R} q_N^1$ holds, then $R$ is not an alternating simulation, since $\out(q_M^1) = \{x,y\} \not\subseteq \out(q_N^1) = \{x\}$.
  Likewise, $\out(q_M^1) \not\subseteq \out(q_N^{1\prime}) = \{y\}$ holds, so this disproves alternating simulation.
  
  Clearly, $s_M \ati s_N$ holds: any output strategy by the antagonist in $s_M$ yields a unique path through $s_M$.
  The protagonist can choose output and determinization strategies in $s_N$ such that the outcome in $s_N$ has the same trace.
  
  \begin{figure}
    \tikzlts
    \begin{center}
      \begin{tikzpicture}[node distance=16mm]
        \node [named,initial,initial text=] (0) {$q_M^0$};
        \node (init-text) [above left=0mm and 0 of 0] {$s_M$};
        \node [named] (1) [right=9mm of 0] {$q_M^1$};
        \node [named] (2a) [above right=3mm and 9mm of 1] {$q_M^2$};
        \node [named] (2b) [below right=3mm and 9mm of 1] {$q_M^{2\prime}$};
        \path
        (0) edge node [above] {$x$} (1)
        (1) edge node [above left] {$x$} (2a)
        (1) edge node [below left] {$y$} (2b)
        ;
      
        \node [align=center] [right=30mm of 0] {$\ati$\\$\notrel\as$};
      
        \node [named,initial,initial text=] [right=45mm of 0] (0) {$q_N^0$};
        \node (init-text) [above left=0mm and 0 of 0] {$s_N$};
        \node [named] (1a) [above right=3mm and 9mm of 0] {$q_N^1$};
        \node [named] (1b) [below right=3mm and 9mm of 0] {$q_N^{1\prime}$};
        \node [named] (2a) [right=9mm of 1a] {$q_N^2$};
        \node [named] (2b) [right=9mm of 1b] {$q_N^{2\prime}$};
        \path
        (0) edge node [above left] {$x$} (1a)
        (0) edge node [below left] {$x$} (1b)
        (1a) edge node [below] {$x$} (2a)
        (1b) edge node [above] {$y$} (2b)
        ;
      \end{tikzpicture}
    \end{center}
    \caption{IA $s_M$ and $s_N$.}
    \label{fig:as-implies-ati-strict}
  \end{figure}
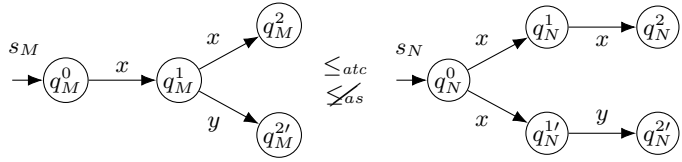
\end{example}

In the deterministic setting, all presented relations coincide.
Clearly, $\ioco$ and $\uioco$ coincide by their definitions.
We show that the remaining relations coincide by proving that the weakest relation implies the strongest.

\begin{theorem}
  \label{the:ir-is-as-det}
  Let $s_1, s_2 \in \lts$, such that $s_2$ is deterministic.
  Then
  \[s_1 \iuoe s_2 \implies s_1 \as s_2.\]
\end{theorem}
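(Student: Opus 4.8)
The plan is to build an explicit alternating simulation relation from $s_1$ to $s_2$, exploiting the fact that $s_2$ is deterministic so that each of its traces leads to a single state. Since the statement is already phrased in terms of $\iuoe$, I would work directly with the hypothesis $s_1 \iuoe s_2$, that is, $\oetraces(s_1) \cap \iutraces(s_2) \subseteq \iutraces(s_1) \cap \oetraces(s_2)$, together with Lemmas~\ref{lem:oe-traces-form} and~\ref{lem:intersection-oe-iu-in-traces}.

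The candidate relation is
\[
  R = \{ (q_1,q_2) \mid \exists\, \sigma \in \oetraces(s_1) \cap \iutraces(s_2) : q_1 \in \after{s_1}{\sigma} \wedge \after{s_2}{\sigma} = \{q_2\} \}.
\]
I would first check that this is well defined and contains $(q_1^0,q_2^0)$: for any witnessing $\sigma$, applying the hypothesis gives $\sigma \in \iutraces(s_1) \cap \oetraces(s_2)$, hence $\sigma \in \traces(s_1) \cap \traces(s_2)$ by Lemma~\ref{lem:intersection-oe-iu-in-traces}, and determinism of $s_2$ then makes $\after{s_2}{\sigma}$ a singleton. The empty trace witnesses $(q_1^0,q_2^0) \in R$.

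The core of the proof is verifying the two clauses of Definition~\ref{def:as} for an arbitrary $(q_1,q_2) \in R$ witnessed by $\sigma$. The guiding idea is that a matching move extends $\sigma$: appending an output $x \in \out(q_1)$ keeps the trace output-existential for $s_1$ (and trivially input-universal for $s_2$), while appending an input $a \in \inp(q_2)$ keeps it input-universal for $s_2$ and, by Lemma~\ref{lem:oe-traces-form}, output-existential for $s_1$. In both cases $\sigma\ell \in \oetraces(s_1) \cap \iutraces(s_2)$, so the hypothesis flips this into $\iutraces(s_1) \cap \oetraces(s_2)$. For the containment clause this immediately yields $\out(q_1) \subseteq \out(q_2)$ (from $\sigma x \in \oetraces(s_2)$) and $\inp(q_2) \subseteq \inp(q_1)$ (from $\sigma a \in \iutraces(s_1)$, using $q_1 \in \after{s_1}{\sigma}$ and that $\inp(\cdot)$ is universally quantified). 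For the step clause, any $\ell$-successor $q_1'$ of $q_1$ lies in $\after{s_1}{\sigma\ell}$, and determinism of $s_2$ forces $\after{s_2}{\sigma\ell}$ to be a single state $q_2'$, which is the required $\ell$-successor of $q_2$; the pair $(q_1',q_2')$ then lies in $R$, witnessed by $\sigma\ell$.

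I expect the delicate points to be bookkeeping rather than depth: keeping straight which of $\oetraces$ and $\iutraces$ each extended trace inhabits, and being careful that $\inp(\after{s_1}{\sigma})$ is a \emph{universal} condition so that membership transfers down to the specific state $q_1$, whereas $\inp(\after{s_2}{\sigma})$ collapses to $\inp(q_2)$ precisely because $s_2$ is deterministic. Determinism is used exactly to guarantee that the matching successor in $s_2$ is both forced and unique, which is what allows a single trace-indexed relation to serve as the simulation; without it one could not pick a canonical $q_2'$.
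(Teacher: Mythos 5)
Your proposal is correct and matches the paper's own proof essentially step for step: the paper uses the same trace-indexed relation $R$ (stated with $q_2 \in \after{s_2}{\sigma}$, which by determinism is equivalent to your singleton formulation), verifies the initial pair via $\epsilon$, flips $\sigma\ell \in \oetraces(s_1) \cap \iutraces(s_2)$ through the $\iuoe$ hypothesis for both containment clauses, and discharges the transfer condition with $\sigma\ell$ as the new witness. The only cosmetic difference is that you make the use of Lemma~\ref{lem:intersection-oe-iu-in-traces} and of the singleton $\after{s_2}{\sigma}$ slightly more explicit than the paper does.
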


\begin{proof}
  Assume $s_1 \iuoe s_2$ (1).
  We show $s_1 \as s_2$ by proving that the relation
  \begin{eqnarray*}
    R & = & \{ (q_1, q_2) \mid \exists \sigma \in \oetraces(s_1) \cap \iutraces(s_2): \\
      &   & \hspace{2cm} q_1 \in (\after{s_1}{\sigma})\mbox{ and }  q_2 \in (\after{s_2}{\sigma})\}
  \end{eqnarray*}
  is an alternating simulation from $s_1$ to $s_2$.
  
  Since $\epsilon \in \oetraces(s_1) \cap \iutraces(s_2)$, $q_1^0 \in (\after{s_1}{\epsilon})$  and  $q_2^0 \in (\after{s_2}{\epsilon})$, we indeed have
  $(q_1^0, q_2^0) \in R$, as required.
  
  Suppose $(q_1, q_2) \in R$.  Then there exists
  $\sigma \in \oetraces(s_1) \cap \iutraces(s_2)$ (2) such that $q_1 \in (\after{s_1}{\sigma})$ (3) and $q_2 \in (\after{s_2}{\sigma})$ (4).
  
  First, we show $\inp(q_2) \subseteq \inp(q_1)$ by proving that $a \in \inp(q_2)$ implies $a \in \inp(q_1)$:
  \begin{align*}
            & a \in \inp(q_2) \\
    \implies& \forall q_2 \in (\after{s_2}{\sigma}): a \in \inp(q_2) \ttag{assumption (4) and $s_2$ deterministic}\\
    \implies& \sigma a \in \oetraces(s_1) \cap \iutraces(s_2) \ttag{assumption (2) and Definition~\ref{def:input-universal}}\\
    \implies& \sigma a \in \iutraces(s_1) \ttag{assumption (1)}\\
    \implies& a \in \inp(q_1) \ttag{assumption (3) and Definition~\ref{def:input-universal} of $\iutraces$}
  \end{align*}
  Next, we show $\out(q_1) \subseteq \out(q_2)$ by proving $x \in \out(q_1)$ implies  $x \in \out(q_2)$:
  \begin{align*}
            & x \in \out(q_1) \\
    \implies& \exists q_1 \in (\after{s_1}{\sigma}): x \in \out(q_1) \ttag{assumption (3)}\\
    \implies& \sigma x \in \oetraces(s_1) \cap \iutraces(s_2) \ttag{assumption (2) and Definition~\ref{def:output-existential}}\\
    \implies& \sigma x \in \oetraces(s_2) \ttag{assumption (1)}\\
    \implies& x \in \out(q_2) \ttag{assumption (4) and $s_2$ deterministic}
  \end{align*}
  Finally, we prove the transfer condition in Definition~\ref{def:as}.
  Suppose that $\ell \in \out(q_1) \cup \inp(q_2)$ and $q_1' \in (\after{q_1}{\ell})$.
  Now if $\ell \in \out(q_1)$ then, as we established above, $\sigma l \in \oetraces(s_1) \cap \iutraces(s_2)$ and $l \in \out(q_2)$.
  Moreover, if $\ell \in \inp(q_2)$ then, as established above, $\sigma l \in \oetraces(s_1) \cap \iutraces(s_2)$ and $l \in \out(q_2)$.
  Using assumptions (3) and (4), this implies that there exists a $q_2' \in (\after{q_2}{\ell})$
  such that $(q_1', q_2') \in R$, as required.
  
  This proves that $R$ is an alternating simulation from $s_1$ to $s_2$, and therefore $s_1 \as s_2$.
  \qed
\end{proof}

Efficient algorithms for checking alternating simulation exist~\cite{AlHe01}.
Since all relations treated in this paper coincide if the right-hand IA is deterministic, an approach to decide any of these relations between two IA could be to transform the right-hand IA to a deterministic IA, preserving that relation, and then use the algorithm for alternating simulation.
The standard subset-construction for determinization \cite{HU79,Tre96}, however, does not preserve input-failure refinement, as Example~\ref{exa:wrong-determinization} shows.
We recall the subset-construction in Definition~\ref{def:subset-construction}.

\begin{definition}
  \label{def:subset-construction}
  Let $s \in \lts$.
  Then $\det(s) = (\mathcal{P}(Q_s) \setminus \{\emptyset\}, I, O, T_{\det}, \{q_s^0\})$, with
  \[
    T_{\det} \isdef \{(Q,\ell,\after[s]{Q}{\ell}) \mid Q \subseteq Q_s, \ell \in L,
    (\after[s]{Q}{\ell}) \neq \emptyset\}.
  \]
\end{definition}

\begin{example}
  \label{exa:wrong-determinization}
  
  Consider the interface automaton $s_A$ from Figure~\ref{fig:ir}.
  We perform the subset construction on $s_A$, and obtain $\det(s_A)$ as shown in Figure~\ref{fig:determinization}.
  Whereas $s_A$ contains failure trace $a\overline{a}$, $\det(s_A)$ does not.
  As a consequence, $s_A \not\ir \det(s_A)$, so the model is changed with respect to input-failure refinement.
  
  \begin{figure}
    \tikzlts
    \begin{center}
      \begin{tikzpicture}[node distance=19mm]
	\node [namedlong,initial,initial text=] (0) {$\{q_A^0\}$};
	\node (init-text) [above left=0mm and -6mm of 0] {$\det(s_A)$};
	\node [namedlong] (12) [right=5mm of 0] {$\{q_A^1,q_A^2\}$};
	\node [namedlong] (1) [right=5mm of 12] {$\{q_A^1\}$};
	\node [namedlong] (2) [right=5mm of 1] {$\{q_A^2\}$};
	\path
	(0) edge [loop below] node {$x$} (0)
	(0) edge node [below] {$a$} (12)
	(12) edge node [below] {$x$} (1)
	(12) edge [bend left] node [above] {$a$} (2)
	(1) edge [loop below] node {$x$} (1)
	(1) edge node [below] {$a$} (2)
	;
	
	\node [namedlong,initial,initial text=,right=53mm of 0] (0) {$\{q_A^0\}$};
	\node (init-text) [above left=0mm and -6mm of 0] {$\detiu(s_A)$};
	\node [namedlong] (12) [right=5mm of 0] {$\{q_A^1,q_A^2\}$};
	\node [namedlong] (1) [right=5mm of 12] {$\{q_A^1\}$};
	\node [namedlong] (2) [right=5mm of 1] {$\{q_A^2\}$};
	\path
	(0) edge [loop below] node {$x$} (0)
	(0) edge node [below] {$a$} (12)
	(12) edge node [below] {$x$} (1)
	(1) edge [loop below] node {$x$} (1)
	(1) edge node [below] {$a$} (2)
	;
      \end{tikzpicture}
    \end{center}
    \caption{The subset-construction (standard and input-universal, respectively) performed on $s_A$.
    Only the part reachable from the initial state is shown.}
    \label{fig:determinization}
  \end{figure}
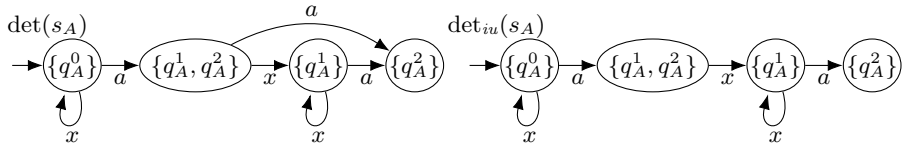
\end{example}

\begin{shortversion}
We introduce a determinization variant which respects input-universality, in order to preserve input-failure refinement, instead of normal trace containment.  
\end{shortversion}
\begin{longversion}
We introduce a determinization variant which respects input-universality, in order to preserve input-failure refinement.
Note that this does not preserve other relations, such as traditional trace containment.  
\end{longversion}

\begin{definition}
  \label{def:determinization}
  Let $s \in \lts$.
  Then $\detiu(s) = (\mathcal{P}(Q_s) \setminus \emptyset, I, O, T_{\detiu}, \{q_s^0\})$, with
  \begin{align*}
    T_{\detiu} &\isdef \{(Q,\ell,\after[s]{Q}{\ell}) \mid \emptyset \neq Q \subseteq Q_s, 
    \ell \in \inp_s(Q) \cup \out_s(Q)\}.
  \end{align*}
\end{definition}

\begin{example}
  Figure~\ref{fig:determinization} also shows the input-universal determinization $\detiu(s_A)$ of $s_A$.
  Since $a \not\in \inp(q_A^2)$, input $a$ is not universally enabled in $\{q_A^1,q_A^2\}$, which implies that this state has no $a$-transition in $\detiu(s_A)$.
  In fact, the reader may check that $s_A \ireq \detiu(s_A)$ holds.
  This also follows from Theorem~\ref{tm:determinization}.
\end{example}

\begin{longversion}
\begin{fact}
  \label{fac:determinization-deterministic}
  For $s \in \lts$, $\detiu(s)$ is deterministic.
\end{fact}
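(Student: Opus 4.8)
The plan is to establish that the transition relation $T_{\detiu}$ is functional on state--label pairs, that every transition it contains leads to a legal (non-empty) state, and then to read off determinism by a routine induction. Throughout I use the definition of determinism from Definition~\ref{def:lts-definitions}: $\detiu(s)$ is deterministic iff $|\after{\detiu(s)}{\sigma}| = 1$ for every $\sigma \in \traces(\detiu(s))$.

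First I would observe that, by the shape of $T_{\detiu}$, the only possible target of an $\ell$-labelled transition out of a state $Q$ is the set $\after[s]{Q}{\ell}$, which is uniquely determined by $Q$ and $\ell$. Hence for each pair $(Q,\ell)$ there is at most one $Q'$ with $(Q,\ell,Q') \in T_{\detiu}$, so $T_{\detiu}$ is a partial function on $(\mathcal{P}(Q_s)\setminus\{\emptyset\}) \times L$.

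Second, I would verify that whenever such a transition exists its target lies in the state space $\mathcal{P}(Q_s)\setminus\{\emptyset\}$, i.e.\ is non-empty. A transition out of $Q$ exists exactly when $\ell \in \inp_s(Q) \cup \out_s(Q)$. If $\ell \in \out_s(Q)$, the existential reading of $\out_s$ yields some $q \in Q$ with $q \xrightarrow{\ell}_s$, so $\after[s]{Q}{\ell} \neq \emptyset$. If $\ell \in \inp_s(Q)$, the universal reading of $\inp_s$ yields $q \xrightarrow{\ell}_s$ for every $q \in Q$, and since $Q \neq \emptyset$ at least one such $q$ exists, so again $\after[s]{Q}{\ell} \neq \emptyset$. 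This non-emptiness check, where the asymmetric quantifier structure of $\inp_s$ and $\out_s$ comes into play, is the one mild subtlety of the argument.

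Finally, I would conclude by induction on the length of $\sigma$ that $|\after{\detiu(s)}{\sigma}| \leq 1$ for every $\sigma \in L^*$: the base case holds because the initial state $\{q_s^0\}$ is a single state, and the inductive step combines the functionality from the first step with the non-emptiness from the second. For $\sigma \in \traces(\detiu(s))$ the set $\after{\detiu(s)}{\sigma}$ is non-empty by definition of $\traces$, so it contains exactly one element, which is precisely the determinism condition. I expect no genuine obstacle; essentially all the content sits in the non-emptiness observation of the second step.
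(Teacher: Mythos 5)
Your proof is correct. The paper states this result as a \emph{Fact} with no proof at all, treating it as immediate from Definition~\ref{def:determinization}; your argument --- functionality of $T_{\detiu}$ on state--label pairs, non-emptiness of targets (where the existential reading of $\out_s$ and the universal reading of $\inp_s$ over a non-empty $Q$ both yield a witness), and a routine induction on trace length --- is exactly the verification the paper leaves implicit.
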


\begin{lemma}
  \label{lem:determinization-after}
  Let $s \in \lts$ and $\sigma \in \traces(\detiu(s))$.
  Then $\sigma \in \traces(s)$ and
  \[(\after{\detiu(s)}{\sigma}) = \{\after{s}{\sigma}\}\]
\end{lemma}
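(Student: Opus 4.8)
The plan is to establish both conclusions simultaneously by induction on the length of $\sigma$, leaning on the fact that $\detiu(s)$ is deterministic (Fact~\ref{fac:determinization-deterministic}). Determinism guarantees that $\after{\detiu(s)}{\sigma}$ is a singleton for every $\sigma \in \traces(\detiu(s))$, so the real content of the statement is to identify its unique element as the subset $\after{s}{\sigma}$ of $Q_s$, and then to read off $\sigma \in \traces(s)$ from the fact that states of $\detiu(s)$ are by construction nonempty subsets. For the base case $\sigma = \epsilon$, the initial state of $\detiu(s)$ is $\{q_s^0\}$, so $\after{\detiu(s)}{\epsilon} = \{\{q_s^0\}\} = \{\after{s}{\epsilon}\}$, and $\epsilon \in \traces(s)$ holds trivially.

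For the inductive step I would write $\sigma = \sigma'\ell$ and first note that traces are prefix closed, so $\sigma' \in \traces(\detiu(s))$; the induction hypothesis then gives $\sigma' \in \traces(s)$ and $\after{\detiu(s)}{\sigma'} = \{\after{s}{\sigma'}\}$. Writing $Q = \after{s}{\sigma'}$ for this unique state reached after $\sigma'$, the fact that $\sigma'\ell \in \traces(\detiu(s))$ forces $Q$ to have an outgoing $\ell$-transition. By Definition~\ref{def:determinization} of $T_{\detiu}$ this transition exists precisely when $\ell \in \inp_s(Q) \cup \out_s(Q)$, and when it does, its target is $\after[s]{Q}{\ell}$. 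The key computation is the compositionality of the $\after$-operator, namely $\after[s]{(\after{s}{\sigma'})}{\ell} = \after{s}{\sigma'\ell}$, which follows directly from the recursive definition of $q \xrightarrow{\sigma'\ell}_s r$ in Definition~\ref{def:lts-definitions}. This identifies the unique post-$\sigma$ state of $\detiu(s)$ as $\after{s}{\sigma}$, proving $\after{\detiu(s)}{\sigma} = \{\after{s}{\sigma}\}$.

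It remains to derive $\sigma \in \traces(s)$, and this is where the only genuine subtlety lies: $T_{\detiu}$ treats inputs universally and outputs existentially, so I must check that a concrete path with trace $\sigma$ exists in $s$ in both cases. If $\ell \in \out_s(Q)$ then some $q \in Q$ has $q \xrightarrow{\ell}_s$, giving a path with trace $\sigma'\ell$; if $\ell \in \inp_s(Q)$ then every $q \in Q$ has $q \xrightarrow{\ell}_s$, and since $Q = \after{s}{\sigma'} \neq \emptyset$ (because $\sigma' \in \traces(s)$ by the induction hypothesis) at least one such $q$ exists, again yielding a path with trace $\sigma$. Alternatively, and more slickly, one observes that $\after{s}{\sigma}$ is a bona fide state of $\detiu(s)$ and hence a nonempty subset of $Q_s$, so $\after{s}{\sigma} \neq \emptyset$ already entails $\sigma \in \traces(s)$. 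I expect no serious difficulty here; the main point to state carefully is the asymmetry between the existential $\out$ and universal $\inp$ in the transition relation, together with the compositionality of $\after$ that drives the inductive step.
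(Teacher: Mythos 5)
Your proof is correct and takes essentially the same route as the paper's: the same induction on the length of $\sigma$, the same appeal to the construction of $T_{\detiu}$ combined with compositionality of the $\after{}{}$ operator to identify the unique successor state as $\after{s}{\sigma}$, and the same key observation that the existential/universal asymmetry of $\out_s$/$\inp_s$ together with nonemptiness of $\after{s}{\sigma'}$ (from the induction hypothesis) yields $\sigma \in \traces(s)$. The only differences are cosmetic: you establish the two conclusions in the opposite order within the inductive step, and you add an optional shortcut via nonemptiness of states of $\detiu(s)$.
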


\begin{proof}
  By induction on the length of $\sigma$.
  The base case $\sigma = \epsilon$ is trivial, so assume $\sigma = \sigma'\ell$, and assume as induction hypothesis that $(\after{\detiu(s)}{\sigma'}) = \{\after{s}{\sigma'}\}$ (IH1) and that $\sigma' \in \traces(s)$, so $|\after{s}{\sigma'}| \ge 1$ (IH2).
  
  Then
  \begin{align*}
             & \sigma \in \traces(\detiu(s))\\
    \implies & \ell \in \inp_{\detiu(s)}(\after{\detiu(s)}{\sigma'}) \cup \out_{\detiu(s)}(\after{\detiu(s)}{\sigma'})
	  \ttag{determinism of $\detiu(s)$}\\
    \implies &\ell \in \inp_{\detiu(s)}(\{\after{s}{\sigma'}\}) \cup \out_{\detiu(s)}(\{\after{s}{\sigma'}\})
	  \ttag{(IH1)}\\
    \implies &\ell \in \inp_s(\after{s}{\sigma'}) \cup \out_s(\after{s}{\sigma'}) \quad (1)
	  \ttag{Construction of $T_{\detiu}$ in Definition~\ref{def:determinization}}\\
    \implies &\sigma\ell \in \traces(s)
	  \ttag{(IH2)}
  \end{align*}
  Furthermore, we have
  \begin{align*}
     & \after{\detiu(s)}{\sigma} \\
    =& \after[\detiu(s)]{(\after{\detiu(s)}{\sigma'})}{\ell}\\
    =& \after[\detiu(s)]{\{\after{s}{\sigma'}\}}{\ell} \ttag{(IH1)}\\
    =& \{\after[s]{\{\after{s}{\sigma'}\}}{\ell}\} \ttag{Construction of $T_{\detiu}$ in Definition~\ref{def:determinization} and (1)}\\
    =& \{\after{s}{\sigma}\} \qedtag
  \end{align*}
\end{proof}

\begin{proposition}
  \label{pro:det-preserves-rcl}
  Let $s \in \lts$.
  Then $\rcl(\irtraces(s)) = \rcl(\irtraces(\detiu(s)))$
\end{proposition}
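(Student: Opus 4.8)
The plan is to invoke Proposition~\ref{pro:rcl-irtraces-canonical}, which reduces the claim to proving $s \ireq \detiu(s)$, i.e.\ the two inclusions $\detiu(s) \ir s$ and $s \ir \detiu(s)$. Before handling these, I would record two consequences of Lemma~\ref{lem:determinization-after}. First, $\traces(\detiu(s)) = \utraces(s)$: this follows by an easy induction, since the construction of $T_{\detiu}$ in Definition~\ref{def:determinization} admits an input step only when it is universally enabled and an output step only when it is existentially enabled, matching exactly the definitions of $\iutraces$ and $\traces$ (together with Lemma~\ref{lem:utraces-iutraces}). Second, because the single state $\after{s}{\sigma}$ of $\detiu(s)$ carries an $a$-transition (for an input $a$) precisely when $a \in \inp_s(\after{s}{\sigma})$, we obtain $\inp_{\detiu(s)}(\after{\detiu(s)}{\sigma}) = \inp_s(\after{s}{\sigma})$ for all $\sigma \in \traces(\detiu(s))$. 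Combining these with Definition~\ref{def:ir-inclusion} yields the explicit description $\irtraces(\detiu(s)) = \utraces(s) \cup \{\sigma\overline a \mid \sigma \in \utraces(s),\, a \notin \inp_s(\after{s}{\sigma})\}$.

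The inclusion $\detiu(s) \ir s$ is then immediate: the description above together with $\utraces(s) \subseteq \traces(s)$ shows $\irtraces(\detiu(s)) \subseteq \irtraces(s) \subseteq \rcl(\irtraces(s))$, which is exactly $\detiu(s) \ir s$ by Definition~\ref{def:ir-inclusion}.

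The converse $s \ir \detiu(s)$ is the substantial direction, and it is here that the input-failure closure does the real work. I would take an arbitrary element of $\irtraces(s)$ and show it lies in $\rcl(\irtraces(\detiu(s)))$, splitting on whether its underlying trace $\sigma \in \traces(s)$ is input-universal. If $\sigma \in \traces(s) \cap \iutraces(s) = \utraces(s)$, then $\sigma \in \traces(\detiu(s)) \subseteq \irtraces(\detiu(s))$, and any input-failure $\sigma\overline a \in \irtraces(s)$ with such $\sigma$ is also in $\irtraces(\detiu(s))$ by the displayed description. If instead $\sigma$ is not input-universal, I would pick the shortest prefix $\rho\,b$ (with $b \in I$) at which universality first fails, so that $\rho \in \utraces(s) = \traces(\detiu(s))$ and $b \notin \inp_s(\after{s}{\rho}) = \inp_{\detiu(s)}(\after{\detiu(s)}{\rho})$; hence $\rho\overline b \in \irtraces(\detiu(s))$. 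Writing the chosen element as $\rho\,b\,\eta$ with $\eta \in \iftracesdom_{I,O}$ (this holds whether it is a plain trace $\rho b \tau$ or an input-failure $\rho b \tau \overline a$, since both $\tau$ and $\tau\overline a$ lie in $\iftracesdom_{I,O}$), the closure clause of Definition~\ref{def:refusal}, namely $\rho\overline b \in S \implies \rho\, b\, \eta \in \rcl(S)$, yields $\rho\,b\,\eta \in \rcl(\irtraces(\detiu(s)))$. This establishes $\irtraces(s) \subseteq \rcl(\irtraces(\detiu(s)))$ and hence $s \ir \detiu(s)$.

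With both inclusions in hand, Proposition~\ref{pro:rcl-irtraces-canonical} gives $\rcl(\irtraces(s)) = \rcl(\irtraces(\detiu(s)))$. I expect the main obstacle to be the converse direction: one must see that although $\detiu(s)$ deletes all transitions along non-input-universal traces, every such deleted trace of $s$ is nonetheless recovered from the closure of the corresponding input-failure, and that decomposing at the \emph{first} point where universality fails produces exactly an element of $\iftracesdom_{I,O}$ following the offending input, so that the closure clause of Definition~\ref{def:refusal} applies verbatim.
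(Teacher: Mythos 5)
Your proof is correct, and while it rests on the same two key ingredients as the paper's (Lemma~\ref{lem:determinization-after}, and the decomposition of a non-input-universal trace at the point where it leaves $\traces(\detiu(s))$, followed by the closure clause of Definition~\ref{def:refusal}), it is organized genuinely differently. The paper proves the two closure inclusions directly: for $(\subseteq)$ it shows element-wise that $\irtraces(s) \subseteq \rcl(\irtraces(\detiu(s)))$ and then re-derives the minimality-of-$\rcl$ argument by hand, and for $(\supseteq)$ it does a case analysis on the elements of $\rcl(\irtraces(\detiu(s)))$. You instead prove the mutual refinement $s \ireq \detiu(s)$ at the level of the $\irtraces$ sets and invoke Proposition~\ref{pro:rcl-irtraces-canonical} once at the end — in effect proving Theorem~\ref{tm:determinization} first and deriving the present proposition from it, the reverse of the paper's dependency; this is legitimate, since Proposition~\ref{pro:rcl-irtraces-canonical} is established independently of determinization. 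Your route buys two things. First, the explicit characterization $\traces(\detiu(s)) = \utraces(s)$, and with it the closed-form description of $\irtraces(\detiu(s))$, reduces the direction $\detiu(s) \ir s$ to the one-line element-wise inclusion $\irtraces(\detiu(s)) \subseteq \irtraces(s)$, with no analysis of closure elements at all. Second, your uniform treatment of the tail $\eta \in \iftracesdom_{I,O}$ (covering both plain suffixes $\tau$ and failure-ended suffixes $\tau\overline{a}$) correctly handles input-failure traces $\sigma'\overline{a} \in \irtraces(s)$ whose prefix $\sigma'$ is \emph{not} input-universal; the paper's second case of its $(\subseteq)$ direction applies Lemma~\ref{lem:determinization-after} to such a $\sigma'$ without having $\sigma' \in \traces(\detiu(s))$, a small (repairable) gap there, since for such $\sigma'$ the trace $\sigma'\overline{a}$ lies only in the closure $\rcl(\irtraces(\detiu(s)))$ and not in $\irtraces(\detiu(s))$ itself — exactly the point your argument gets right.
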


\begin{proof}
  $(\subseteq)$
  We first prove $\irtraces(s) \subseteq \rcl(\irtraces(\detiu(s)))$, by showing that any trace $\sigma \in \irtraces(s)$ (1) is also in $\rcl(\irtraces(\detiu(s)))$.
  We distinguish two cases, based on (1) and the form of $\irtraces$ in Definition~\ref{def:ir-inclusion}:
  \begin{itemize}
    \item 
      Consider $\sigma \in \traces(s)$ (2).
      If $\sigma = \epsilon$ or $\sigma \in \traces(\detiu(s))$, then proving $\sigma \in \rcl(\irtraces(\detiu(s)))$ is trivial, so assume $\sigma \neq \epsilon$ and $\sigma \not\in \traces(\detiu(s))$ (3).
      Then $\sigma = \rho\ell\tau$ (4) for some $\rho \in \traces(\detiu(s))$ (5) and $\rho\ell \not\in \traces(\detiu(s))$ (6).
      As such, there is some $Q \subseteq Q_s$ such that $\{q_s^0\} \xrightarrow{\sigma'}_{\detiu(s)} Q$ (7).
  
      Then $\ell \in I \cup O$ by (4), and $Q \,\notrel{\xrightarrow{\ell}}_{\detiu(s)}$ by (6) and (7).
      Consequently, $\ell \not \in \inp_s(Q) \cup \out_s(Q)$ (8) by construction of $T_{\detiu}$.
      Furthermore, if $\ell \in O$, then $\ell \in \out_s(Q)$ by (2), which would contradict (8), so $\ell \in I$.
      Thus, (5), (6) and (8) imply $\rho \overline{a} \in \irtraces(\detiu(s))$, so $\sigma \in \rcl(\irtraces(s))$.
    \item
      If $\sigma = \sigma'\overline{a}$, then $\sigma' \in \traces(s)$ and $a \not\in \inp_s(\after{s}{\sigma'})$.
      Then this also implies $a \not\in \inp_{\detiu(s)}(\{\after{s}{\sigma'}\})$ by construction of $T_{\detiu}$, and furthermore $a \not\in \inp_{\detiu(s)}(\after{\detiu(s)}{\sigma})$ by Lemma~\ref{lem:determinization-after}, so $\sigma'\overline{a} \in \rcl(\irtraces(\detiu(s)))$ holds.
  \end{itemize}
  This proves $\irtraces(s) \subseteq \rcl(\irtraces(\detiu(s)))$.
  Thus, $\rcl(\irtraces(\detiu(s)))$ is an input-failure closed superset of $\irtraces(s)$.
  Then it must be larger than the smallest input-failure closed supserset of $\irtraces(s)$, that is, $\rcl(\irtraces(s)) \subseteq \rcl(\irtraces(\detiu(s)))$.
  
  $(\supseteq)$
  Let $\sigma \in \rcl(\irtraces(\detiu(s)))$ (1).
  Then we prove $\sigma \in \rcl(\irtraces(s))$.
  We distinguish two cases based on the form of $\rcl(\irtraces(\detiu(s)))$ in Definition~\ref{def:ir-inclusion}.
  \begin{itemize}
    \item
      If $\sigma = \sigma' \overline{a} \in \irtraces(\detiu(s))$, or if $\sigma$ has some prefix $\sigma' a$ with $\sigma \overline{a} \in \irtraces(\detiu(s))$, then this implies that $(\after{\detiu(s)}{\sigma'}) = \{Q\}$ (2) for some $Q \subseteq Q_s$ with $a \not \in \inp_{\detiu(s)}(\{Q\})$ (3).
      Then also $(\after{s}{\sigma'}) = Q$ by (2) and Lemma~\ref{lem:determinization-after}, and $a \not \in \inp_s(Q)$ by construction of $T_{\detiu}$ in Definition~\ref{def:determinization}.
      Consequently, $\sigma'\overline{a} \in \irtraces(s)$ holds, so also $\sigma \in \rcl(\irtraces(s))$ holds.
    \item
      If $\sigma \in \traces(\detiu(s))$, then $\sigma \in \traces(s)$ by Lemma~\ref{lem:determinization-after}, so $\sigma \in \rcl(\irtraces(s))$ holds.
      \qed
  \end{itemize}  
\end{proof}
\end{longversion}

\begin{theorem}
\label{tm:determinization}
  Let $s \in \lts$.
  Then $s \ireq \detiu(s)$.
\end{theorem}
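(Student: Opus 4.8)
The plan is to reduce the statement to two results already in hand: the canonical-representation Proposition~\ref{pro:rcl-irtraces-canonical} and the closure-preservation Proposition~\ref{pro:det-preserves-rcl}. Recall that Proposition~\ref{pro:rcl-irtraces-canonical} characterizes input-failure equivalence purely in terms of closures, namely $s_1 \ireq s_2 \iff \rcl(\irtraces(s_1)) = \rcl(\irtraces(s_2))$. Hence it suffices to show that $s$ and $\detiu(s)$ have the same input-failure closure, which is exactly the content of Proposition~\ref{pro:det-preserves-rcl}. So the entire theorem collapses to a one-line argument: invoke Proposition~\ref{pro:det-preserves-rcl} to get $\rcl(\irtraces(s)) = \rcl(\irtraces(\detiu(s)))$, then apply Proposition~\ref{pro:rcl-irtraces-canonical} with $s_1 = s$ and $s_2 = \detiu(s)$ to conclude $s \ireq \detiu(s)$.

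The genuine work therefore lives inside Proposition~\ref{pro:det-preserves-rcl}, which I would prove as a double inclusion of input-failure closed sets, leaning throughout on Lemma~\ref{lem:determinization-after}. That lemma supplies the two facts that make the construction $\detiu$ well-behaved: $\detiu(s)$ is deterministic, and for every $\sigma \in \traces(\detiu(s))$ one has $\sigma \in \traces(s)$ together with $\after{\detiu(s)}{\sigma} = \{\after{s}{\sigma}\}$. Combined with the definition of $T_{\detiu}$, which only admits an input $a$ when it is \emph{universally} enabled (i.e.\ $a \in \inp_s(Q)$), this yields the crucial transfer: $a$ is refused after $\sigma$ in $\detiu(s)$ exactly when $a \notin \inp_s(\after{s}{\sigma})$. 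The $(\supseteq)$ direction is then routine, since any trace or input-failure of $\detiu(s)$ pulls back to $s$ via Lemma~\ref{lem:determinization-after}.

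The main obstacle is the $(\subseteq)$ direction, and specifically a trace $\sigma \in \traces(s)$ that fails to be a trace of $\detiu(s)$ (the standard subset construction has the opposite behaviour, adding spurious traces, as Example~\ref{exa:wrong-determinization} illustrates). The whole design of $\detiu$ is that such divergence can only occur at a non-universally-enabled input. Concretely, I would locate the first label $\ell$ along $\sigma$ where $\detiu(s)$ gets stuck; argue by the construction of $T_{\detiu}$ that $\ell \notin \inp_s(Q) \cup \out_s(Q)$ for the corresponding state set $Q$; and observe that $\ell$ cannot be an output (since $\sigma \in \traces(s)$ forces $\ell \in \out_s(Q)$ in that case), so $\ell$ is an input $a$ that is refused, giving $\rho\overline{a} \in \irtraces(\detiu(s))$ for the appropriate prefix $\rho$. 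By input-failure closedness, every continuation $\rho a \tau$ — and in particular $\sigma$ — then lies in $\rcl(\irtraces(\detiu(s)))$. Having shown $\irtraces(s) \subseteq \rcl(\irtraces(\detiu(s)))$, minimality of the closure gives $\rcl(\irtraces(s)) \subseteq \rcl(\irtraces(\detiu(s)))$. Once this inclusion is secured, the theorem itself follows immediately as the corollary described above.
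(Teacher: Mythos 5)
Your proposal is correct and follows the paper's own route exactly: the theorem is discharged by combining Proposition~\ref{pro:rcl-irtraces-canonical} with Proposition~\ref{pro:det-preserves-rcl}, and your sketch of the latter (pull-back via Lemma~\ref{lem:determinization-after} for $\supseteq$, and for $\subseteq$ locating the first stuck label, ruling out outputs, and using input-failure closedness) mirrors the paper's proof of that proposition. The only cosmetic difference is that determinism of $\detiu(s)$ is recorded in the paper as Fact~\ref{fac:determinization-deterministic} rather than as part of Lemma~\ref{lem:determinization-after}.
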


\begin{proof}
  Follows directly from Propositions~\ref{pro:rcl-irtraces-canonical} and~\ref{pro:det-preserves-rcl}.
  \qed
\end{proof}

\begin{corollary}
  \label{cor:det-preserves-relations}
  Let $s_1, s_2 \in \lts$.
  Then
  \begin{align*}
    s_1 \as \detiu(s_2) &\iff s_1 \ir s_2 \\
    s_1 \as \detiu(s_2) &\iff s_1 \iuoe s_2\\
    \Delta(s_1) \as \detiu(\Delta(s_2)) &\iff s_1 \uioco s_2\\
    \mbox{if } s_2 \mbox{ image finite then } s_1 \as \detiu(s_2) &\iff s_1 \aaee s_2
  \end{align*}
\end{corollary}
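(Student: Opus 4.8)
The plan is to prove the first line, $s_1 \as \detiu(s_2) \iff s_1 \ir s_2$, and then read off the remaining three lines as immediate consequences by composing it with the coincidence results already established. The whole argument is a matter of chaining the implications collected in Figure~\ref{fig:overview}; no new construction is needed, since all the real work has been done in Theorems~\ref{the:ir-iuoe}, \ref{the:uioco-ir}, \ref{the:iuoe-aaee}, \ref{the:ati-implies-aaee}, \ref{the:as-implies-ati}, \ref{the:ir-is-as-det} and \ref{tm:determinization}. The one enabling observation is that $\detiu(s_2)$ is deterministic (Fact~\ref{fac:determinization-deterministic}), which is precisely what makes the deterministic-case Theorem~\ref{the:ir-is-as-det} applicable, while Theorem~\ref{tm:determinization} ($s_2 \ireq \detiu(s_2)$) lets me freely swap $\detiu(s_2)$ for $s_2$ inside $\ir$.

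For the first line I would treat the two directions separately. For $(\Longrightarrow)$ I would descend the right-hand branch of the diagram: Theorem~\ref{the:as-implies-ati} turns $s_1 \as \detiu(s_2)$ into $s_1 \ati \detiu(s_2)$, Theorem~\ref{the:ati-implies-aaee} into $s_1 \aaee \detiu(s_2)$, the unconditional direction of Theorem~\ref{the:iuoe-aaee} into $s_1 \iuoe \detiu(s_2)$, and Theorem~\ref{the:ir-iuoe} into $s_1 \ir \detiu(s_2)$; finally, using $\detiu(s_2) \ir s_2$ from Theorem~\ref{tm:determinization} together with transitivity of $\ir$ (Proposition~\ref{pro:rcl-irtraces-canonical}), I obtain $s_1 \ir s_2$. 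For $(\Longleftarrow)$ I would go back up: $s_1 \ir s_2$ together with $s_2 \ir \detiu(s_2)$ and transitivity gives $s_1 \ir \detiu(s_2)$; Theorem~\ref{the:ir-iuoe} rewrites this as $s_1 \iuoe \detiu(s_2)$; and since $\detiu(s_2)$ is deterministic, Theorem~\ref{the:ir-is-as-det} yields $s_1 \as \detiu(s_2)$.

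The remaining three lines are then short. The second line is the first line composed with Theorem~\ref{the:ir-iuoe}, which gives $s_1 \ir s_2 \iff s_1 \iuoe s_2$. For the third line I would instantiate the first line at $\Delta(s_1)$ and $\Delta(s_2)$ to get $\Delta(s_1) \as \detiu(\Delta(s_2)) \iff \Delta(s_1) \ir \Delta(s_2)$, and then invoke Theorem~\ref{the:uioco-ir}, which identifies the right-hand side with $s_1 \uioco s_2$. For the fourth line I would combine the second line with Theorem~\ref{the:iuoe-aaee}: under the hypothesis that $s_2$ is image-finite that theorem delivers the full biconditional $s_1 \iuoe s_2 \iff s_1 \aaee s_2$, so $s_1 \as \detiu(s_2) \iff s_1 \iuoe s_2 \iff s_1 \aaee s_2$.

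There is no genuine mathematical obstacle here; the only thing to be careful about is bookkeeping, namely tracking for each cited theorem whether it is a biconditional or a one-way implication. The single place where a hypothesis is actually consumed is the fourth line: image-finiteness of $s_2$ is exactly what upgrades Theorem~\ref{the:iuoe-aaee} from the one-way implication $\aaee \Rightarrow \iuoe$ used in the first line to the biconditional needed here. Everywhere else the determinized model $\detiu(s_2)$ is deterministic by construction, so no finiteness assumption on $s_2$ itself is required for the first three equivalences.
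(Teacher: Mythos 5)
Your proposal is correct and matches the paper's intent exactly: the paper states this corollary without an explicit proof, as an immediate consequence of chaining Theorems~\ref{the:as-implies-ati}, \ref{the:ati-implies-aaee}, \ref{the:iuoe-aaee}, \ref{the:ir-iuoe}, \ref{the:uioco-ir}, \ref{the:ir-is-as-det} and~\ref{tm:determinization}, which is precisely the chain you spell out. Your bookkeeping is also right on the one delicate point — determinism of $\detiu(s_2)$ (rather than image-finiteness of $s_2$) is what drives the first three lines via Theorem~\ref{the:ir-is-as-det}, and image-finiteness of $s_2$ is consumed only in the fourth line to upgrade Theorem~\ref{the:iuoe-aaee} to a biconditional.
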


Completing the lattice in Figure~\ref{fig:overview}, one may expect alternating simulation to be the strongest relation, in the same way that ordinary simulation is the strongest in the spectrum of Van Glabbeek~\cite{vG01}.
But Example~\ref{exa:ioco-and-as} shows that alternating simulation is neither stronger nor weaker than $\ioco$.
This supports the conclusion in~\cite{JaTr19} that it is hard to $\ioco$-implement a given specification: even an alternating simulation refining implementation may not be $\ioco$-conformant.

\begin{example}
  \label{exa:ioco-and-as}
  Consider IA $s_P$ and $s_Q$ in Figure~\ref{fig:ioco-and-as}.
  They have $\Delta(s_P) \as \Delta(s_Q)$, as shown by the alternating simulation relation $\{(q_P^0,q_Q^0), (q_P^1,q_Q^1), (q_P^2,q_Q^2)\}$.
  However, $s_P \notrel{\ioco} s_Q$ since $\out(\after{s_P}{ab}) = \{x\} \not\subseteq \out(\after{s_Q}{ab}) = \{y\}$.
  
  Vice versa, in Figure~\ref{fig:as-implies-ati-strict}, $s_M \ioco s_N$ clearly holds, whereas $s_M \not\as s_N$.
  \begin{figure}
    \begin{center}
      \begin{tikzpicture}[node distance=19mm]
      {\tikzlts
      \node [named,initial,initial text=] (i0) {$q_P^0$};
      \node (init-text) [above left=-2mm and 0mm of i0] {$s_P$};
      \node [named] (i1) [right of=i0] {$q_P^1$};
      \node [named] (i2) [above right=4.5mm and 10mm of i1] {$q_P^2$};
      \node [named] (i2p) [below right=4.5mm and 10mm of i1] {$q_P^{2\prime}$};
      \path
      (i0) edge [loop below] node {$b$} (i0)
      (i0) edge [loop above, densely dashed] node {$\delta$} (i0)
      (i0) edge node [above] {$a$} (i1)
      (i1) edge node [above left] {$a$} (i2)
      (i1) edge [loop above, densely dashed] node {$\delta$} (i1)
      (i1) edge node [below left] {$b$} (i2p)
      (i2) edge [in=150,out=120,loop] node [left] {$a$} (i2)
      (i2) edge [in=30,out=60,loop] node [right=0.5mm] {$x$} (i2)
      (i2) edge [in=-60,out=-30,loop] node [right=0.5mm] {$b$} (i2)
      (i2p) edge [in=30,out=60,loop] node [above right] {$a$} (i2p)
      (i2p) edge [in=210,out=240,loop] node [left=0.5mm] {$b$} (i2p)
      (i2p) edge [in=-60,out=-30,loop] node [right] {$x$} (i2p)
      ;
      
      \node [align=center, below right=-3mm and 44mm of i0] {$\as$\\\vspace{-3mm}\\$\notrel{\ioco}$};
      
      \node [named,initial,initial text=] (0) [right=61mm of i0] {$q_Q^0$};
      \node (init-text) [above left=-2mm and 0mm of 0] {$s_Q$};
      \node [named] (1) [above right=7mm and 10mm of 0] {$q_Q^1$};
      \node [named] (1p) [below right=7mm and 10mm of 0] {$q_Q^{1\prime}$};
      \node [named] (2) [right of=1] {$q_Q^2$};
      \node [named] (2p) [right of=1p] {$q_Q^{2\prime}$};
      \path
      (0) edge [loop below, densely dashed] node {$\delta$} (0)
      (0) edge node [above left] {$a$} (1)
      (0) edge node [below left] {$a$} (1p)
      (1) edge node [above] {$a$} (2)
      (1) edge [loop below, densely dashed] node {$\delta$} (1)
      (1p) edge [loop above, densely dashed] node {$\delta$} (1p)
      (1p) edge node [below] {$b$} (2p)
      (2) edge [loop right] node {$x$} (2)
      (2p) edge [loop right] node {$y$} (2p)
      ;
      }
      \path[loosely dotted, thick, color=myGreen]
      (i0) edge [out=-49,in=-131,looseness=1.0] (0)
      (i1) edge [out=0,in=190,out looseness=1.2] (1)
      (i2) edge [out=0,in=140,in looseness=0.5,out looseness=0.2] (2)
      ;
      \end{tikzpicture}
    \end{center}
    \caption{
    Alternating simulation is not stronger than $\ioco$.
    The dotted lines indicate states related by alternating simulation.
    }
    \label{fig:ioco-and-as}
  \end{figure}
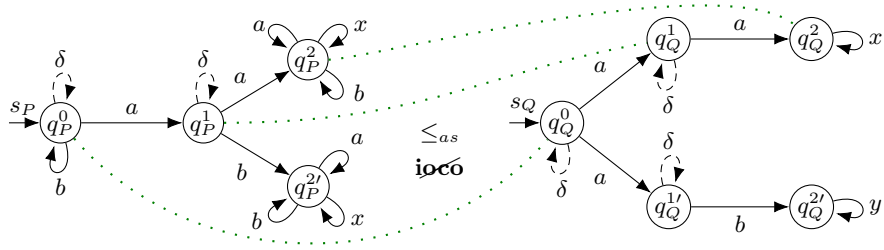
\end{example}
\section{Conclusion and Future Work}

We provided strong links between the $\ioco$ testing theory and alternating refinement theory on interface automata.
The overlap between the relations from these independently developed theories indicate that they express a natural notion of refinement.
Based on the strong correspondence between elements in testing theory and concepts from game theory~\cite{BoSt18}, the provided links pave the way for using results from game theory in testing with $\uioco$ and $\ir$.
We have also shown that alternating-trace containment does not lend itself well to an observational interpretation, but that a slight modification of the game rules solves this.
Likewise, we deem $\ioco$ to be too strong for a practical implementation relation, as alternating simulation is not stronger.

To ease the comparison between $\ioco$ theory and alternating refinements, we introduced two relations which may be of interest in their own right.
Input-failure refinement has a direct connection to alternating simulation, and to $\uioco$ when quiescence is added explicitly. 
Because of its straightforward observational interpretation, input-failure refinement should be suitable in conformance testing.
A next step is to formalize and implement testing algorithms for this relation.
The alternative characterization in terms of input-existential and output-universal traces may serve as a tool in formal reasoning.

More conformance and refinement relations for systems with inputs and outputs exist, e.g.,  in the context of testing theory~\cite{HeTr97,FrTr07} and I/O automata theory~\cite{ReingoldWZ92,SegalaGSL98}.
It would be interesting to include these works in our spectrum.
An additional improvement is to include internal transitions, as commonly found in interface automata and labelled transition systems.

\FloatBarrier

\bibliographystyle{plain}
\begin{shortversion}
\bibliography{bibshort}{}  
\end{shortversion}
\begin{longversion}
\bibliography{bibshort}{}  
\end{longversion}

\end{document}